\newtheorem{hypothesis}[theorem]{Hypothesis}
\crefname{maintheorem}{Main Theorem}{Main Theorems}
\title{The Price of Being Partial: Complexity of Partial Generalized Dominating Set on Bounded-Treewidth Graphs}
\titlerunning{Complexity of Partial Generalized Dominating Set on Bounded-Treewidth Graphs}
\author{Jakob Greilhuber}{CISPA Helmholtz Center for Information Security, Saarbrücken, Germany}{jakob.greilhuber@cispa.de}{https://orcid.org/0009-0001-8796-6400}{Member of the Saarbrücken Graduate School of Computer Science, Germany}
\author{D\'{a}niel Marx}{CISPA Helmholtz Center for Information Security, Saarbrücken, Germany}{marx@cispa.de}{https://orcid.org/0000-0002-5686-8314}{}
\authorrunning{J. Greilhuber and D. Marx}
\keywords{Generalized Dominating Set, Partial Domination, Treewidth, Primal Pathwidth Strong Exponential Time Hypothesis}
\let\originalleft\left
\let\originalright\right
\renewcommand{\left}{\mathopen{}\mathclose\bgroup\originalleft}
\renewcommand{\right}{\aftergroup\egroup\originalright}
\renewcommand{\epsilon}{\varepsilon}
\let\ams@underbrace=\underbrace
\def\underbrace#1_#2{\setbox0=\hbox{$\displaystyle#1$}\ams@underbrace{#1}_{\parbox[t]{\the\wd0}{#2}}}
\newcommand{\problem}[1]{\textnormal{\textsc{#1}}}
\newcommand{\uptextsf}[1]{\textnormal{\textsf{#1}}}
\newcommand{\sigStatesNonPartial}{\mathbb{S}}
\newcommand{\rhoStatesNonPartial}{\mathbb{R}}
\newcommand{\allStatesNonPartial}{\mathbb{A}}
\newcommand{\sigStatesPartial}{\mathbb{S}_{\textup{\uptextsf{p}}}}
\newcommand{\rhoStatesParital}{\mathbb{R}_{\textup{\uptextsf{p}}}}
\newcommand{\allStatesPartial}{\mathbb{A}_{\textup{\uptextsf{p}}}}
\newcommand{\sigLargestPartial}{s_\sigma^\textup{\uptextsf{p}}}
\newcommand{\rhoLargestPartial}{s_\rho^{\textup{\uptextsf{p}}}}
\newcommand{\allLargestPartial}{s_{\sigma,\rho}^{\textup{\uptextsf{p}}}}
\newcommand{\tauLargestPartial}{s_\tau^{\textup{\uptextsf{p}}}}
\newcommand{\sigLargestNonPartial}{s_\sigma}
\newcommand{\rhoLargestNonPartial}{s_\rho}
\newcommand{\allLargestNonPartial}{s_{\sigma,\rho}}
\newcommand{\tauLargestNonPartial}{s_{\tau}}
\newcommand{\portalSelectionGadgetConstant}{c_{\textup{\uptextsf{t}}}}
\newcommand{\penaltySym}{\delta}
\newcommand{\genDomSetRel}{$(\sigma,\rho)$-$\problem{DomSet}_\mathcal{R}$} \newcommand{\minGenDomSet}{$(\sigma,\rho)$-$\problem{MinDomSet}$} \newcommand{\minGenDomSetRel}{$(\sigma,\rho)$-$\problem{MinDomSet}_{\mathcal{R}_{\supseteq}}$} \newcommand{\parGenDomSetRel}{$(\sigma,\rho)$-$\problem{ParDomSet}_{\mathcal{R}}$} \newcommand{\minParGenDomSet}{$(\sigma,\rho)$-$\problem{MinParDomSet}$} \newcommand{\minParGenDomSetRel}{$(\sigma,\rho)$-$\problem{MinParDomSet}_{\mathcal{R}_\supseteq}$} 
\newcommand{\CSP}[1][2]{$#1$-\problem{CSP}-$B$}
\newcommand{\kSAT}[1][k]{$#1$-\problem{SAT}}
\newcommand{\hwRelation}{\mathtt{HW}_{=1}}
\newcommand{\hwGeqOneRelation}{\mathtt{HW}_{\geq 1}}
\newcommand{\scp}[1]{\uptextsf{scp}\left(#1\right)}
\newcommand{\acc}[1]{\uptextsf{acc}\left(#1\right)}
\newcommand{\complAcc}[1]{\overline{\uptextsf{acc}}\left(#1\right)}
\newcommand{\Oh}{O}
\newcommand{\tw}{{\uptextsf{tw}}}
\newcommand{\pw}{{\uptextsf{pw}}}
\newcommand{\range}[2][1]{[#1,#2]}
\newcommand{\nat}{\mathbb{Z}_{\geq 0}}
\newcommand{\usedStates}{A}
\newcommand{\blocksPerVertexGadget}{g}
\newcommand{\selStateVertexPerGadget}{f_{\sigma}}
\newcommand{\unselStateVertexPerGadget}{f_{\rho}}
\newcommand{\weightConstant}{w_{\sigma, \rho}}
\newcommand{\selWeight}{w_{\sigma}}
\newcommand{\unselWeight}{w_{\rho}}
\newcommand{\numOverflowSigma}{c_\sigma}
\newcommand{\numOverflowRho}{c_\rho}
\newcommand{\numOverflowAll}{c_{\sigma,\rho}}
\NewDocumentCommand{\weight}{m}{\IfBlankTF{#1}{\textnormal{\uptextsf{w}}}
  {\textnormal{\uptextsf{w}}\left(#1\right)}
}
\NewDocumentCommand{\rhoWeight}{m}{\IfBlankTF{#1}{\textnormal{\uptextsf{w}}_\rho}
  {\textnormal{\uptextsf{w}}_\rho\left(#1\right)}
}
\NewDocumentCommand{\sigWeight}{m}{\IfBlankTF{#1}{\textnormal{\uptextsf{w}}_\sigma}
  {\textnormal{\uptextsf{w}}_\sigma\left(#1\right)}
}
\newcommand{\variable}[1]{x_{#1}}
\newcommand{\constraint}[1]{C_{#1}}
\newcommand{\variableCount}{n}
\newcommand{\constraintCount}{m}
\newcommand{\bag}[1]{X_{#1}}
\newcommand{\bagCount}{t}
\newcommand{\outputBag}[1]{\hat{X}_{#1}}
\newcommand{\state}[1]{\uptextsf{st}\left(#1\right)}
\newcommand{\rightState}[1]{\compl{\uptextsf{st}}\left(#1\right)}
\newcommand{\compl}[1]{\overline{#1}}
\newcommand{\complState}[1]{\uptextsf{compl}\left(#1\right)}
\newcommand{\statevertex}[3]{u^{#1}_{#2,#3}} \newcommand{\leftBlock}[3]{L^{#1}_{#2,#3} }  \newcommand{\rightBlock}[3]{\compl{L}^{#1}_{#2,#3}}
\NewDocumentCommand{\constraintBagMapping}{m}{\IfBlankTF{#1}{\gamma}
  {\gamma(#1)}
}
\NewDocumentCommand{\cspSolution}{m}{\IfBlankTF{#1}{\delta}
  {\delta(#1)}
}
\newcommand{\initialRelation}[2]{\mathtt{I}^{#1}_{#2}} \newcommand{\consistencyRelation}[2]{\mathtt{Con}^{#1}_{#2}}
\newcommand{\finalRelation}[2]{\mathtt{F}^{#1}_{#2}}
\newcommand{\constraintRelation}[1]{\mathtt{C}_{#1}} 
\newcommand{\problembox}[3]{\begin{tcolorbox}[
            enhanced,
            colback=white,
            colframe=black,
            boxrule = 0.5pt,
            coltitle=black,
            title=#1,
            rounded corners,
            attach boxed title to top left={yshift=-10pt, xshift=6pt},
            bottom=-0.5mm,
            boxed title style={
                    interior style={fill=white},
                    frame hidden
                }
        ]
        \begin{tabularx}{12.5cm}{ r X p {0.5cm}}
            Input:     & #2        \\
            Question:  & #3
        \end{tabularx}
    \end{tcolorbox}
}
\newcommand{\simpleManagerLeftSolSize}{M_L}
\newcommand{\simpleManagerRightSolSize}{M_{\compl{L}}}
\newcommand{\simpleManagerPortSolSize}{M_U}
\newcommand{\parityConstant}{{r}}
\newcommand{\pwseth}{{PWSETH}} 
\begin{document}
\maketitle

\begin{abstract}
    For fixed sets $\sigma, \rho$ of non-negative integers,
the $(\sigma, \rho)$-domination framework introduced by Telle [\textit{Nord.\ J.\ Comput.} 1994] captures many classical graph problems. For a graph $G$, 
a $(\sigma,\rho)$-set is a set $S$ of vertices such that for every $v\in V(G)$, we have
\begin{itemize}
    \item[(1)] if $v \in S$, then $|N(v) \cap S| \in \sigma$, and
    \item[(2)] if $v \notin S$, then $|N(v) \cap S| \in \rho$.
\end{itemize}
Algorithms and lower bounds for the decision, optimization, and counting versions of finding $(\sigma,\rho)$-sets on bounded-treewidth graphs were systematically studied [van Rooij et al., \textit{ESA} 2009][Focke et al., \textit{TALG} 2025]. 
We initiate the study of a natural partial variant  \minParGenDomSet{}  of the problem, in which the constraints given by $\sigma, \rho$ need not be fulfilled for all vertices, but we want to find a set of size at most $k$ that maximizes the number of vertices that are satisfied in the sense that they satisfy (1) and (2) above.

Our goal is to understand whether \minParGenDomSet{} can be solved in the same running time as the nonpartial version, or whether it is strictly harder. 
Formally, we consider nonempty finite or simple cofinite sets $\sigma$ and $\rho$ (simple cofinite sets are of the form $\mathbb{Z}_{\geq c}$), and we try to determine the smallest constant $c_{\sigma,\rho}$ such that there is a $c_{\sigma,\rho}^{\tw}\cdot n^{O(1)}$ time algorithm for the problem if a tree decomposition of width $\tw$ is given. 
We obtain matching upper and lower bounds on $c_{\sigma,\rho}$ for every such fixed $\sigma$ and $\rho$ under the Primal Pathwidth Strong Exponential Time Hypothesis, and establish whether the partial problem is harder than the nonpartial variant. 
For some sets $\sigma$ and $\rho$, the more general \minParGenDomSet{} has the same complexity as the nonpartial special case (e.g., for \textsc{Dominating Set}), while for other choices, the partial version is significantly harder (e.g., for \textsc{Perfect Code}).

 \end{abstract}

\newpage
\tableofcontents
\newpage

\section{Introduction}
\label{sec:introduction}
Treewidth is one of the most-studied notions of capturing how tree-like a graph is, with significant algorithmic and graph-theoretic applications (see, for example, \cite[Chapter 7]{cyganParameterizedAlgorithms2015}).
Many graph problems can be solved in polynomial-time on trees, and similarly, many problems admit polynomial-time algorithms on graphs of bounded treewidth
\cite{Bodlaender1997a,bodlaenderCombinatorialOptimizationGraphs2008,bodlaenderTreewidthComputationsUpper2010,courcelleMonadicSecondOrderLogic1990,cyganFastHamiltonicityChecking2018,cyganParameterizedAlgorithms2015,fominExactExponentialAlgorithms2010,RobertsonSeymour1986,tellePracticalAlgorithmsPartial1993,vanrooijDynamicProgrammingTree2009,vanrooijFastAlgorithmsJoin2020}.
In fact, it is well-known that any graph problem expressible in (extensions of) monadic second-order logic can be solved in polynomial time on graphs of bounded treewidth \cite{arnborgEasyProblemsTreedecomposable1991,courcelleMonadicSecondOrderLogic1990}, which captures a large number of natural graph problems such as \problem{Hamiltonian Cycle}, $q$-\problem{Coloring}, and even optimization problems like \problem{Vertex Cover}.

Showing that a problem can be solved in polynomial-time on graphs of bounded treewidth is often a standard application of dynamic programming techniques or algorithmic meta-theorems.
However, improving the running time and obtaining optimal dependence on treewidth can be challenging.
Let us illustrate this using
the case of \problem{Dominating Set}, one of the most fundamental algorithmic problems studied on graphs.
On graphs of bounded treewidth, the initial algorithm by Telle and Proskurowski \cite{tellePracticalAlgorithmsPartial1993} solves the problem in time $9^\tw \cdot |G|^{\Oh(1)}$, where $\tw$ is the width of a tree decomposition that is provided with the input.
Later, the running time was improved to $4^\tw \cdot |G|^{\Oh(1)}$ \cite{alberFixedParameterAlgorithms2002a,alberImprovedTreeDecomposition2002}, and finally even $3^{\tw} \cdot |G|^{\Oh(1)}$ \cite{vanrooijDynamicProgrammingTree2009}  using the technique of Fast Subset Convolution \cite{bjorklundFourierMeetsMobius2007}.
Can we expect further improvements in the future?
More generally, for any graph problem that has single-exponential dependency on the treewidth, what is the smallest constant $c$ such that the problem can be solved in $c^{\tw} \cdot |G|^{\Oh(1)}$?

Lokshtanov, Marx, and Saurabh \cite{lokshtanovKnownAlgorithmsGraphs2018} initiated a line of research aimed at addressing questions of this form, providing the first answers about the optimal dependence on treewidth for problems such as \problem{Dominating Set}, \problem{Independent Set}, and \problem{Max Cut}.
All of these problems can be solved in time $c^{\tw} \cdot |G|^{\Oh(1)}$ for some problem-dependent constant $c$, but they show that an algorithm running in time $(c - \varepsilon)^{\tw} \cdot |G|^{\Oh(1)}$ would refute the
Strong Exponential Time Hypothesis (SETH) \cite{calabroComplexitySatisfiabilitySmall2009,impagliazzoComplexityKSAT2001}.
Many further results showing lower bounds for problems on graphs of bounded treewidth (which are actually often already lower bounds for the parameter pathwidth) under the SETH have appeared in the literature
\cite{andreevParameterizedComplexityPaired2024,bodlaenderParameterizedComplexityConflictfree2021,bojikianFinegrainedComplexityComputing2025,borradaileOptimalDynamicProgram2016,curticapeanTightConditionalLower2016,cyganFastHamiltonicityChecking2018,cyganSolvingConnectivityProblems2022,dubloisNewAlgorithmsMixed2021,dubloisUpperDominatingSet2022,egriFindingListHomomorphisms2018,esmerFundamentalProblemsBoundedtreewidth2024a,esmerListHomomorphismsDeleting2024a,esmerGeneralizedGraphPacking2025a,fockeTightComplexityBoundsLowerBound,fockeCountingListHomomorphisms2024,greilhuberResidueDominationBoundedTreewidth2025,hanakaParameterizedOrientableDeletion2020a,hartmannIndependenceDominationBoundedtreewidth2025,hegerfeldExactStructuralThresholds2022a,jansenComputingChromaticNumber2019a,katsikarelisStructurallyParameterizedDscattered2022,lampisStructuralParameterizationsTwo2024a,lampisStructuralParameterizationsInduced2025a,marxDegreesGapsTight2021,marxAntifactorFPTParameterized2025,meybodiParameterizedComplexity12020,okrasaFullComplexityClassification2020a,okrasaFinegrainedComplexityGraph2021}.
These results usually confirm that the known algorithms for the studied problem are already essentially optimal, or by the systematic study of a problem setting reveal that there are cases where new nontrivial algorithmical insights are needed to match the lower bounds.
In this line of work, it is a standard assumption that the tree decomposition is supplied together with the input, as one wants to separate the complexity of computing such a decomposition from the complexity of solving the graph problem on such a decomposition.

However, even though the SETH is the standard complexity conjecture used in the field, it is not universally accepted to be true.
Recently, Lampis \cite{lampisPrimalPathwidthSETH2025} introduced the Primal Pathwidth Strong Exponential Time Hypothesis (\pwseth{}), partly to obtain more plausible lower bounds for the parameter pathwidth (and hence also for treewidth).
The primal graph of an instance of \kSAT{} contains a vertex for each variable, and two vertices are connected by an edge if they appear together in some clause.
\begin{hypothesis}[\pwseth{} {\cite[Conjecture 1.1]{lampisPrimalPathwidthSETH2025}}]
    \label{conj:pwseth}
    For any $\varepsilon > 0$, no algorithm can solve \kSAT[3]{} in time $(2 - \varepsilon)^{\pw{}} \cdot |I|^{\Oh(1)}$ on instances $I$, even when the input is provided together with a path decomposition of the primal graph of $I$ of width \pw{}.
\end{hypothesis}
Lampis shows that the \pwseth{} is implied by the SETH, the Set Cover Conjecture \cite{cyganProblemsHardCNFSAT2016}, and the $k$-Orthogonal Vectors Assumption (see e.g. \cite{williamsFinegrainedQuestionsAlgorithms}), making lower bounds obtained under it more plausible than those obtained under SETH alone.
Additionally, Lampis proves lower bounds under the \pwseth{} for various problems that can serve as a convenient starting point for further reductions.
Even though the \pwseth{} is quite recent, there are already further publications presenting lower bounds under it.
Hartmann and Marx study distance variants of \problem{Independent Set} and \problem{Dominating Set} \cite{hartmannIndependenceDominationBoundedtreewidth2025}, Lampis and Vasilakis consider \problem{Acyclic Matching} and \problem{Induced Matching} \cite{lampisStructuralParameterizationsInduced2025a}.
Esmer and Marx \cite{esmerGeneralizedGraphPacking2025a} show tight lower bounds for clique partitioning problems.
All novel lower bounds we present in this paper are under the \pwseth{}, and hence also implied by the SETH, Set Cover Conjecture, and $k$-Orthogonal Vectors Assumption.

\subparagraph*{Domination-like Problems.}
The $(\sigma,\rho)$-dominating set framework \cite{telleComplexityDominationtypeProblems1994,tellePracticalAlgorithmsPartial1993}  generalizes a wide range of classical graph problems by specifying desired neighborhood conditions through sets $\sigma$ and $\rho$ of integers.
This flexible formulation captures many well-known problems including \problem{Dominating Set}, \problem{Independent Set}, and \problem{Perfect Code}, making it a powerful unifying abstraction for studying domination-like properties in graphs. By adjusting $\sigma$ and $\rho$, one can model diverse constraints on how vertices dominate their neighborhoods, allowing for the exploration of both established and novel combinatorial problems within a single framework.

Formally, let  $\sigma$ and $\rho$ be two fixed sets of non-negative integers.
For a graph $G$, a \emph{$(\sigma, \rho)$-set} of $G$ is a set $S \subseteq V(G)$, such that
\begin{itemize}
    \item for all $v \in S$ we have that $|N(v) \cap S| \in \sigma$, and
    \item for all $v \in V(G) \setminus S$ we have that $|N(v) \cap S| \in \rho$.
\end{itemize}
Hence, the sets $\sigma$ and $\rho$ constrain how many selected neighbors selected, respectively unselected, vertices are allowed to have.

Many classical notions can be captured by $(\sigma, \rho)$-sets.
For example, a set $S$ is a dominating set if and only if it is a $(\nat,\mathbb{Z}_{\geq 1})$-set, a $p$-dominating set (i.e., every unselected vertex has at least $p$ selected neighbors) if it is a $(\nat,\mathbb{Z}_{\geq p})$-set, a total dominating set (i.e., selected vertices also need to have a selected neighbor) if and only if it is a $(\mathbb{Z}_{\geq1}, \mathbb{Z}_{\geq 1})$-set, and an independent set if and only if it is a $(\{0\},\nat)$-set.
We refer to Telle \cite{telleComplexityDominationtypeProblems1994} for a more extensive list of properties that can be expressed as $(\sigma, \rho)$-sets.
Problems related to this domination concept have played a role in numerous publications since its inception \cite{bodlaenderFasterAlgorithmsBranch2010,bui-xuanFastDynamicProgramming2013,cattaneoParameterizedComplexityDominationType2014a,chapelleParameterizedComplexityGeneralized2010,etesamiWhenOptimalDominating2019,fominSortSearchExact2009,fominBranchRechargeExact2011,fockeTightComplexityBoundsLowerBound,fockeTightComplexityBoundsUpperBound,golovachComputationalComplexityGeneralized2007,golovachParameterizedComplexityGeneralized2012,greilhuberResidueDominationBoundedTreewidth2025,halldorssonIndependentSetsDomination2000,heggernesPartitioningGraphsGeneralized1998,jaffkeMimwidthIIIGraph2019,masarikFairVertexProblems2025,telleAlgorithmsVertexPartitioning1997,vanrooijFastAlgorithmsJoin2020,vanrooijGenericConvolutionAlgorithm2021,vanrooijDynamicProgrammingTree2009}.

In this work, we are interested in the minimization problem \minGenDomSet{}: given a graph $G$ and integer $k$ decide whether $G$ has a $(\sigma, \rho)$-set of size at most $k$.
Chappelle~\cite{chapelleParameterizedComplexityGeneralized2010} shows that the problem can be solved with single-exponential dependency on the treewidth when $\sigma, \rho$ are ultimately periodic sets, but does not explicitly give or optimize the base of the running time.
Ultimately periodic sets capture, for example, finite and cofinite sets.
As pointed out by Focke et al.~\cite{fockeTightComplexityBoundsUpperBound}, for general cofinite sets, the complexity of \minGenDomSet{} on bounded-treewidth graphs is somewhat mysterious: it depends on the power of representative sets, which is not understood very well (see also \cite{groenlandTightBoundsGraph2024,marxAntifactorFPTParameterized2025}).
Therefore, we restrict our attention to finite or \emph{simple cofinite} sets for which representative sets are not relevant.
A simple cofinite set is of the form $\mathbb{Z}_{\geq c}$ for some integer $c$.
Let us also remark that most known classical graph problems that can be expressed using the $(\sigma,\rho)$-domination framework have sets which are finite or simple cofinite.
In fact, all problems given by Telle \cite[Table I]{telleComplexityDominationtypeProblems1994} as examples for classical problems that the framework can describe have such sets.

To explain the running times of the algorithms in more detail, we first need to define some constants based on $\sigma$ and $\rho$, which correspond to the largest state a dynamic programming algorithm requires for each set.

\begin{definition}[Constants $\sigLargestNonPartial$ and $\rhoLargestNonPartial$]
    Define
    \begin{equation*}
        \sigLargestNonPartial = \begin{cases}
            \max \sigma & \text{if $\sigma$ is finite},          \\
            \min \sigma & \text{if $\sigma$ is simple cofinite},
        \end{cases}
        \text{and }
        \rhoLargestNonPartial = \begin{cases}
            \max \rho & \text{if $\rho$ is finite},          \\
            \min \rho & \text{if $\rho$ is simple cofinite}.
        \end{cases}
    \end{equation*}
    Let $\allLargestNonPartial = \max(\rhoLargestNonPartial,\sigLargestNonPartial)$.
\end{definition}

When $\sigma$ and $\rho$ are both finite or simple cofinite, van Rooij \cite{vanrooijGenericConvolutionAlgorithm2021} shows that \minGenDomSet{} can be solved in time $(\sigLargestNonPartial + \rhoLargestNonPartial + 2)^{\tw} \cdot |G|^{\Oh(1)}$, when a tree decomposition of width $\tw$ is provided together with the input.
This running time is seemingly optimal, as the base corresponds exactly to the number of states in the natural dynamic programming approach to the problem.
However, Focke et al.~\cite{fockeTightComplexityBoundsUpperBound} show that if the sets $\sigma$ and $\rho$ satisfy some additional combinatorial properties, then improved algorithms are possible.

We say that the pair $(\sigma, \rho)$ is \emph{$m$-structured} if $\sigma$ is a subset of some residue class modulo $m$, and $\rho$ is a subset of some residue class modulo $m$.
While every pair $(\sigma, \rho)$ is $1$-structured by definition, Focke et al.~\cite{fockeTightComplexityBoundsUpperBound} show that improvements over the previously best known algorithms are possible when $(\sigma, \rho)$ is $m$-structured for some $m \geq 2$.

\begin{theorem}[Due to results of {\cite{fockeTightComplexityBoundsUpperBound, vanrooijGenericConvolutionAlgorithm2021}}]
    \label{thm:non_partial_upper_bound}
    Let $\sigma, \rho$ be nonempty finite or simple cofinite sets.
    When a tree decomposition of width $\tw$ is provided together with the input, the problem \minGenDomSet{} can be solved in time
    \begin{itemize}
        \item $(\sigLargestNonPartial + \rhoLargestNonPartial + 2)^{\tw} \cdot |G|^{\Oh(1)}$ when $(\sigma, \rho)$ is not $m$-structured for any $m \geq 2$,
        \item $(\max(\sigLargestNonPartial, \rhoLargestNonPartial) + 1)^{\tw} \cdot |G|^{\Oh(1)}$ when $(\sigma, \rho)$ is $m$-structured for some $m \geq 3$, or $(\sigma, \rho)$ is $2$-structured and it is not the case that $\sigLargestNonPartial = \rhoLargestNonPartial$ is even,
        \item $(\sigLargestNonPartial + 2)^{\tw} \cdot |G|^{\Oh(1)}$ otherwise, that is when $(\sigma, \rho)$ is $2$-structured but not $m$-structured for any $m \geq 3$, and $\sigLargestNonPartial = \rhoLargestNonPartial$ is even.
    \end{itemize}
\end{theorem}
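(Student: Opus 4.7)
The plan is to recover the three cases by combining the generic convolution-based dynamic program of van Rooij with the algebraic refinements of Focke et al. First, I would instantiate the natural bag-DP: for each vertex $v$ in a bag, its state records whether $v$ is selected together with the current count of selected neighbors, capped so that only information relevant to future acceptance is retained. For simple cofinite $\sigma$ we cap counts at $\sigLargestNonPartial = \min \sigma$, giving $\sigLargestNonPartial + 1$ values; for finite $\sigma$ we cap at $\sigLargestNonPartial = \max \sigma$ and add a single overflow token, giving $\sigLargestNonPartial + 2$ values. The same analysis on the $\rho$ side yields an overall state space of size $\sigLargestNonPartial + \rhoLargestNonPartial + 2$ per bag vertex. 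Running an introduce/forget/join DP over this space, and using the generic fast subset convolution framework of \cite{vanrooijGenericConvolutionAlgorithm2021} to process join bags, yields the base $(\sigLargestNonPartial + \rhoLargestNonPartial + 2)^{\tw}$ and establishes the third bullet.

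The improved bases in the first two bullets come from restricting attention to $m$-structured pairs, following \cite{fockeTightComplexityBounds2023a}. The key observation is that if both $\sigma$ and $\rho$ live in single residue classes modulo $m$, then the DP can be executed over a collapsed state space: states from the selected and unselected sides can be identified via the modular structure, so the \emph{selected/unselected} dichotomy no longer costs an independent factor. For any $m \geq 3$ this collapse is clean and yields the first bullet's base $\bigl(\max(\sigLargestNonPartial, \rhoLargestNonPartial) + 1\bigr)^{\tw}$; when $m = 2$ the same collapse still goes through unless we are in the special configuration $\sigLargestNonPartial = \rhoLargestNonPartial$ with both even. In that exceptional sub-case only a partial collapse is possible, producing the second bullet's intermediate base $(\sigLargestNonPartial + 2)^{\tw}$.

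The main technical obstacle in executing this plan is verifying that the join step respects the algebraic collapse. For the unstructured case, van Rooij's covering-product / fast subset convolution machinery handles joins in time proportional to the state-space size without an extra multiplicative factor. For the $m$-structured cases, one must argue that the convolution carried out over the collapsed state space is still correct, which amounts to showing that the bag-level acceptance indicator factors through the chosen modular quotient; both cited papers encapsulate this in a handful of combinatorial lemmas describing how states combine at join nodes and bounding the number of relevant joint states. With those lemmas in hand, the three-case bound follows from a case analysis on the relationship between $\sigma$, $\rho$, and $m$, matching each case to the appropriate algorithm: the generic van Rooij algorithm in the unstructured regime, and the Focke et al.\ refinement in the $m$-structured regimes.
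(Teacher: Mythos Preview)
The paper does not prove this theorem at all; it is stated as background, attributed entirely to \cite{fockeTightComplexityBounds2023a,vanrooijGenericConvolutionAlgorithm2021}. So there is no ``paper's own proof'' to compare against, and your proposal is really a sketch of how those cited works obtain the result. As such a sketch it is broadly on the right track, but it contains a concrete error that is worth flagging precisely because it touches the main conceptual point of the present paper.

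Your state count for finite $\sigma$ is wrong for the \emph{nonpartial} problem. You write that for finite $\sigma$ one caps at $\max\sigma$ and adds an overflow token, giving $\sigLargestNonPartial+2$ values. That is the state set for the \emph{partial} problem. In \minGenDomSet{} there is no overflow state: once a selected vertex acquires more than $\max\sigma$ selected neighbors the partial solution is simply infeasible and is discarded, so one only needs the $\sigLargestNonPartial+1$ values $0,\dots,\max\sigma$. (The paper stresses exactly this distinction in the paragraph ``Overflow states are needed for finite sets.'') With your per-case counts, the total would be $\sigLargestNonPartial+\rhoLargestNonPartial+4$ when both sets are finite, not the $\sigLargestNonPartial+\rhoLargestNonPartial+2$ you then assert; the correct count is $\tauLargestNonPartial+1$ in both the finite and simple-cofinite cases, which does give the right total. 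Separately, your description of the $m$-structured speedup (``identify selected and unselected states'') is somewhat off: the mechanism in \cite{fockeTightComplexityBounds2023a} is a parity/modular argument that restricts which \emph{combinations} of states across a bag can occur, rather than a direct identification of $\sigma$- and $\rho$-states.
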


Observe that no cofinite set is a subset of a residue class modulo $m$ for any $m \geq 2$, hence the peculiar cases of
\cref{thm:non_partial_upper_bound} only occur when $\sigma, \rho$ are both finite sets.
The algorithms are based on a parity argument that severely restricts the number of combinations of states that can appear in the dynamic programming algorithm when $(\sigma,\rho)$ is $m$-structured.
This argument is robust in the sense that it
works equally well for various optimization and counting versions of the problem.

Are the upper bounds in \cref{thm:non_partial_upper_bound} tight for every $\sigma$ and $\rho$?
Focke et al. \cite{fockeTightComplexityBoundsLowerBound} show that these algorithms are tight under the SETH when $\sigma, \rho$ are finite sets (and $0 \notin \rho$).
Later, the same result was proven under the \pwseth{} in the thesis of Schepper \cite{schepperFasterHigherEasier}.
We complete the picture by extending the lower bound to every finite and simple cofinite set. While this result establishes tight lower bounds for some fundamental problems where no such bounds were known (e.g., \problem{$p$-Dominating Set}, \problem{Independent Dominating Set} --- see also \cite{meybodiParameterizedComplexity12020} for other open questions that we answer), it is secondary to our main contribution about the partial problem variants.
However, we need to have these results for \minGenDomSet{} at hand in order to be able to contrast them with our tight bounds for the partial problems.

\begin{restatable}{maintheorem}{mtheoremLowerBoundNonPartial}
    \label{thm:main_theorem_lower_bound_nonpartial}
    Let $\sigma, \rho$ be nonempty finite or simple cofinite sets with $0 \notin \rho$. Then, unless the \pwseth{} is false, for any $\varepsilon > 0$ and even when the input is provided together with a path decomposition of width $\pw$ there is no algorithm that can solve the problem \minGenDomSet{} in time
    \begin{itemize}
        \item $(\sigLargestNonPartial + \rhoLargestNonPartial + 2 - \varepsilon)^{\tw} \cdot |G|^{\Oh(1)}$ when $(\sigma, \rho)$ is not $m$-structured for any $m \geq 2$,
        \item $(\max(\sigLargestNonPartial, \rhoLargestNonPartial) + 1 - \varepsilon)^{\tw} \cdot |G|^{\Oh(1)}$ when $(\sigma, \rho)$ is $m$-structured for some $m \geq 3$, or $(\sigma, \rho)$ is $2$-structured and it is not the case that $\sigLargestNonPartial = \rhoLargestNonPartial$ is even,
        \item $(\sigLargestNonPartial + 2 - \varepsilon)^{\tw} \cdot |G|^{\Oh(1)}$ otherwise, that is when $(\sigma, \rho)$ is $2$-structured but not $m$-structured for any $m \geq 3$, and $\sigLargestNonPartial = \rhoLargestNonPartial$ is even.
    \end{itemize}
\end{restatable}

\subparagraph*{Partial Optimization Problems.}
For a minimization problem where the goal is to find a solution of at most a certain size that satisfies every constraint, we can consider the natural relaxation where the goal is no longer to satisfy every constraint, but rather a specified portion of it. This relaxation introduces new algorithmic and combinatorial challenges.
Various partial optimization problems have been studied both from the perspective of exact and parameterized algorithms, as well as approximation and heuristics
\cite{bera2014approximation,bshoutyMassagingLinearProgramming1998,charikarAlgorithmsFacilityLocation2001,chekuriAlgorithmsCoveringMultiple2022a,fominSubexponentialAlgorithmsPartial2011a,halperinImprovedApproximationAlgorithms2002,inamdarPartialCoveringGeometric2018,mccutchenStreamingAlgorithmsKCenter2008,slavikImprovedPerformanceGreedy1997,watson-gandyHeuristicProceduresMpartial1982}.
For example, for \problem{Dominating Set} the partial problem means looking for a set of size at most $k$ that dominates at least $\ell$ vertices, or for \problem{Vertex Cover} finding a set of vertices of size at most $k$ that covers at least $\ell$ edges.
These problems known as \problem{Partial Dominating Set} and \problem{Partial Vertex Cover} have received considerable attention \cite{aminiImplicitBranchingParameterized2011,blaserComputingSmallPartial2003,caiParameterizedComplexityCardinality2008,caskurluPartialVertexCover2014,gandhiApproximationAlgorithmsPartial2004,golovachParameterizedComplexityDomination2008,guoParameterizedComplexityGeneralized2005,ishiiSubexponentialFixedparameterAlgorithms2016,kneisPartialVsComplete2007a,koutisLIMITSApplicationsGroup2016,manurangsiNoteMaxKVertex2019}.

Interestingly, the complexity of partial problems and their nonpartial variant can differ significantly.
For example, for the parameter solution size, the \problem{Vertex Cover} problem is one of the quintessential fixed-parameter tractable problems, but \problem{Partial Vertex Cover} is $W[1]$-hard \cite{guoParameterizedComplexityGeneralized2005}, implying that we cannot expect it to be fixed-parameter tractable.
The problem \problem{Partial Dominating Set} parameterized by solution size is easily seen to be $W[2]$-hard, as this already holds for \problem{Dominating Set}.
On the other hand, the input value $\ell$ itself can be seen as a handle towards tractability, and both \problem{Partial Vertex Cover} and \problem{Partial Dominating Set} are fixed-parameter-tractable for parameter $\ell$ \cite{blaserComputingSmallPartial2003,kneisPartialVsComplete2007a}.
In fact, Bl\"aser \cite{blaserComputingSmallPartial2003} shows that even the more general \problem{Partial Set Cover} admits such an algorithm.

\subparagraph*{Our Contribution.}

In this paper, we initiate the study of the natural partial variant of the \minGenDomSet{} problem on graphs of bounded treewidth.
We begin by defining the appropriate notion of domination, which we refer to as satisfying vertices.

\begin{definition}
    Let $\sigma$ and $\rho$ be two sets of non-negative integers.
    Let $G$ be a graph and $S \subseteq V(G)$.
    A vertex $v \in V(G)$ is \emph{satisfied} (by $S$ relative to $(\sigma,\rho)$) if:
    \begin{enumerate}
        \item when $v \in S$, then $|N(v) \cap S| \in \sigma$,
        \item when $v \notin S$, then $|N(v) \cap S| \in \rho$.
    \end{enumerate}
    A vertex $v \in V(G)$ that is not satisfied is called \emph{violated} (by $S$ relative to $(\sigma,\rho)$).
\end{definition}

Next, we formally introduce the partial problem variant \minParGenDomSet{}.

\problembox{\minParGenDomSet{}}{Graph $G$, integers $k$ and $\ell$}{Is there a set $S \subseteq V(G)$ such that $S$ has size at most $k$ and at most $\ell$ vertices are violated by $S$ relative to $(\sigma,\rho)$?}

The problem is equivalent to the problem of finding a set $S$ of size at most $k$ that satisfies \emph{at least} $\ell$ vertices.
We study \minParGenDomSet{} on graphs of bounded treewidth for sets $\sigma$, $\rho$ which are nonempty finite sets, or simple cofinite sets.
Observe that the problem generalizes \minGenDomSet{}, in which one looks for a set that satisfies all vertices.
So, \minParGenDomSet{} is at least as hard as \minGenDomSet{}.
The main question we want to answer is whether this hardness is strict, that is, whether there is a price to pay if we want to solve the more general problem.

\begin{quote}
    \centering
    \emph{For a given $\sigma$ and $\rho$, is the partial problem \minParGenDomSet{} strictly harder on bounded-treewidth graphs than the corresponding \minGenDomSet{}  problem?}
\end{quote}

We answer this question by a complete characterization of the complexity of \minParGenDomSet{} for every nonempty $\sigma$ and $\rho$ that are finite or simple cofinite:  we provide an algorithm and show that it is tight under the \pwseth{}.
Hence, with this work, there is a complete characterization of the complexity of the partial and the nonpartial minimization problems with nonempty finite and simple cofinite sets.
In particular, we provide the first tight lower bounds for the nonpartial and partial variants of \problem{Independent Dominating Set}, $p$-\problem{Dominating Set}, \problem{Perfect Dominating Set}, and \problem{Total Dominating Set}.

Moving from \minGenDomSet{} to \minParGenDomSet{} does not always result in a harder problem.
Let us illustrate this for \problem{Partial Dominating Set}.
We can observe  that, similarly to \problem{Dominating Set}, a dynamic programming algorithm on a tree decomposition needs to consider three states for a vertex $v$: (1) $v$ is unselected with no selected neighbor so far, (2) $v$ is unselected with a selected neighbor, and (3) $v$ is selected (see also \cite{ishiiSubexponentialFixedparameterAlgorithms2016}).
Furthermore, Fast Subset Convolution can be performed similarly to \problem{Dominating Set} \cite{vanrooijDynamicProgrammingTree2009}, resulting in running time $3^\tw\cdot |G|^{\Oh(1)}$ \cite{liuwdominatingSetProblem2021} (earlier, a $4^\tw\cdot |G|^{\Oh(1)}$ algorithm was presented by Roayaei and Razzazi \cite{roayaeiFPTalgorithmModifyingGraph2016a}).
The observation about the number of states can be generalized to the $p$-\problem{Dominating Set} problem, which is \minGenDomSet{} with $\sigma = \nat$ and $\rho = \mathbb{Z}_{\geq p}$.
Both for the partial and nonpartial variant, we need to keep track of $p+2$ states: vertex $v$ is selected, vertex $v$ is unselected with $i=0,\dots,p-1$ neighbors, or vertex $v$ is unselected with at least $p$ neighbors.
With these states it is possible to solve the problem in time $(p + 2)^{\tw} \cdot |G|^{\Oh(1)}$ using fast convolution algorithms.
Note that the important observation that the number of states for $p$-\problem{Dominating Set} is the same in the partial and nonpartial versions was already made by Liu and Lu~\cite{liuwdominatingSetProblem2021}, even though the running time they obtain is worse than  $(p + 2)^{\tw} \cdot |G|^{\Oh(1)}$ due to inefficient handling of join nodes.

The situation changes when at least one of $\sigma$ or $\rho$ is a finite set.
If, say, $\sigma$ is finite, then in the nonpartial problem a selected vertex $v$ can have $ \sigLargestNonPartial+1=\max \sigma+1$ possible states: vertex $v$ can have $0,\dots,\max \sigma$ selected neighbors.
On the other hand, in the partial problem, we need an additional ``overflow'' state to express the case when  the vertex has already at least $\max \sigma+1$ selected neighbors, making it violated (and no amount of further selected neighbors can make it satisfied).
Thus, it seems that whenever $\sigma$ or $\rho$ is finite, we need one extra state and have a corresponding increase in the running time.
We confirm this intuition.

Let us now formally state our main results.
To describe the running time of our algorithm, we need to define some constants based on $\sigma$ and $\rho$, which again have a direct correspondence to the largest states required for $\sigma, \rho$ in the dynamic programming algorithm. The superscript $\sf p$ refers to ``partial'', in order to distinguish the constants from  $\sigLargestNonPartial$ and $\rhoLargestNonPartial$.

\begin{definition}[Constants $\sigLargestPartial$ and $\rhoLargestPartial$]
    Define
    \begin{equation*}
        \rhoLargestPartial = \begin{cases}
            \max \rho + 1 & \text{if $\rho$ is finite},          \\
            \min \rho     & \text{if $\rho$ is simple cofinite},
        \end{cases}
        \text{and }
        \sigLargestPartial = \begin{cases}
            \max \sigma + 1 & \text{if $\sigma$ is finite},          \\
            \min \sigma     & \text{if $\sigma$ is simple cofinite}.
        \end{cases}
    \end{equation*}
    Let $\allLargestPartial = \max(\rhoLargestPartial,\sigLargestPartial)$.
\end{definition}

Our first main result about the partial problems provides an algorithm with a running time depending on these constants that can solve \minParGenDomSet{}.
\begin{restatable}{maintheorem}{mtheoremUpperBoundPartial}
    \label{thm:main_theorem_algorithm}
    Let $\sigma, \rho$ be fixed nonempty finite or simple cofinite sets.
    Then, there is an algorithm that can, when given a graph $G$ and a tree decomposition of $G$ of width $\tw$ as input, decide, for every $k, \ell \in \range[0]{|V(G)|}$ at the same time, whether there exists a set $S \subseteq V(G)$ of size exactly $k$ violating exactly $\ell$ vertices relative to $(\sigma,\rho)$, in time $(\sigLargestPartial + \rhoLargestPartial + 2)^\tw \cdot |G|^{\Oh(1)}$.
\end{restatable}

We provide the proof of \cref{thm:main_theorem_algorithm} in \cref{sec:algorithms}.
Note that the algorithm is a fairly standard dynamic program utilizing fast convolution algorithms.
We then show that under the \pwseth{} these algorithms cannot be improved, unless $0 \in \rho$, in which case the empty set is a trivial optimal solution to the problem.
\begin{restatable}{maintheorem}{mtheoremLowerBoundPartial}
    \label{thm:main_theorem_lower_bound_partial_version}
    Let $\sigma, \rho$ be nonempty finite or simple cofinite sets with $0 \notin \rho$. Then, unless the \pwseth{} is false, there is no algorithm that can solve the problem \minParGenDomSet{} in time $(\sigLargestPartial + \rhoLargestPartial + 2 - \varepsilon)^\pw \cdot |G|^{\Oh(1)}$ for any $\varepsilon > 0$, even when a path decomposition of width $\pw$ is provided together with the input.
\end{restatable}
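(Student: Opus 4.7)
The plan is to give a reduction from a suitable constraint satisfaction problem with alphabet size $q := \sigLargestPartial + \rhoLargestPartial + 2$ to \minParGenDomSet{} that blows up pathwidth by only an additive constant. By Lampis' starting-point PWSETH-hardness results (in the spirit of his $q$-ary \CSP{} lower bounds that refine PWSETH), it suffices to reduce to \minParGenDomSet{} in such a way that a hypothetical $(q-\varepsilon)^{\pw}$-time algorithm for \minParGenDomSet{} would yield an algorithm of the same form for the intermediate CSP, contradicting PWSETH.

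\textbf{Gadget construction.} For each CSP variable, I would introduce a vertex gadget consisting of $\blocksPerVertexGadget$ identical blocks that share a set of portal vertices; each block has a ``$\sigma$-side'' and a ``$\rho$-side'' of sizes $\sigLargestPartial + 1$ and $\rhoLargestPartial + 1$, respectively. In each block, exactly one of the $\sigLargestPartial + \rhoLargestPartial + 2$ possible configurations is intended to be ``active'', encoding whether the variable value corresponds to a selected portal admitting $i \in \{0, \ldots, \sigLargestPartial\}$ selected neighbors or to an unselected portal admitting $j \in \{0, \ldots, \rhoLargestPartial\}$ selected neighbors. The extra states (compared to the non-partial setting) present in the partial variant are the \emph{overflow states}, which appear precisely when $\sigma$ or $\rho$ is finite: in that case a vertex with strictly more than $\sigLargestNonPartial$ (respectively $\rhoLargestNonPartial$) selected neighbors is necessarily violated, and our construction pays for this violation through the budget $\ell$. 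Tight accounting of $k$ (selected vertices) and $\ell$ (violations) forces all $\blocksPerVertexGadget$ blocks of a single gadget to encode the same state, so the gadget cleanly represents one variable value. Constraint gadgets attached to the portals then realize arbitrary relations of the CSP by producing exactly a prescribed number of violations when the attached portals encode a satisfying tuple and strictly more otherwise; the hypothesis $0 \notin \rho$ is critical so that unselected portals with no selected neighbor cannot be absorbed ``for free'' by the violation budget.

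\textbf{Pathwidth bound and main obstacle.} Starting from a path decomposition of width $\pw$ of the primal graph of the CSP, a path decomposition of the constructed graph $G$ is built by placing, for each primal bag, the portals of the corresponding vertex gadgets together with the interface of the constraint gadget currently being processed; since each vertex gadget contributes only $\Oh(1)$ portals, the pathwidth of $G$ is at most $\pw + \Oh(1)$, which yields the desired refutation. I expect the main obstacle to be the uniform design of the block and constraint gadgets so that they simultaneously realize all $q$ states (including the overflow states on both sides when $\sigma$ or $\rho$ is finite), work for every combination of $\sigma, \rho$ being finite or simple cofinite, preserve the violation budget $\ell$ tightly up to a constant additive slack, and let constraint gadgets implement arbitrary CSP relations without spuriously enabling extra states; matching the $+2$ summand in $\sigLargestPartial + \rhoLargestPartial + 2$ ultimately relies on exploiting both the ``zero-selected-neighbors'' slot on the $\rho$-side (licensed by $0 \notin \rho$) and the overflow slot(s) on the finite side(s).
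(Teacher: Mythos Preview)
Your outline has the right high-level shape (reduce from a $q$-ary CSP with $q=\sigLargestPartial+\rhoLargestPartial+2$, encode variable values by the states in $\allStatesPartial$, realize constraints by local gadgets), but it glosses over precisely the obstacles on which the paper spends essentially all of its effort, and as written the plan would not go through.

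\textbf{The budget/weight obstruction.} You propose to encode each CSP variable by a gadget whose $q$ possible ``active configurations'' correspond to the $q$ states $\sigma_0,\dots,\sigma_{\sigLargestPartial},\rho_0,\dots,\rho_{\rhoLargestPartial}$. But these states have \emph{different} costs: realizing $\sigma_c$ versus $\rho_c$ differs in whether the portal itself is selected, the subscript $c$ determines how many auxiliary neighbors must be selected, and whether the state is an overflow state determines whether it contributes a violation. Hence the contribution of a single variable gadget to both $k$ and $\ell$ depends on which value the CSP variable takes, so no single pair $(k,\ell)$ can be fixed in advance. Your sentence ``tight accounting of $k$ and $\ell$ forces all $\blocksPerVertexGadget$ blocks to encode the same state'' does not address this; replicating the same state $\blocksPerVertexGadget$ times only multiplies the imbalance. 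The paper resolves this by \emph{grouping}: it encodes each CSP variable not by one state but by a length-$g$ vector from a carefully chosen set $A\subseteq\allStatesPartial^g$ in which every codeword has the same number of $\sigma$-states, the same number of overflow states of each type, and the same $\sigma$- and $\rho$-weights (Definitions~\ref{def:partial_state_complements}--\ref{def:partial_weights} and the partition $W_g$ in the proof of Lemma~\ref{thm:minimization_high_lvl}). This makes the $(k,\ell)$ contribution assignment-independent, at the price of a \emph{multiplicative} pathwidth blowup $\pw\mapsto g\cdot\pw+\Oh(1)$ rather than the additive $\pw+\Oh(1)$ you claim; the calculation $(|\allStatesPartial|-\varepsilon)^g\le |A|-\varepsilon'$ in Lemma~\ref{thm:choosing_alphabet_size} is what rescues the lower bound.

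\textbf{Realizing constraints without relations.} You assert that ``constraint gadgets \dots{} realize arbitrary relations of the CSP'' purely as graph gadgets, but this is exactly where the paper needs the two-stage strategy via the intermediate problems \parGenDomSetRel{} and \minParGenDomSetRel{}, and where the case distinction on $\rho$ becomes essential. When $\rho$ is finite, the paper builds \emph{robust} $\hwRelation$-realizations with arbitrarily large penalty $\delta$ (Definition~\ref{def:robust_realization}, Lemma~\ref{thm:robust_relations_exist}); when $\rho$ is simple cofinite, such gadgets provably cannot exist (selecting all of $U$ never hurts), and the paper instead exploits that the intermediate instance is \emph{good} and uses \emph{fragile} $\hwGeqOneRelation$-realizations (Definition~\ref{def:fragile_realization}, Lemma~\ref{thm:partial_rho_cofinite_replace_relations}). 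Your proposal treats all four finite/cofinite combinations uniformly and gives no mechanism for either of these realizations, so the step ``constraint gadgets implement arbitrary CSP relations'' is unsupported.
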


When both $\sigma$ and $\rho$ are simple cofinite, for example for the $p$-\problem{Dominating Set} problem, we see that the algorithms for the partial and nonpartial variant have the same running time.
Hence, in this case the partial and nonpartial variant are equally hard.

On the other hand, comparing \cref{thm:non_partial_upper_bound} and \cref{thm:main_theorem_lower_bound_partial_version}, we can see that when at least one of $\sigma$, $\rho$ is finite, then the running time for the partial version is strictly higher than the running time for the nonpartial version.
We provide a comparison of the obtained running times for a variety of classical problems in \cref{table:running_time_comp}.

\begin{table}
    \centering
    \begin{tabular}{c|c|c|c|c}
        Problem Name                                           & $\sigma$              & $\rho$                & Nonpartial    & Partial       \\ \hline \hline
        \problem{Dominating Set}                               & $\mathbb{Z}_{\geq 0}$ & $\mathbb{Z}_{\geq 1}$ & $3^\tw$       & $3^\tw$       \\
        $p$-\problem{Dominating Set} ($p \geq 1$)              & $\mathbb{Z}_{\geq 0}$ & $\mathbb{Z}_{\geq p}$ & $(p+2)^\tw$   & $(p+2)^\tw$   \\
        \problem{Perfect Code}                                 & $\{0\}$               & $\{1\}$               & $2^{\tw}$     & $5^{\tw}$     \\
        \problem{Perfect Dominating Set}                       & $\mathbb{Z}_{\geq 0}$ & $\{1\}$               & $3^\tw$       & $4^\tw$       \\
        \problem{Total Dominating Set}                         & $\mathbb{Z}_{\geq 1}$ & $\mathbb{Z}_{\geq 1}$ & $4^\tw$       & $4^\tw$       \\
        \problem{Total Perfect Dominating Set}                 & $\{1\}$               & $\{1\}$               & $2^\tw$       & $6^\tw$       \\
        \problem{Weakly Perfect Dominating Set}                & $\{0,1\}$             & $\{1\}$               & $4^\tw$       & $6^\tw$       \\
        \problem{Independent Dominating Set}                   & $\{0\}$               & $\mathbb{Z}_{\geq 1}$ & $3^\tw$       & $4^\tw$       \\

        \problem{Dominating $p$-Regular subgraph} ($p \geq 0$) & $\{p\}$               & $\mathbb{Z}_{\geq 1}$ & $(p + 3)^\tw$ & $(p + 4)^\tw$ \\ \hline
    \end{tabular}
    \caption{Comparison of the tight running times for the nonpartial and partial minimization problems for some classical problems.
        Polynomial factors in the input size are hidden.
        The bounds for the nonpartial variant stem from \cref{thm:non_partial_upper_bound,thm:main_theorem_lower_bound_nonpartial}, the bounds for the partial variant from \cref{thm:main_theorem_algorithm,thm:main_theorem_lower_bound_partial_version}. Most problems of this table are taken from \cite[Table I]{telleComplexityDominationtypeProblems1994}.}
    \label{table:running_time_comp}
\end{table}

A particularly striking example is the case of $\sigma = \{1\}, \rho = \{1\}$ (also called \problem{Total Perfect Dominating Set}), where the problem \minGenDomSet{} can be solved in time $2^{\tw} \cdot |G|^{\Oh(1)}$ \cite{fockeTightComplexityBoundsUpperBound}, but \minParGenDomSet{} requires time $6^{\tw} \cdot |G|^{\Oh(1)}$.
Hence, one sometimes has to pay a significant price for considering the partial problem variant instead of the nonpartial one.
We identify two main reasons for these differing running times.
\begin{itemize}
    \item \textbf{Overflow states are needed for finite sets.}
          As discussed above, if a set $\tau \in \{\sigma, \rho\}$ is simple cofinite, there is no difference in the state set needed for the partial and nonpartial variants (we have $\tauLargestNonPartial = \tauLargestPartial$).
          On the other hand, if $\tau$ is finite, then the size of the state set for the partial variant is larger than for the nonpartial variant (we have $\tauLargestPartial > \tauLargestNonPartial$).
          This indicates that we have to pay an additional price for each finite set when moving from the nonpartial to the partial variant.
    \item  \textbf{Structured sets are irrelevant.} For $m$-structured sets $(\sigma, \rho)$ (with $m \geq 2$) the algorithm by Focke et al. \cite{fockeTightComplexityBoundsUpperBound} is based on a parity argument  that can significantly reduce the number of subproblems that need to be considered.
          However, if some vertices of the graph are violated in the solution, then this combinatorial argument no longer holds, and we have to fall back on the more naive algorithm.
          Hence, the running time difference is especially large when moving to the partial problem from $m$-structured sets $(\sigma, \rho)$.
\end{itemize}

\cref{thm:main_theorem_lower_bound_partial_version} is proven in \cref{sec:intermediate_lower_bound,sec:realizing_relations} in technical reduction chains, and these proofs represent the main bulk of the work done for this paper.
In \cref{sec:technical_overview}, we present an overview of the used techniques.
The proof of \cref{thm:main_theorem_lower_bound_nonpartial} is given in \cref{sec:nonpartial}.

\section{Technical Overview}
\label{sec:technical_overview}
We will now give an overview of the technical ideas used to obtain our results.

\subsection{The Algorithm.}

For our algorithmic result, we utilize standard dynamic programming procedures, similar to those given in \cite{fockeTightComplexityBoundsUpperBound,greilhuberResidueDominationBoundedTreewidth2025} for other variants of $(\sigma, \rho)$-dominating set.
Using typical notation for these types of problems, we introduce state sets, which will be used for both the algorithm and the lower bound.

\begin{restatable}[State Sets for the Partial Problem]{definition}{defStates}
    \label{def:states_partial}
    Let $\sigma, \rho$ be nonempty finite or simple cofinite sets.
    We define the set of $\rho$-states for the partial problem as $\rhoStatesParital = \{\rho_0,\dots,\rho_{\rhoLargestPartial}\}$,
    and the set of $\sigma$-states for the partial problem as $\sigStatesPartial = \{\sigma_0,\dots,\sigma_{\sigLargestPartial}\}$.
    We denote the set of all states for the partial problem as $\allStatesPartial = \sigStatesPartial \cup \rhoStatesParital$.

    When $\tau$ is a finite set, the state $\tau_{\tauLargestPartial}$ is called an overflow-state.
\end{restatable}
In a partial solution, each vertex is assigned one of these states.
If a vertex is selected it has a $\sigma$-state, if it is unselected it has a $\rho$-state.
The subscript of the state describes the number of selected neighbors the vertex has.
Observe that we have $|\allStatesPartial| = \sigLargestPartial + \rhoLargestPartial + 2$, which forms the base of the running time of \cref{thm:main_theorem_algorithm}.

We proceed by dynamic programming on a nice tree decomposition, for a node $t$, $X_t$ is the bag of $t$, and $V_t$ is the set of vertices introduced in and below node $t$.
For a node $t$ and a vector $\vec{u} \in \allStatesPartial^{X_t}$, we say that $\vec{u}$ has a matching partial solution of size $k$ violating $\ell$ vertices if there is a set $S \subseteq V_t$ such that:
\begin{itemize}
    \item $|S \setminus X_t| = k$, and
    \item $S$ violates exactly $\ell$ vertices of $V_t \setminus X_t$, and
    \item for any $v \in X_t$, $\vec{u}[v]$ is a $\sigma$-state if and only if $v \in S$, and when $\vec{u}[v] = \tau_c$, we have that $c = \min(x, \tauLargestPartial)$, where $x$ is the number of selected neighbors that vertex $v$ has in $V_t \setminus X_t$.
\end{itemize}

For each node $t$, and each possible solution size and violation number, we have a table entry $T_t[k, \ell]$, whose task is to store the set of all vectors of $\allStatesPartial^{X_t}$ that have a matching partial solution for $t$ of size $k$ violating $\ell$ vertices.
Using fast convolution algorithms by van Rooij \cite{vanrooijGenericConvolutionAlgorithm2021}, one can compute these table entries $T_t[k,\ell]$ correctly in time $|\allStatesPartial|^{\tw} \cdot |G|^{\Oh(1)}$, which yields \cref{thm:main_theorem_algorithm}.

Compared to the known algorithms for the nonpartial problem we need to keep track of the number of violated vertices.
While this requires careful bookkeeping, we feel that the main algorithmic contribution we make is establishing that an additional state is needed exactly for each finite set.

\subsection{The Lower Bound}

Next we describe the main framework of the lower bound, which is the far more technical part of our contribution.

\subsubsection{The Three-Step Framework.}
On a very high level, our strategy follows a fairly long line of work in the domain of tight lower bound for dynamic programming algorithms \cite{curticapeanTightConditionalLower2016,fockeTightComplexityBoundsLowerBound,greilhuberResidueDominationBoundedTreewidth2025,marxDegreesGapsTight2021,marxAntifactorFPTParameterized2025}.
In these results, a certain design pattern has emerged, which we also follow.
Typically, the proof is by reduction from \kSAT{}, or an appropriate \problem{CSP} variant, to the target problem. If our goal is to rule out the existence of $(c-\epsilon)^\pw\cdot |V(G)|^{O(1)}$ time algorithms, then it is convenient to start the reduction from a \problem{CSP} problem with domain size $c$, for which Lampis \cite{lampisFinerTightBounds2020} provides a lower bound of the correct form under SETH.

In these reductions, the pathwidth $\pw$ of the output instance needs to be roughly the number $n$ of variables of the input \problem{CSP} instance.
Typically, the construction starts by introducing a $n\times m$ grid of \emph{state-vertices,} where $m$ can be arbitrarily large. Any two consecutive columns of state vertices are connected by low-pathwidth gadgets, ensuring that the constructed graph has pathwidth roughly $n$.

Given a solution of the constructed instance a state is assigned to each state vertex $v$ based on the role that $v$ plays in the solution.
For the problem \minParGenDomSet{} the state is
determined by whether $v$ is in the solution and by the number of selected neighbors it has in the gadgets connecting the column of $v$ to the \emph{previous} column.
Thus, the states of the $n$ vertices in a column can be interpreted as a variable assignment of the source \problem{CSP} instance with $n$ variables and domain size $|\allStatesPartial|$.
To obtain a correct reduction, the gadgets connecting the columns need to enforce two properties:
\begin{enumerate}
    \item Each column must express exactly the same assignment.
    \item The assignment satisfies the input instance.
\end{enumerate}
To ensure the first property, the gadgets between columns $i$ and $i+1$ need to ``copy'' the states of the vertices in column $i$ to column $i+1$.
Furthermore, given that the number $m$ of columns can be made arbitrary large without changing the pathwidth, it is convenient to require that the gadgets between columns $i$ and $i+1$ ensure that the $i$-th clause of the source \problem{CSP} instance is satisfied.
That is, the gadgets between two columns are responsible for copying/propagating states across the construction and for ensuring that a single clause is satisfied.

For every fixed $\sigma$ and $\rho$, we need to be able to construct various gadgets
that are responsible for the tasks described above.
However, when designing gadgets of different types, at some point it is easier to observe and prove that we are able to construct \emph{every} gadget (in a well-defined sense).
Formally, we introduce a new, more general problem where the input contains a set of ``relations'': additional constraints on certain subsets of vertices.
By exploiting the additional power of these relations the lower bound is proved for this more general problem.
Then, the relations are systematically replaced by gadgets.

Instead of having one large reduction, we can use a framework that splits the reduction into three steps, each having a clear individual goal.
Let $\mathcal{P}$ be the problem one wants to show a lower bound for.
\begin{enumerate}
    \item A new problem $\mathcal{P}'$ is defined.
          Problem $\mathcal{P}'$ is a variant of $\mathcal{P}$, in which the input instance contains  list of \emph{relations} that introduce additional constraints a solution must fulfill.
          For example, in the $(\sigma,\rho)$-domination problem, a relation can constrain a subset $X$ of vertices, restricting which combinations of vertices of $X$ can be selected in a solution.
          We refer to problem $\mathcal{P}'$ as the \emph{intermediate problem} or the \emph{problem with relations}.
          A lower bound is shown for problem $\mathcal{P}'$ using a grid-like construction.
    \item In a second step, the relations of the problem $\mathcal{P}'$, which can be complex, are replaced by a set of much simpler relations.
          For example, in many cases, a single type of relation ensuring that exactly one vertex of the constrained set is selected is sufficient to simulate arbitrary relations.
    \item Finally, the simple relations are replaced by gadgets that simulate their behavior in a reduction to problem $\mathcal{P}$. Note that this last step often requires case distinctions based on the concrete problem one considers, as this is the point where gadgets need to be crafted.
\end{enumerate}

This framework was used by Curticapean and Marx\cite{curticapeanTightConditionalLower2016} to show a lower bound for counting the number of perfect matchings of a graph.
Later, the framework was used for edge-selection problems \cite{marxDegreesGapsTight2021,marxAntifactorFPTParameterized2025}.
Finally, Focke et al. \cite{fockeTightComplexityBoundsLowerBound} and Greilhuber et al. \cite{greilhuberResidueDominationBoundedTreewidth2025} utilize this approach for $(\sigma,\rho)$-domination problems.

\subsubsection{Challenges}
In this work, we are using the same framework for our problem. We can profit from the fact that the first two steps are already somewhat well-understood for $(\sigma,\rho)$-domination problems.
However, we still face multiple challenges which make the application of the framework far from straightforward, one of them requires a novel spin on step 1.

\subparagraph*{Challenges in step 1.}
Previous applications of the framework for $(\sigma,\rho)$-domination problems use the same definition for the intermediate problem.
In this definition, a relation is a pair $(\scp{R},\acc{R})$, where $\scp{R} \subseteq V(G)$ is the scope of the relation, and $\acc{R} \subseteq 2^{\scp{R}}$ the set of accepted selections.
A solution $S$ to the problem with relations must not only be a $(\sigma,\rho)$-set, but also for each relation $R$ it must be the case that $S \cap \scp{R} \in \acc{R}$.
In other words, the relations can explicitly forbid certain selections from being allowed in a solution.
A path decomposition of the intermediate problem is a path decomposition of the graph such that additionally for each relation $R$ there exists a bag containing $\scp{R}$.
Moreover, the size of the scope of each relation that is used in the reductions should be bounded by a constant.
These two properties ensure that steps 2 and 3 do not increase the pathwidth too much.

The first major challenge we face is about the difference between the partial and nonpartial problem.
The constructions used for step 1 in the literature \cite{fockeTightComplexityBoundsLowerBound,greilhuberResidueDominationBoundedTreewidth2025} are for the nonpartial problem, and thus a solution cannot violate any vertices.
When dealing with the partial problem, we have an increase in the running times due to the fact that overflow-states are required for finite sets.
We need to ensure that some of our state vertices can have overflow-states in a solution,
otherwise we would not get a higher base in the lower bound than for the nonpartial problem.
Hence, we must carefully create a construction in which some state vertices are violated and correctly propagate their state.
In particular, a priori it is not clear why a violated state vertex should have its state propagate correctly across the construction.

The second major problem is that the intermediate problem defined above is too  powerful when $\rho$ is a simple cofinite set.
It turns out that when $\rho$ is simple cofinite, there are relations which cannot be simulated by gadgets in a straightforward way. Therefore, we need to work with a more restricted problem where $\acc{R}$ is assumed to be superset-closed for each relation $R$. 
We can prove the required lower bound even for this significantly less expressive problem, and the superset-closed relations can be simulated with appropriate gadgets.

\subparagraph*{Challenges in step 2.}
A $\hwRelation$-relation (hamming weight one relation) accepts a selection from its scope if and only if it has size exactly one.  Focke et al. \cite{fockeTightComplexityBoundsLowerBound} showed that (as long as $\rho \not = \{0\}$) there is a reduction from the \minGenDomSet{} with arbitrary relations with bounded arity to the problem using only $\hwRelation$-relations of bounded arity.
While the reduction also works for the partial problem \minParGenDomSet{}, it would not be helpful when $\rho$ is simple cofinite, since we cannot replace the $\hwRelation$ relations by gadgets.
We instead show that superset-closed relations can be replaced by $\hwGeqOneRelation$-relations, where
a $\hwGeqOneRelation$ relation accepts a selection from its scope if and only if it is nonempty.

\subparagraph*{Challenges in step 3.}
In step 3, we need to replace the simple relations ($\hwRelation$ or  $\hwGeqOneRelation$)  we obtain after step 2 with gadgets.
The construction of these gadgets is very specific to the \minParGenDomSet{} problem and highly depends on the properties of the sets $\sigma$ and $\rho$.

The gadgets given in the literature \cite{fockeTightComplexityBoundsLowerBound,greilhuberResidueDominationBoundedTreewidth2025} cannot directly be used here, since they are for the nonpartial variant.
Hence, the first major challenge we face in step 3 is that we must create gadgets that can handle violations.
In the partial setting, a gadget cannot directly force a vertex set to behave a certain way, as the vertices of the gadget could be violated.
Thus, the first step is to define gadgets such that if the behavior is not as desired, then the gadget has to pay a ``penalty'' either in terms of violated vertices or larger solution size.
The correct form of the definition is not obvious: in fact, we need to use two greatly different definitions for the two cases when $\rho$ is finite or simple cofinite (``robust'' and ``fragile'' realizations, see \cref{def:robust_realization,def:fragile_realization}).
In contrast to the gadgets given in \cite{fockeTightComplexityBoundsLowerBound,greilhuberResidueDominationBoundedTreewidth2025}, our gadgets do not have constant size.
They can be large, but their pathwidth is small.

\subsubsection{The Reduction Chains.}

Let us now present the reduction chain of our lower bound in more detail.
Concretely, we will present a ``plan of attack'' which illustrates which definitions and lemmas we use to handle steps 1-3 of the relational framework.
The reduction chain is different depending on whether $\rho$ is finite or simple cofinite.
We focus on the partial problem variant in this overview, ignoring the result in \cref{thm:main_theorem_lower_bound_nonpartial}.

\subparagraph*{Set $\rho$ is finite.}

We begin by covering the case when $\rho$ is finite, as our reduction chain is closer to the known approaches in this scenario.
We explain why this approach fails for simple cofinite $\rho$ immediately after this section about finite $\rho$.

\subparagraph*{Step 1.}
For our intermediate problem, we use the natural partial variant of the problem with relations.
A relation $R$ is a pair $(\scp{R},\acc{R})$ where $\scp{R}$ is a vertex set and $\acc{R} \subseteq 2^{\scp{R}}$.

\problembox{\parGenDomSetRel{}}{Graph $G$, integer $\ell$, set of relations $\mathcal{R}$}{Is there a set $S \subseteq V(G)$ such that at most $\ell$ vertices of $V(G)$ are violated by $S$ relative to $(\sigma,\rho)$, and for each $R \in \mathcal{R}$ we have $S \cap \scp{R} \in \acc{R}$?}

Observe that the problem \parGenDomSetRel{} is not a minimization problem.
We say that an instance of \parGenDomSetRel{} has arity $d$ (or at most $d$), when $|\scp{R}| \leq d$ for each $R \in \mathcal{R}$.
A path decomposition of the problem \parGenDomSetRel{} is a path decomposition such that additionally for each relation $R$ there is a bag that contains $\scp{R}$.

Since \parGenDomSetRel{} is a partial variant of the intermediate problem used in the literature \cite{fockeTightComplexityBoundsLowerBound}, we can propagate the states of \emph{satisfied} state vertices in essentially the same way in our grid-like construction.
However, changes are required since we also need to be able to propagate the states of \emph{violated} state vertices across the construction.
This can be achieved by only allowing certain vectors of states to be used, when done carefully this ensures propagation while keeping the desired properties.
Formally, we show the following intermediate lower bound for \parGenDomSetRel{}.

\begin{restatable}{lemma}{thmDecisionHighLvl}
    \label{thm:decision_high_lvl}
    Let $\sigma, \rho$ be nonempty finite or simple cofinite sets.
    For any $\varepsilon > 0$, there is a constant $d$ depending only on $\varepsilon$, $\sigma$, and $\rho$, such that if there is an algorithm that can solve the problem \parGenDomSetRel{} on instances of arity at most $d$ provided together with a path decomposition of width \pw{} in time $(|\allStatesPartial| - \varepsilon)^\pw \cdot |I|^{\Oh(1)}$, then the \pwseth{} is false.
\end{restatable}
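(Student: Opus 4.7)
The plan is to reduce from a PWSETH-hard large-alphabet CSP whose alphabet size matches $B := |\allStatesPartial| = \sigLargestPartial + \rhoLargestPartial + 2$. Via the standard PWSETH lift from \kSAT[3] to large-alphabet CSPs (as used, for example, in \cite{curticapeanTightConditionalLower2016,fockeTightComplexityBounds2023c,marxAntifactorFPTParameterized2025,lampisPrimalPathwidthSETH2025}), for every $\varepsilon' > 0$ there is a constant $d_0 = d_0(\varepsilon', B)$ such that no algorithm decides the $B$-ary CSP with constraint arity at most $d_0$ in time $(B - \varepsilon')^{q} \cdot |\Phi|^{\Oh(1)}$ on instances $\Phi$ given together with a path decomposition of width $q$ of their primal graph. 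I would encode such an instance $\Phi$ as an equivalent instance of \parGenDomSetRel{} with only a constant-factor blow-up in the pathwidth.

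For each variable of $\Phi$ I would introduce a \emph{column} of identical \emph{blocks}, each carrying a distinguished portal vertex. The blocks are designed so that the full state of a portal (its selection status together with the count of its selected in-block neighbors) can be set independently to any element of $\allStatesPartial$. For each state, the block is paired with a small controller sub-gadget whose purpose is to equalise the block's violation cost across states: regardless of whether the chosen state is happy (e.g., $\sigma_i$ with $i \in \sigma$, or the saturated state $\sigma_{\sigLargestPartial}$ when $\sigma$ is simple cofinite) or violated (e.g., the overflow state $\sigma_{\sigLargestPartial}$ when $\sigma$ is finite, or any $\sigma_i$ with $i \notin \sigma$), the block contributes the same number of violations to the global count. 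Portals have no edges leaving the block other than those going to relations, so state choices do not leak between columns.

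Consistency within a column and satisfaction of the CSP constraints are enforced through the relation set $\mathcal{R}$, which \parGenDomSetRel{} supports without restriction: pairwise \emph{equality} relations of arity $2$ between consecutive blocks of the same column, forcing all portals of one column to agree on their state, and one \emph{constraint} relation per CSP constraint whose scope contains a dedicated portal per participating column and which accepts exactly the state-tuples encoding satisfying assignments. The arities involved are bounded by a constant $d = \max(2, d_0)$ depending only on $\varepsilon$, $\sigma$, and $\rho$. The violation budget $\ell$ is set to the (deterministic) total block-cost defined above, so that satisfying assignments of $\Phi$ correspond to feasible selections using exactly $\ell$ violations, and, conversely, any feasible selection must assign a column-uniform state to each variable that jointly satisfies every CSP constraint. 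The resulting graph $G$ admits a path decomposition of width $c \cdot q + \Oh(1)$ obtained by following a path decomposition of $\Phi$ column-by-column and expanding each CSP bag with one block per column it contains, for a constant $c$ depending only on the block size. A hypothetical algorithm with running time $(B - \varepsilon)^{\pw} \cdot |G|^{\Oh(1)}$ would then yield a $(B - \varepsilon)^{cq + \Oh(1)} \cdot |\Phi|^{\Oh(1)}$ algorithm for $\Phi$, contradicting PWSETH once $\varepsilon'$ is chosen small enough in terms of $\varepsilon$ and $c$.

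The main obstacle I anticipate is the block construction: it must simultaneously admit all $|\allStatesPartial|$ portal states while (a) exhibiting a uniform violation cost across states, so that the budget $\ell$ cannot be traded between columns, and (b) being ``isolated'' from the outside except through portals, so that the state choice at one column cannot tamper with the in-block neighbor count of a portal in another column. Both requirements are delicate in the partial setting, because the adversary has the freedom to violate any vertex at unit cost and might try to redirect this freedom towards cheating on a constraint relation or on an equality relation. I expect to resolve this by pairing each portal with independent-set padding that absorbs any unintended increase in neighbor counts and with a small controller tailored to $(\sigma,\rho)$, following the template of \cite{greilhuberResidueDominationBoundedTreewidth2025,marxAntifactorFPTParameterized2025}.
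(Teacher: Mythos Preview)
There is a genuine gap in the final running-time step. Your reduction produces an instance of pathwidth $c \cdot q + \Oh(1)$ with $c > 1$ the block size, and you claim that a $(B-\varepsilon)^{\pw}$ algorithm yields a $(B-\varepsilon)^{cq+\Oh(1)}$ algorithm for the $B$-ary CSP, ``contradicting PWSETH once $\varepsilon'$ is chosen small enough in terms of $\varepsilon$ and $c$.'' But $(B-\varepsilon)^{cq} = ((B-\varepsilon)^c)^q$, and for any $c \ge 2$ one has $(B-\varepsilon)^c > B$ once $\varepsilon$ is small, so there is no $\varepsilon'>0$ with $(B-\varepsilon)^c \le B-\varepsilon'$; the CSP lower bound is simply not contradicted. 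Nor can you bring the pathwidth down to $q+\Oh(1)$ with the present design: a relation only sees which vertices are selected, so to read a portal's state its entire in-block neighborhood must lie in the relation's scope; since each block is read at the transition into its row (equality with the previous row) and again at the transition out (equality with the next row), and the bags containing any vertex must form a contiguous subpath, the block internals persist across the whole row, forcing $\Theta(c)$ vertices per variable in every bag.

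The paper resolves this with the grouping step of Lemma~\ref{thm:choosing_alphabet_size}: one CSP variable is represented by $g$ state vertices, and the reduction starts from an arity-$2$ CSP over an alphabet $A \subseteq \allStatesPartial^{\,g}$ of size at least $|\allStatesPartial|^g/|W_g|$, with $g$ chosen so that $(|\allStatesPartial|-\varepsilon)^g \le |A|-\varepsilon'$. The output pathwidth is $g\cdot\pw+\Oh(1)$, and this inequality is precisely what converts a hypothetical $(|\allStatesPartial|-\varepsilon)^{g\cdot\pw}$ algorithm into an $(|A|-\varepsilon')^{\pw}$ CSP algorithm refuting Theorem~\ref{thm:lampis}. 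Taking $A$ to be a single block of the partition $W_g$ (fixed $\sigma$-weight, $\rho$-weight, and numbers of overflow states) also dissolves the ``uniform violation cost'' obstacle you anticipate: every codeword in $A$ has the same violation profile by construction, so the budget $\ell$ is tight without any per-state controller gadget. Even if a controller-based equalization could be made sound (which is unclear, since in the partial setting the adversary may shift violations from the block into the controller), the pathwidth-versus-base mismatch already breaks the argument.
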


To prove the lemma above, we provide a suitable reduction for each fixed $\varepsilon$.
The bound $d$ on the arity size is required so that we can later replace the relations.

\subparagraph*{Step 2.}
Next, we want to replace the arbitrary relations with simpler relations.
Using a reduction by Focke et al. \cite[Corollary 8.8]{fockeTightComplexityBoundsLowerBound}, which also works for the partial problem variant, we get the following result.\footnote{We remark that \cref{thm:replacing_arbitrary_with_hw_1} indeed only requires $\rho \neq \{0\}$. Even if $0 \in \rho$ the problem remains hard due to the relations in the input instance.}

\begin{restatable}{lemma}{thmReplacingArbitraryHWOne}
    \label{thm:replacing_arbitrary_with_hw_1}
    Let $\sigma, \rho$ be nonempty sets with $\rho \neq \{0\}$.
    There is a pathwidth-preserving reduction from arbitrary instances \parGenDomSetRel{} with arity at most $d$ to instances \parGenDomSetRel{} with arity at most $2^d + 1$ that only use $\hwRelation{}$-relations.
\end{restatable}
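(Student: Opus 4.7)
The plan is to instantiate the construction referenced in the statement. Given an input instance $(G,\ell,\mathcal{R})$ with relations of arity at most $d$, we process each relation $R \in \mathcal{R}$ with scope $\scp{R}$ and accepted set $\acc{R}$ separately. For each such $R$, we introduce a family of \emph{selector} vertices $\{a_s : s \in \acc{R}\}$ as new isolated vertices of the graph (no edges are ever added), and replace $R$ by a \emph{master} $\hwRelation{}$-relation on $\{a_s : s \in \acc{R}\}$ together with, for each $v \in \scp{R}$, a \emph{linking} $\hwRelation{}$-relation on $\{v\} \cup \{a_s : v \notin s\}$. A direct check shows that these relations jointly force the unique selected selector $a_{s^*}$ to satisfy $S \cap \scp{R} = s^*$, so the original relation $R$ is faithfully simulated for any target $s^* \in \acc{R}$. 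The master relation has arity $|\acc{R}| \le 2^d$, and each linking relation has arity at most $|\{s \in \acc{R} : v \notin s\}| + 1 \le 2^d + 1$, matching the claimed bound.

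Pathwidth is preserved by adding each selector $a_s$ for $R$ to a bag of the input path decomposition that contains $\scp{R}$ (and keeping it there for the bags in which the linking relations are processed); since $|\acc{R}| \le 2^d$ and $d$ is a fixed constant, the width increases by at most an additive $O(2^d)$, which is within what is permitted by ``pathwidth-preserving''. The one point that requires care is the partial-domination semantics of the added vertices. Each selector $a_s$ has degree zero in the new graph, so its happiness depends only on whether $0 \in \sigma$ (if selected) or $0 \in \rho$ (if not). The master relation forces exactly one selector per $R$ to be selected and the remaining $|\acc{R}| - 1$ to be unselected, so the selector vertices of $R$ contribute a fixed number $c_R$ of violations, computable in advance from $\sigma$, $\rho$, and $|\acc{R}|$. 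Setting the new violation budget to $\ell' := \ell + \sum_{R \in \mathcal{R}} c_R$ accounts for this overhead precisely.

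The main (modest) obstacle is verifying both directions of the correspondence, which is where the adaptation from the nonpartial setting of Focke et al.\ has to be checked. In the forward direction, from a solution $S$ of the input instance we obtain a solution of the output by selecting, for each $R$, the selector $a_s$ with $s = S \cap \scp{R}$; the master and linking relations are satisfied, and exactly $\sum_R c_R$ additional violations are introduced, all absorbed by $\ell'$. In the backward direction, any solution $S'$ of the output satisfies all master and linking relations, which together imply $S' \cap \scp{R} \in \acc{R}$ for each $R$, so $S := S' \cap V(G)$ satisfies every $R \in \mathcal{R}$; since no edges were added, each vertex of $V(G)$ has the same neighbourhood in $S$ as in $S'$, and the selectors account for exactly $\sum_R c_R$ violations, so the vertices of $G$ contribute at most $\ell' - \sum_R c_R = \ell$ violations under $S$. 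Both directions hinge only on the transparent semantics of the master-plus-linking construction, so the reduction goes through unchanged once $\ell'$ is chosen as above.
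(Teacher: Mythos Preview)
Your argument is correct, and it takes a genuinely different route from the paper. The paper invokes \cite[Corollary 8.8]{fockeTightComplexityBounds2023c} essentially as a black box: that reduction attaches, for each new ``selector''-type vertex $u$, a small gadget $H$ with the property that there is an all-happy extension of any selection status of $u$, so the new vertices never contribute violations and one may keep $\ell' = \ell$. Your construction is more elementary: the selectors are isolated vertices with no gadget, the master and linking $\hwRelation{}$-relations do the combinatorial work directly, and the unavoidable violations among the selectors are a fixed quantity $\sum_R c_R$ (determined by $|\acc{R}|$ and whether $0 \in \sigma$, $0 \in \rho$) that you absorb by shifting $\ell' = \ell + \sum_R c_R$. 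Your verification that the master relation together with the linking relation on $\{v\} \cup \{a_s : v \notin s\}$ forces $S' \cap \scp{R} = s^*$ is correct in both directions, the arity bound $2^d + 1$ is right, and the pathwidth argument goes through once you duplicate the bag containing $\scp{R}$ so that at most one relation's $\le 2^d$ selectors are added to any single bag.

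What each approach buys: the paper's reuse of Focke et al.\ is shorter to state and keeps $\ell$ unchanged, but requires the existence of the gadget $H$, which is where the hypothesis $\rho \neq \{0\}$ is actually used. Your approach is self-contained, avoids constructing $H$, and in fact does not need $\rho \neq \{0\}$ at all; the hypothesis is vestigial for your proof. Two small points worth making explicit for completeness: if $\acc{R} = \emptyset$ you should output a fixed no-instance, and in the pathwidth argument you should spell out the bag-duplication step so that selectors of distinct relations never land in the same bag.
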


\subparagraph*{Step 3.}
Now all that remains for us is to find gadgets that can replace the $\hwRelation{}$-relation.
For the nonpartial problem and finite $\sigma,\rho$, Focke et al. \cite{fockeTightComplexityBoundsLowerBound} show that there are graphs of constant size which do the job.
That is, to replace a relation $R$ on the vertices $\scp{R}$, they add a small gadget to the graph that is connected to $\scp{R}$.
This gadget does not provide any further selected vertices to $\scp{R}$ in any $(\sigma,\rho)$-set, but it ensures that exactly one vertex of $\scp{R}$ must be selected.
Intuitively, this can be achieved by ensuring that $\scp{R}$ receives a further unselected neighbor with $\max \rho - 1$ selected neighbors, and a further unselected neighbor with $\min \rho - 1$ selected neighbors.
Together, these neighbors guarantee that exactly one vertex of $\scp{R}$ is selected.
The task of the added gadgets is to ensure that the new neighbors of $\scp{R}$ are indeed unselected, and that they have these $\max \rho - 1$, respectively $\min \rho - 1$ selected neighbors.
Observe that this approach exploits that $\rho$ has a maximum value.
Since we are in the case of finite $\rho$, we can also use this fact.
However, we cannot directly use the gadgets given in the literature, as they do not function correctly when vertices can be violated.

To overcome this, we need gadgets that are robust against violations.
Intuitively, this can be achieved by ensuring that if the gadget does not behave correctly, then there is some additional price to pay.
We formally define the gadgets we utilize next.

\begin{restatable}[Robust Realization]{definition}{defRobustRealization}
    \label{def:robust_realization}
    Let $\sigma, \rho$ be sets of non-negative integers.
    A pair $(G,U)$, where $G$ is a graph and $U \subseteq V(G)$ realizes the $\hwRelation{}$-relation of arity $|U|$ with penalty $\penaltySym$, cost tradeoff $\beta$, and cost $\gamma$ if the following properties all hold:
    \begin{enumerate}
        \item For any $u \in U$, there exists a set $S_u \subseteq V(G)$ such that $S_u \cap U = \{u\}$, $S_u \cap N(U) = \emptyset$, $|S_u \setminus U| = \gamma$, and $S_u$ violates no vertex of $V(G) \setminus U$.
        \item For any set $S \subseteq V(G)$ that violates at most $\ell$ vertices of $V(G) \setminus U$, it holds that $S \setminus U$ has size at least $\gamma - \ell \cdot \beta$.
        \item Any set $S \subseteq V(G)$ such that $|S \cap U| \not = 1$ violates at least one vertex of $V(G) \setminus U$ or fulfills $|S \setminus U| > \gamma$.
        \item Any set $S \subseteq V(G)$ such that $S \cap N(U) \not = \emptyset$ violates more than $\penaltySym$ vertices of $V(G) \setminus U$ or fulfills $|S \setminus U| > \gamma + \penaltySym$.
        \item $U$ is an independent set of $G$.
    \end{enumerate}
\end{restatable}

When replacing a $\hwRelation{}$-relation the relation scope is identified with the set $U$ of a fresh copy of the robust realization gadget.
So, the goal of such a gadget is ensuring that exactly one vertex of $U$ is selected, and that no vertex of $N(U)$ is selected.
Let us elaborate on the definition.
First, there is a small solution (selecting at most $\gamma$ vertices outside $U$), whenever exactly one vertex $u$ of $U$ is selected.
The solution also has the property that it selects no vertex of $N(U)$.
This guarantees that whenever exactly one vertex of $U$ is selected we can extend the selection of $U$ to this solution of the robust realization gadget without changing the number of selected neighbors of vertices of $U$.
On the other hand, if a solution is not of this form, then we have to pay for it either in terms of larger solution size or number of violated vertices.
Specifically, the tradeoff value $\beta$ of the second property guarantees that decreasing the solution size is possible only at the cost of some number of violated vertices.
The third property says that if not exactly one vertex of $U$ is selected, then either there is a violation or the solution size strictly increases. The fourth property states that selecting a vertex in the neighborhood of $U$ is even worse: it creates more than $\delta$ violations or increases solution size by more than $\delta$.\footnote{Note that in principle the idea of using gadgets that simulate relations with some notion of penalty for undesired behaviors is not new.
    For instance, Marx et al. \cite{marxDegreesGapsTight2021} use this approach for a class of edge selection problems.}

We show that these types of gadgets exist for arbitrarily large penalties $\delta$.
Our constructions can be seen as extensions of those given in the literature \cite{fockeTightComplexityBoundsLowerBound} that take solution sizes into account and can also handle violations.

\begin{restatable}{lemma}{thmRobustRelationGadgetsExist}
    \label{thm:robust_relations_exist}
    Let $\sigma$ be a nonempty finite or simple cofinite set, and $\rho$ be a finite set with $0 \notin \rho$.
    Then, there is a non-negative constant $\beta$, such that for any $\penaltySym,d > 0$ there exists a pair $(G,U)$ that realizes the $\hwRelation{}$-relation of arity $d = |U|$ and cost $\gamma = \Oh(\penaltySym)$,
    with selection penalty $\delta$ and cost tradeoff $\beta$.
    Moreover, such a pair $(G,U)$ together with a path decomposition of $G$ of width $\Oh(d)$ can be computed in time polynomial in $d + \penaltySym$.
\end{restatable}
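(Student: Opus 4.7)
The plan is to construct $(G,U)$ by starting from a constant-size base gadget $(H,U)$ that realizes $\hwRelation$ of arity $d$ in the non-partial $(\sigma,\rho)$-setting (in the spirit of the $\hwRelation$-realizations of Focke et al.~\cite{fockeTightComplexityBounds2023c}), and then robustifying it so that violations and out-of-neighborhood selections are penalized by the prescribed amounts. The crucial structural assumption is that $\rho$ is finite with $0\notin\rho$: every unselected vertex demands a strictly positive and bounded number of selected neighbors, so a local deviation cannot be silently absorbed and instead becomes a violation we can count. In the base gadget, the happy configurations in the non-partial sense are in bijection with the singletons $\{u\}\subseteq U$, each admitting a canonical extension with a fixed number of inner selected vertices none of which lies in $N_H(U)$; the building blocks are small subgraphs pinning down a vertex's selected-neighbor count to a prescribed element of $\sigma$ or $\rho$ via private pendants and degree control, and the case split $\sigma$ finite vs.\ simple cofinite enters only as a swap of ``maximum allowed'' for ``minimum required'' count.

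Second, to reach the penalty $\penaltySym$ and property~4 of \cref{def:robust_realization}, I would augment $(H,U)$ with ``witness chains'' attached to every vertex of $N_H(U)$, as well as to the internal vertices responsible for witnessing a wrong cardinality in $U$ (giving property~3). Each witness chain is a length-$(\penaltySym+1)$ path of constant-size enforcement blocks designed so that the intended state at its root propagates cleanly through the chain at a total inner cost of $\Theta(\penaltySym)$, whereas a single deviation at the root triggers a cascade of $\penaltySym+1$ unavoidable violations, or $\penaltySym+1$ extra selections, along the chain. This yields total inner-selection count $\gamma=O(\penaltySym)$, and since each chain has constant bandwidth, sweeping across $U$ while carrying the active chain and the constant-size base gadget in the bag gives a path decomposition of width $O(d)$. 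The tradeoff constant $\beta$ then falls out of local accounting: each vertex of $V(G)\setminus U$ has constant degree in the gadget by construction, and since $\sigma,\rho$ are fixed, the happiness of any vertex can be ``bought'' by toggling at most a constant number of neighboring selections, so every violation reduces $|S\setminus U|$ below $\gamma$ by at most a universal constant $\beta$ depending only on $\sigma$ and $\rho$.

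The main obstacle I foresee is engineering the enforcement block of the witness chain so that the propagation really produces $\penaltySym+1$ unavoidable violations in every allowed $(\sigma,\rho)$ combination, especially in edge cases such as $\sigma=\{0\}$, $\rho$ a singleton, or simple cofinite $\sigma$ with large $\min\sigma$. I would handle these by calibrating the internal degrees of the chain using private pendants so that the required selected-neighbor count of each internal vertex is met exactly in the intended configuration but missed by one in any deviating configuration; then flipping a single chain vertex always leaves at least one neighbor violated, and the cascade propagates all the way to the end. Once this propagation property is verified on a per-block basis, the global verification of properties~1--5 of \cref{def:robust_realization} is a straightforward book-keeping argument, and the construction time polynomial in $d+\penaltySym$ together with the $O(d)$-width path decomposition are routine.
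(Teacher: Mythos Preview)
Your overall architecture---a base $\hwRelation$-realizer for $U$ plus a robustification layer---matches the paper's, but the robustification mechanism you propose has a real gap. A serial ``witness chain'' of length $\penaltySym+1$ cannot force more than $\penaltySym$ violations \emph{or} more than $\penaltySym$ extra selections from a single root deviation, which is what property~4 of \cref{def:robust_realization} demands. The issue is absorption: once the deviation reaches block $i$ of the chain, you can always accept a single violation there (either of the block vertex itself, or of one of its private pendants) while keeping the \emph{selection state} of block $i$ as intended; block $i+1$ then sees the correct input and the cascade stops. Your own description implicitly assumes that a violated block still outputs ``wrong'' to the next block, but a violation is a statement about neighbor counts, not about the selection bit the next block actually reads. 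So the chain yields ``$\ge 1$ violation or $\ge \penaltySym+1$ extra selections,'' which does \emph{not} rule out a solution with one violation and no extra selections---exactly the regime property~4 must exclude.

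The paper avoids this by replacing serial depth with parallel width. It first builds a constant-size \emph{tremendous gadget} $(H,u,\portalSelectionGadgetConstant)$: unless some vertex of $H$ is violated, the portal $u$ is selected and $|S\cap V(H)|\ge \portalSelectionGadgetConstant$. The realizer then has two hub vertices $v,w$ adjacent to all of $U$ (enforcing $|S\cap U|\ge 1$ via $v$ and $|S\cap U|\le 1$ via $w$), and crucially $v$ is also adjacent to $t=\penaltySym\cdot\portalSelectionGadgetConstant+2\penaltySym+1$ intermediate vertices $r_1,\dots,r_t$, each carrying $\max\rho$ private tremendous gadgets (and symmetrically $w$ with $z_1,\dots,z_t$). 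Now a pigeonhole argument does the work of your cascade: with at most $\penaltySym$ violations and at most $\penaltySym$ extra selections beyond $\gamma$, at most $t-(\penaltySym+1)$ of the $r_i$ can be selected or have a violated vertex in their private gadgets, so some $r_i$ is clean---unselected with all $\max\rho$ portals selected---forcing $v\notin S$. The tradeoff constant is simply $\beta=\portalSelectionGadgetConstant$, since each violation excuses at most one tremendous gadget from contributing its $\portalSelectionGadgetConstant$ selections. If you want to rescue your outline, the fix is to replace each chain by $\Theta(\penaltySym)$ parallel constant-size enforcers attached to the same root; this is exactly what the paper does.
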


Now, we have all tools at our disposal for our final reduction to replace the $\hwRelation$-relations.

\begin{restatable}{lemma}{thmRhoFinitePartialRealizeRelationsReduction}
    \label{thm:rho_finite_partial_realize_relations_reduction}
    Let $\sigma$ be a nonempty finite or simple cofinite set, and $\rho$ be a finite set with $0 \notin \rho$.
    There is a pathwidth-preserving reduction from \parGenDomSetRel{} using only $\hwRelation{}$ relations of arity bounded by a constant $d$ to \minParGenDomSet{}.
\end{restatable}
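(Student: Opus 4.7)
The plan is to replace each $\hwRelation$-relation $R$ of the input with $N$ disjoint copies of the robust realization gadget guaranteed by \cref{thm:robust_relations_exist}, identifying the set $U$ of each copy with $\scp{R}$. I instantiate the gadget with arity $d$ and some selection penalty $\penaltySym$; this yields intended size $\gamma = \Oh(\penaltySym)$ and a fixed size tradeoff $\beta$ depending only on $\sigma$ and $\rho$. I output the \minParGenDomSet{} instance $(G', k', \ell')$ with $\ell' := \ell$ and $k' := |V(G)| + N \cdot |\mathcal{R}| \cdot \gamma$, picking both $\penaltySym$ and $N$ to be polynomial in $|V(G)|$ and $\ell$ and strictly larger than $|V(G)| + (1 + \beta)\ell$. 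For the pathwidth, I insert the $N$ gadget path decompositions (each of width $\Oh(d)$) sequentially at a bag containing $\scp{R}$ for each $R$, which we may assume exists in the input decomposition; since $d$ is constant, pathwidth grows only by an additive constant.

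The forward direction follows directly from property~1 of \cref{def:robust_realization}: given a valid $S$ for the input and the unique vertex $u_R \in S \cap \scp{R}$, extend $S$ to $S'$ by adding the interior set $S_{u_R}$ of property~1 inside each copy attached to $R$. Each copy contributes exactly $\gamma$ new vertices, selects no vertex in $N(\scp{R})$, and introduces no new violations, so $|S'| \leq |V(G)| + N|\mathcal{R}|\gamma = k'$, the $G$-neighborhoods of vertices in $V(G)$ are unchanged, and the total number of violated vertices remains at most $\ell$.

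The backward direction is the main technical step, and is where the parameter coordination matters. Given a valid $S'$, set $S := S' \cap V(G)$; for each gadget copy $i$ write $b_i := |S' \cap (V(H_i) \setminus U_i)|$ and let $\ell_i$ count its internal violations, so that $\sum_i \ell_i \leq \ell'$. First, I rule out any copy with $S' \cap N(U_i) \neq \emptyset$: property~4 would require $\ell_i > \penaltySym$ (impossible as $\penaltySym \geq \ell'$) or $b_i > \gamma + \penaltySym$, and the latter combined with the lower bound $b_j \geq \gamma - \ell_j\beta$ on every other copy (property~2) breaks the budget $k'$ once $\penaltySym > |V(G)| + \beta\ell$. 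Second, assuming no such copy exists, no relation can satisfy $|S \cap \scp{R}| \neq 1$: property~3 would then force $b_i \geq \gamma + 1$ on every \emph{clean} copy (one with $\ell_i = 0$) of such a relation, and since at most $\ell'$ copies can be non-clean, each offending relation produces a surplus of at least $N - \ell'$ vertices, contradicting the budget once $N > |V(G)| + (1 + \beta)\ell$. With both conclusions in hand, $S$ satisfies every $\hwRelation$-relation and inherits all $G$-neighborhoods from $S'$, so it witnesses the input instance with at most $\ell$ violations.

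The main obstacle is exactly this joint disjunctive structure of properties~3 and~4 of the realization gadget: each admits an ``escape'' in which extra gadget vertices are spent instead of violations, and the only common lever against both escapes is the global size budget $k'$. The proof therefore hinges on choosing $\penaltySym$ and $N$ jointly large enough -- both strictly above $|V(G)| + (1 + \beta)\ell$ -- so that the size budget closes the violation-free escape of property~4 and the clean-copy pigeonhole closes the violation-free escape of property~3 at the same time; the constant $\beta$ only contributes a constant-factor slack and does not grow with the instance.
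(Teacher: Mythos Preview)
Your proposal is correct and follows essentially the same approach as the paper: replace each relation by many copies of the robust realization gadget with a sufficiently large penalty, set $k' = |V(G)| + N|\mathcal{R}|\gamma$ and $\ell' = \ell$, and in the backward direction first use property~4 together with the global size budget to rule out selected neighbors of $U$, then use property~3 on the clean copies of any unsatisfied relation to derive a budget contradiction. The paper simply fixes $N = \penaltySym = |V(G)| + (1+\beta)\ell + 1$, which is exactly your threshold, and the counting arguments are the same.
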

\begin{proof}[Proof Sketch.]
    Let the input instance of \parGenDomSetRel{} be $(G,\ell,\mathcal{R})$.
    We utilize the robust realization gadgets for a penalty value of $\delta = |V(G)| + \ell + \ell \cdot \beta + 1$, where $\beta$ is the cost tradeoff value.
    Let $\gamma$ be the cost constant of the gadgets.
    For any relation $\mathcal{R}$ we add $\delta$ copies of the robust realization gadget (with $|U| = |\scp{R}|$) and identify the set $U$ of each gadget with $\scp{R}$.
    For such a gadget $(H,U)$ we refer to the vertices of $V(H) \setminus U$ as the internal vertices of the gadget.
    The solution size of the output instance is $k = |V(G)| + |\mathcal{R}| \cdot \delta \cdot \gamma$, and the output instance is $(G',k,\ell)$, where $G'$ is the constructed graph.

    The forward direction of correctness is straightforward.
    By the first property of the robust realization gadgets, there is a suitable extension within such a gadget that adds no further violations to the gadget and selects exactly $\gamma$ internal vertices of the gadget.
    We utilize this extension in each of the added gadgets.

    Now, let $S$ be a solution to the output instance.
    Due to the tradeoff value $\beta$ the set $S$ must select at least $|\mathcal{R}| \cdot \delta \cdot \gamma - \ell \cdot \beta$ vertices within the added gadgets.
    Then, the fourth property of the gadget and our chosen values for $\delta$ and $k$ ensure that is not possible that a vertex of $N(U)$ is selected for the set $U$ of any added gadget.
    This guarantees that $S \cap V(G)$ violates at most $\ell$ vertices of $G$.
    Recall that we added $\delta$ gadgets to replace each relation $R$.
    By the third property of the gadgets, either exactly one vertex of $\scp{R}$ is selected, or each of these $\delta$ gadgets has more than $\gamma$ selected internal vertices, or a violated vertex.
    Our choices of $k$ and $\delta$ mean that also this is impossible, this would either yield more than $\ell$ violated vertices, or that $|S| > k$.

    Finally, recall that a path decomposition of the input instance is such that for each $R \in \mathcal{R}$ there is a bag containing $\scp{R}$.
    To fit a gadget $H$ for $R$ into such a path decomposition, we can copy the bag $X_R$ that contains $\scp{R}$ arbitrarily many times, and then insert the path decomposition of the gadget $H$ into these copies.
    This will only increase the width of the resulting path decomposition by the width of $H$, which is a constant.
    We can repeat this step for each relation while only increasing the pathwidth by a constant overall.
\end{proof}

Combining \cref{thm:decision_high_lvl,thm:replacing_arbitrary_with_hw_1,thm:rho_finite_partial_realize_relations_reduction} now yields the result of \cref{thm:main_theorem_lower_bound_partial_version} for the case where $\rho$ is finite.

\subparagraph*{Set $\rho$ is simple cofinite.}
Unfortunately, we cannot use this approach for the case when $\rho$ is simple cofinite.
The problem is that it is \emph{impossible} to build robust realization gadgets for relations of arity 2 or larger when $\rho$ is simple cofinite.
Indeed, by the first property of these gadgets $(G,U)$, there is a set $S$ selecting exactly one vertex of $U$, no vertex of $N(U)$, and exactly $\gamma$ vertices of $V(G) \setminus U$ that violates no vertex of $V(G) \setminus U$.
Because $\rho$ is simple cofinite, we could simply take $S$, and add all vertices of $U$ to it.
This set $S'$ would then still satisfy all vertices of $V(G) \setminus U$ because the unselected vertices of $N(U)$ cannot become violated due to having even more selected neighbors, and clearly $S'$ still selects exactly $\gamma$ vertices of $V(G) \setminus U$.
However, this violates the third property of the definition of these gadgets, which states that any set that selects more than one vertex of $U$ must violate a vertex of $V(G) \setminus U$ or select more than $\gamma$ vertices of $V(G) \setminus U$.
Hence, we need a new approach to handle the case when $\rho$ is simple cofinite. 
We present the central ideas of this approach next.

\subparagraph*{Step 1.}

Since we are unable to later replace $\hwRelation{}$-relations with gadgets, we cannot work with arbitrary relations for the intermediate problem.
Hence, we need to restrict the types of relations we use.
A relation $R$ is superset-closed if for any $r \in \acc{R}$ also any superset $r' \subseteq \scp{R}$ of $r$ is in $\acc{R}$.
We only allow such relations in the intermediate problem.

\problembox{\minParGenDomSetRel{}}{Graph $G$, integers $k, \ell$, set of superset-closed relations $\mathcal{R}$}{Is there a set $S \subseteq V(G)$ such that $|S| \leq k$ and at most $\ell$ vertices of $V(G)$ are violated by $S$ relative to $(\sigma,\rho)$, and for each $R \in \mathcal{R}$ we have $S \cap \scp{R} \in \acc{R}$?}

Even when only allowing superset-closed relations, we still cannot easily replace them with gadgets.
We define a property of problem instances that we additionally exploit to replace superset-closed relations.

\begin{restatable}[Good Instances]{definition}{defGoodInstances}
    \label{def:partial_good_instance}
    Let $(G,k,\ell,\mathcal{R})$ be an instance of the problem \minParGenDomSetRel{}.
    The instance is \emph{good} if any set $S \subseteq V(G)$ that fulfills all relations of $\mathcal{R}$ fulfills
    \begin{enumerate}
        \item $|S| \geq k$, and
        \item there is a set $D \subseteq S$ of size at least $\ell$, such that any $v \in D$ has more than $\max \sigma$ neighbors that are also selected by $S$.
    \end{enumerate}
    Note that if $\sigma$ is cofinite, then $\max \sigma=\infty$ and hence the second condition can be satisfied only if $\ell=0$, in which case the condition is vacuously true.
\end{restatable}

We then show the following lower bound for the intermediate problem.
\begin{restatable}{lemma}{thmMinimizationHighLvl}
    \label{thm:minimization_high_lvl}
    Let $\sigma$ be a nonempty finite or simple cofinite set, and $\rho$ be a simple cofinite set.
    For any $\varepsilon > 0$, there is a constant $d$, such that if there is an algorithm that can solve the problem \minParGenDomSetRel{} on good instances $I$ of arity at most $d$ provided with a path decomposition of width \pw{} in time $(|\allStatesPartial| - \varepsilon)^\pw \cdot |I|^{\Oh(1)}$, then the \pwseth{} is false.
\end{restatable}

In the reduction we utilize that superset-closed relations can be controlled well for the minimization problem.
Indeed, using the bound on the solution size for the minimization variant, one can ensure that the fact that the relations of \minParGenDomSetRel{} are superset-closed is not problematic: in the instance we create, a solution of size $k$ can afford to satisfy each relation using only an inclusionwise-minimal satisfying subset.
That the output instance of our reduction is good is then achieved by employing relations.\footnote{
    \cref{thm:minimization_high_lvl} also works if $\rho = \nat$, that is, even when $0$ is in the set $\rho$.
    This is in stark contrast to the problem without relations, where the empty set is always an optimal solution.
}

\subparagraph*{Step 2.}
Next, we show that we can reduce to problems that use simpler relations.
A $\hwGeqOneRelation{}$-relation accepts a selection from its relation scope if and only if it is nonempty.

\begin{restatable}{lemma}{thmReplaceSupersetByHwGeqOne}
    \label{thm:partial_replace_superset_by_hw_geq_1}
    There exists a pathwidth-preserving reduction from the problem \minParGenDomSetRel{} on instances with arity bounded by some constant $d$ to the problem \minParGenDomSetRel{} in which each used relation is a $\hwGeqOneRelation{}$-relation of arity at most $d$.
    Moreover, if the input instance is a good instance, then the output instance is a good instance.
\end{restatable}

\subparagraph*{Step 3.}
Finally, we replace $\hwGeqOneRelation$-relations with appropriate gadgets.
Ideally, we would like to have gadgets which are similar to the robust realization gadgets we used for the case when $\rho$ is finite.
In particular, the gadgets should have some vertex set $U$ such that selecting vertices of $N(U)$ can be made arbitrarily bad.
Unfortunately, it is not possible to create such gadgets.
For example, if $\sigma$ and $\rho$ are simple cofinite, then selecting a vertex $v$ compared to not selecting it is never bad for the neighbors of $v$.
Hence, we cannot have a large penalty for selecting vertices of $N(U)$.
On the flip side, we can at least ensure that selecting a vertex of $N(U)$, or no vertex of $U$ is somewhat bad:
either a vertex of the gadget that is not in $U$ is violated, or the solution size is larger than it needs to be.
This intuition is captured by the next definition.

\begin{restatable}[Fragile Realization]{definition}{defFragileRealization}
    \label{def:fragile_realization}
    Let $\sigma$ and $\rho$ be sets of non-negative integers.
    A pair $(G,U)$, where $G$ is a graph and $U \subseteq V(G)$ fragilely realizes the $\hwGeqOneRelation{}$-relation of arity $|U|$ with cost $\gamma$, if it has all the following properties:
    \begin{enumerate}
        \item Any set $S \subseteq V(G)$ that violates no vertex of $V(G) \setminus U$ fulfills $|S \setminus U| \geq \gamma$.
        \item For any nonempty $U' \subseteq U$, there is a set $S \subseteq V(G)$ such that $S \cap U = U'$, $S$ violates no vertex of $V(G) \setminus U$, $S \cap N_G(U) = \emptyset$, and $|S \setminus U| = \gamma$.
        \item Any set $S \subseteq V(G)$ which selects no vertex of $U$, or selects a vertex of $N_G(U)$ violates a vertex of $V(G) \setminus U$, or has the property that $|S \setminus U| > \gamma$.
        \item The set $U$ is an independent set of $G$.
    \end{enumerate}
\end{restatable}

We construct such gadgets (\cref{thm:fragile_realization_gadgets_exist}), and then use them in the following reduction.
Note that the reduction relies on the input instance being good, otherwise the much weaker guarantees provided by a fragile realization gadget would not be sufficient.

\begin{restatable}{lemma}{thmPartialRhoCofiniteReplaceRelations}
    \label{thm:partial_rho_cofinite_replace_relations}
    Let $\sigma$ be a nonempty set, and $\rho$ a simple cofinite set with $0 \notin \rho$.
    There is a pathwidth-preserving reduction from good instances of \minParGenDomSetRel{} with arity bounded by some constant $d$ using only $\hwGeqOneRelation{}$-relations to \minParGenDomSet{}.
\end{restatable}
\begin{proof}[Proof Sketch.]
    Let $(G,k,\ell,\mathcal{R})$ be the input instance of \minParGenDomSetRel{}, and let $\gamma$ be the cost of the fragile realization gadget.
    Set $t = \ell + \ell \cdot \gamma + k + 1$, and $k' = k + \gamma \cdot |\mathcal{R}| \cdot t$.
    We replace each relation of $\mathcal{R}$ with $t$ fragile realization gadgets by adding the gadgets to the graph and identifying the vertices $U$ of the gadgets with the vertices in the relation scope.
    For a fixed realization gadget $(H,U)$ we call the vertices of $V(H) \setminus U$ the internal vertices of the gadget.
    Let $G'$ be the resulting graph.
    The output instance is $(G',\ell,k')$.

    The forward direction of correctness follows from property 2 of realization gadgets, which allows us to extend solutions for the input instance to solutions to the output instance.

    For the backwards direction consider a solution $S'$ of the output instance.
    By the third property of the fragile realization, each time a selection is not the intended selection, the solution either violates an internal vertex of the gadget, or selects more than $\gamma$ internal vertices.
    Fix a relation $R \in \mathcal{R}$.
    Since at most $\ell$ vertices can be violated by $S'$, $S'$ must select at least $\gamma \cdot (t \cdot |\mathcal{R}| - \ell)$ internal vertices of the added gadgets.
    In particular, by the budget $k'$, at most $\ell \cdot \gamma + k$ gadgets for relation $R$ can be such that $S'$ selects more than $\gamma$ internal vertices of them.
    But, then at most $\ell + \ell \cdot \gamma + k = t - 1$ gadgets for $R$ have a violated internal vertex, or are such that the solution selects more than $\gamma$ internal vertices.
    Hence, by the third property of the realization gadgets, at least one gadget behaves as desired, and at least one vertex of each relation scope is selected by $S'$.

    However, it would be problematic if the vertices of the input graph receive additional selected neighbors from the attached gadgets.
    This is where the fact that the input instance is good comes in.
    From this property we obtain that the solution cannot afford to violate a single internal vertex of the added gadgets, and that it must select exactly $\gamma$ internal vertices of each gadget.
    By the third property of the realization gadgets this implies that not only must all relations of $\mathcal{R}$ be satisfied by $S'$, but also that none of the vertices in the input instance receive additional neighbors from any of the added gadgets.
    Hence, $S' \cap V(G)$ is a solution to the input instance.
    The reduction is pathwidth-preserving because the gadgets we add have constant pathwidth, and are attached to vertices which appear in the same bag of a path decomposition of the input graph.
\end{proof}

By combining \cref{thm:minimization_high_lvl,thm:partial_replace_superset_by_hw_geq_1,thm:partial_rho_cofinite_replace_relations}, we obtain \cref{thm:main_theorem_lower_bound_partial_version} when the set $\rho$ is simple cofinite.

\section{Preliminaries}
\label{sec:preliminaries}
For integers $a,b$ we write $\range[a]{b}$ for the set $\{a \leq x \leq b \mid x \in \nat\}$.
Let $\tau$ be a set of integers, and $k$ an integer.
Then, $\tau + k = \{x + k \mid x \in \tau\}$.
A set of integers $\tau$ is simple cofinite if $\tau = \{x \mid x \in \mathbb{Z}, x \geq c\}$ for some integer $c$, in other words, when $\tau = \mathbb{Z}_{\geq c}$.

\subsection{Graphs}

We use standard notation.
All graphs considered in this work are finite and contain no multi-edges or self-loops.
Given a graph $G$, and a set $U \subseteq V(G)$, we denote the graph induced by $U$ as $G[U]$.
For a graph $G$, and a set $U \subseteq V(G)$, we use $G - U$ to denote the graph $G[V(G) \setminus U]$.
For a graph $G$ and a vertex $v$, the neighborhood of $v$ in $G$ is $N_G(v) = \{u \mid uv \in E(G)\}$.
Given a set of vertices $S$, the neighborhood of this set is $N_G(S) = \bigcup_{v \in S} N_G(v) \setminus S$.
The closed neighborhood of a vertex $v$ is $N_G[v] = N_G(v) \cup \{v\}$.
The closed neighborhood of a vertex set $S$ is $N_G[S] = N_G(S) \cup S$.
We may drop the subscript $G$ if it is clear from the context which graph is meant.

\subsection{Treewidth and Pathwidth}

We introduce the standard notions of treewidth and pathwidth, which are, for example, also introduced in \cite[Chapter 7]{cyganParameterizedAlgorithms2015}.

\begin{definition}[Tree Decomposition, Treewidth]
    \label{def:treewidth}
    Let $G$ be a graph.
    A tree decomposition of $G$ is a pair $(T,\{X_t\}_{t \in V(T)})$, where $T$ is a tree and for any $t \in V(T)$, $X_t$, called the bag of $t$, is a subset of $V(G)$, which fulfills the following properties:
    \begin{enumerate}
        \item For any $v \in V(G)$ there exists a $t \in V(T)$ such that $v \in X_t$.
        \item For every edge $uv \in E(G)$ there exists a $t \in V(T)$ such that $\{u,v\} \subseteq X_t$.
        \item For every vertex $v \in V(G)$, let $T_v = \{t \in V(t) \mid v \in X_t\}$. Then, the graph $T[T_v]$ must be connected.
    \end{enumerate}
    The width of the tree decomposition is $\max_{t \in V(T)} |X_t| - 1$.
    The treewidth of $G$ is the minimum width of any tree decomposition of $G$.
\end{definition}

\begin{definition}[Path Decomposition, Pathwidth]
    \label{def:pathwidth}
    Let $G$ be a graph.
    A path decomposition of $G$ is a tree decomposition $(T,\{X_t\}_{t \in V(T)})$ where $T$ is a path graph.
    As the structure given by $T$ is just an order of the nodes, one can equivalently treat a path decomposition as a sequence of bags $(X_1,\dots,X_r)$.
    The pathwidth of a graph $G$ is the minimum width of any path decomposition of $G$.
\end{definition}

Note that for any problem we consider, we utilize the standard assumption that a tree decomposition (path decomposition) of width $\tw$ ($\pw$) is provided together with the input.
For the purpose of our dynamic programming algorithms, we use the notion of a nice tree decomposition (see, for example, \cite[Chapter 7]{cyganParameterizedAlgorithms2015}).

\begin{definition}[Nice Tree Decomposition]
    A tree decomposition $(T,\{X_t\}_{t \in V(t)})$ is nice if it is rooted at some $r \in V(T)$, $X_r = \emptyset$, and each node is one of the following types:
    \begin{description}
        \item[Leaf node:] A leaf $t$ of $T$ where $X_t = \emptyset$.
        \item[Introduce node:] A node of $t$ with exactly one child $t'$ such that $X_t = X_{t'} \cup \{v\}$ for some vertex $v \notin X_{t'}$.
        \item[Join node:] A node $t$ with exactly two children $t'$ and $t''$ and $X_t = X_{t'} = X_{t''}$.
        \item[Forget node:] A node $t$ with exactly one child $t'$ such that $X_t = X_{t'} \setminus \{v\}$ for some vertex $v \in X_{t'}$.
    \end{description}
\end{definition}

Observe that we can assume that the root is a join node or forget node.
It is well-known that one can compute a nice tree decomposition quickly given a tree decomposition.

\begin{lemma}[{\cite[Lemma 7.4]{cyganParameterizedAlgorithms2015}}]
    \label{thm:transform_into_nice_decomposition}
    Given a tree decomposition $(T,\{X_t\}_{t \in V(t)})$ of a graph $G$ of width at most $k$, one can compute a nice tree decomposition of $G$ of width at most $k$ that has $\Oh(k \cdot |V(G)|)$ nodes in time $\Oh(k^2 \cdot \max(|V(T)|,|V(G)|))$.
\end{lemma}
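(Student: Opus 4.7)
The proof proceeds by a sequence of standard local modifications to the given tree decomposition. First, I would reduce the number of nodes of $T$ to be linear in $|V(G)|$: whenever two adjacent bags satisfy $X_t \subseteq X_{t'}$, the edge $tt'$ can be contracted without increasing the width or breaking any of the three properties of \cref{def:treewidth}. After exhaustively applying this contraction rule, a standard counting argument bounds $|V(T)|$ by $\Oh(|V(G)|)$. This preprocessing can be carried out by scanning each edge once and checking subset containment on bags of size at most $k+1$, for a total cost of $\Oh(k \cdot |V(T)|)$.

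Next, I would root the resulting tree at an arbitrary vertex $r$ and, for every parent--child edge $tt'$, insert a chain of at most $|X_t \setminus X_{t'}| + |X_{t'} \setminus X_t| \le 2(k+1)$ intermediate nodes, each differing from the next by a single vertex. Reading the chain from $t'$ upward, I first forget the vertices of $X_{t'} \setminus X_t$ one at a time and then introduce the vertices of $X_t \setminus X_{t'}$ one at a time, so that every edge of the modified tree corresponds to an introduce or forget step on a single vertex. Similarly, at every current leaf $t'$ I would append a chain that forgets the vertices of $X_{t'}$ one by one down to an empty leaf, and extend above $r$ until the root has an empty bag.

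To handle nodes with two or more children I would binarize by replacing a node $t$ with $d \ge 2$ children by a small binary tree built from $d-1$ fresh copies of the bag $X_t$, each forming a join node; the original $d$ subtrees are then spliced in as the leaves of this binary subtree, and between each join node and its two children, a chain of one-vertex changes is inserted as in the previous step. Counting across the construction, each of the $\Oh(|V(G)|)$ surviving original edges expands to $\Oh(k)$ nodes, each leaf and the root contribute $\Oh(k)$ further nodes, and binarization adds only a constant number of nodes per original child, yielding $\Oh(k \cdot |V(G)|)$ nodes in total. The runtime bound $\Oh(k^2 \cdot \max(|V(T)|,|V(G)|))$ then follows because the dominant cost per edge is computing the symmetric difference of two bags of size $\Oh(k)$.

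The main obstacle is the initial node-count reduction: without it, the subsequent insertion of $\Oh(k)$ intermediate nodes per edge would only give a bound of $\Oh(k \cdot |V(T)|)$, which could be arbitrarily larger than $\Oh(k \cdot |V(G)|)$. Verifying that the greedy contraction rule terminates with $|V(T)| = \Oh(|V(G)|)$, by a charging scheme that attributes each surviving internal node either to a branching vertex of $T$ or to a vertex of $G$ that is first introduced at that node, is the one mildly nontrivial ingredient of the argument; everything else is routine bookkeeping on bags.
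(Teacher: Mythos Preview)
The paper does not prove this lemma at all: it is stated with a citation to \cite[Lemma 7.4]{cyganParameterizedAlgorithms2015} and used as a black box in the algorithm of \cref{thm:main_theorem_algorithm}. Your sketch is the standard textbook argument (contract redundant bags to make $|V(T)|=\Oh(|V(G)|)$, then binarize and interpolate introduce/forget chains), which is precisely what the cited reference does, so there is nothing to compare.
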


\subsection{Reductions}

We now introduce the type of reduction we utilize.

\begin{definition}[Parameter-Preserving Reduction]
    A parameter-preserving reduction from parameterized problem $\mathcal{A}$ to parameterized problem $\mathcal{B}$ is a polynomial-time algorithm that, when given an instance $(I,k)$ of problem $\mathcal{A}$, outputs an equivalent instance $(I',k')$ of problem $\mathcal{B}$, where $k' = k + \Oh(1)$.

    In particular, if the parameter of problems $\mathcal{A}$ and $\mathcal{B}$ is pathwidth, we refer to such a reduction as a pathwidth-preserving reduction, and we always assume that the input instance is provided with a path decomposition of width $\pw$, and then the reduction has to produce a path decomposition of width $\pw + \Oh(1)$ of the output instance in polynomial-time.\footnote{This definition is similar to \cite[Definition 3.1]{fockeTightComplexityBoundsLowerBound}, although we use many-one reductions instead of general Turing reductions.}
\end{definition}

We now state a simple observation about these special types of reductions.
\begin{observation}
    Assume that there is a parameter-preserving reduction from $\mathcal{A}$ with parameter $k$ to parameterized problem $\mathcal{B}$ with parameter $k'$.
    If there is an algorithm that can solve instances $I'$ of $\mathcal{B}$ in time $c^{k'} \cdot |I'|^{\Oh(1)}$ for some constant $c$, then one can also solve instances $I$ of $\mathcal{A}$ in time $c^{k} \cdot |I|^{\Oh(1)}$.
\end{observation}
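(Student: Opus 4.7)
The plan is to invoke the reduction and then the assumed algorithm for $\mathcal{B}$ as a black box, carefully tracking the resulting running time. Concretely, given an input instance $(I,k)$ of $\mathcal{A}$, I would first run the parameter-preserving reduction on it in time polynomial in $|I|$ to obtain an equivalent instance $(I',k')$ of $\mathcal{B}$. By the definition of a parameter-preserving reduction, we have $k' = k + \Oh(1)$, and since the reduction runs in polynomial time, the output size satisfies $|I'| \leq |I|^{\Oh(1)}$. Then I would call the assumed algorithm for $\mathcal{B}$ on $(I',k')$ and report its answer; correctness is immediate from the equivalence of the two instances.

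For the time bound, the reduction contributes a polynomial overhead, and the call to the algorithm for $\mathcal{B}$ runs in time
\begin{equation*}
    c^{k'} \cdot |I'|^{\Oh(1)} \;=\; c^{k + \Oh(1)} \cdot \left(|I|^{\Oh(1)}\right)^{\Oh(1)} \;=\; c^{\Oh(1)} \cdot c^k \cdot |I|^{\Oh(1)} \;=\; c^k \cdot |I|^{\Oh(1)},
\end{equation*}
where the factor $c^{\Oh(1)}$ is a constant since $c$ is fixed, and can thus be absorbed into the polynomial factor. Summing the two contributions still gives $c^k \cdot |I|^{\Oh(1)}$.

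There is essentially no technical obstacle here: the statement is a routine bookkeeping observation whose only subtlety is that one must use the bound $|I'| = |I|^{\Oh(1)}$ (rather than treating $|I'|$ as independent of $|I|$) when substituting into $|I'|^{\Oh(1)}$, and one must verify that the additive slack $\Oh(1)$ in $k'$ only inflates the base-$c$ factor by a constant. Both facts are immediate from the definition of a parameter-preserving reduction, so the proof should be at most a few lines.
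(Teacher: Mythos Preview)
Your proposal is correct and is exactly the routine verification one expects. The paper actually states this as an observation without proof, so there is nothing to compare against beyond noting that your argument is precisely the intended justification.
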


In particular, if some conjecture implies that there is no algorithm that can solve problem $\mathcal{A}$ in time $c^k \cdot |I|^{\Oh(1)}$,
then the conjecture implies that there is no algorithm that can solve problem $\mathcal{B}$ in time $c^{k'} \cdot |I|^{\Oh{(1)}}$.

\subsection{States}

We now describe sets of states which we utilize both for the upper and lower bound.
Note that these types of state sets are standard in the literature for $(\sigma,\rho)$-dominating set problems.

\defStates*{}

Generally, each vertex in some (partial) solution can be associated with one of these states.
The state of a vertex has the symbol $\sigma$ if the vertex is selected, and the symbol $\rho$ otherwise.
The subscript represents how many selected neighbors the vertex has.
For a set $\tau$, the state $\tau_{\tauLargestPartial}$ is a shorthand for all the states $\tau_{\tauLargestPartial}, \tau_{\tauLargestPartial+1},\dots$, which behave identically (as a vertex with any those states is either satisfied or violated, depending on whether $\tau$ is finite or simple cofinite).
Note that for a state $\sigma$ and $\rho$ refer to the selection status of the vertex, and not to the sets we use for the problem.
So even if the sets $\sigma$ and $\rho$ are identical, the states $\sigma_c$ and $\rho_c$ are still different.

Also observe that $|\allStatesPartial| = \rhoLargestPartial + \sigLargestPartial + 2$, and hence the running times of our algorithm and lower bounds in \cref{thm:main_theorem_algorithm,thm:main_theorem_lower_bound_partial_version} can equivalently be expressed in terms of $|\allStatesPartial|$ rather than $\rhoLargestPartial$ and $\sigLargestPartial$.
 
\section{The Algorithm}
\label{sec:algorithms}
In this section, we cover the upper bounds of the paper.
We begin by providing a simple observation about the problem.

\begin{observation}
    \label{obs:zero_in_rho_trivial}
    If $0 \in \rho$, then $\emptyset$ is a set of minimum size that violates no vertices.
\end{observation}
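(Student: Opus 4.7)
The plan is to verify the two claims in the observation directly: first that $S = \emptyset$ satisfies every vertex when $0 \in \rho$, and second that no smaller set exists. For the first part, I would take an arbitrary vertex $v \in V(G)$ and note that $v \notin S$ since $S$ is empty, so the happiness condition to check is whether $|N(v) \cap S| \in \rho$. Since $N(v) \cap \emptyset = \emptyset$, we have $|N(v) \cap S| = 0$, and by the hypothesis $0 \in \rho$, this value lies in $\rho$, so $v$ is happy under $S$. Because $v$ was arbitrary, no vertex is violated.

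For the second part, I would simply observe that $|\emptyset| = 0$ is the smallest possible cardinality of any subset of $V(G)$, so $\emptyset$ trivially attains the minimum size among all sets violating no vertex. No obstacle is anticipated: the statement follows immediately from unwinding the definition of happy relative to $(\sigma,\rho)$ in the case where the candidate set is empty, and the assumption $0 \in \rho$ is exactly what is needed to discharge the unselected-vertex condition.
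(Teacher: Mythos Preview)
Your proposal is correct and matches the obvious argument; the paper in fact states this as a bare observation without proof, so your direct verification from the definition of happy vertices is exactly what is intended.
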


Hence, all cases which are interesting for the problem \minParGenDomSet{} fulfill $\min \rho \geq 1$.
Before giving the details of the algorithm we use to handle the interesting cases, we define a special addition operation, and explain how we can utilize fast convolution algorithms by van Rooij~\cite[Theorem 2]{vanrooijGenericConvolutionAlgorithm2021} in our setting.

For any two $\rho$-states $\rho_x$ and $\rho_y$ we define $\rho_x \oplus \rho_y = \rho_{\min(x + y, \rhoLargestPartial)}$, similarly, for any two $\sigma$-states $\sigma_x$ and $\sigma_y$ we define $\sigma_x \oplus \sigma_y = \sigma_{\min (x + y, \sigLargestPartial)}$.
The operation $\oplus$ is not defined if the states are not both of the same type.
The operator $\oplus$ to join states incorporates the natural behavior that two states should only be joined whenever they are either both a $\sigma$-state or both a $\rho$-state, and if that is fulfilled, the sum should use capped addition to compute the resulting state.

Note that with these operators, the sets $\rhoStatesParital$ and $\sigStatesPartial$ can be seen as independent copies of the set $\range[0]{\rhoLargestPartial}$, respectively $\range[0]{\sigLargestPartial}$, and the set $\allStatesPartial$ as the disjoint union of these independent copies.
The framework of van Rooij \cite{vanrooijGenericConvolutionAlgorithm2021} for fast generic convolutions considers convolution operations over various domains and operations $\oplus$. In particular, ``addition with maximum value'' over disjoint unions of integers explicitly appears in this framework, covering our operation $\oplus$ over $\allStatesPartial$. Therefore, we obtain the following by applying \cite[Theorem 2]{vanrooijGenericConvolutionAlgorithm2021}.

\begin{lemma}[Follows from {\cite[Theorem 2]{vanrooijGenericConvolutionAlgorithm2021}, \cite[Main Theorem I.1]{schepperFasterHigherEasier}}]
    \label{thm:fast_convolution}
    Let $\sigma, \rho$ be fixed nonempty finite or simple cofinite sets and $\allStatesPartial$ the corresponding set of states for the partial problem.
    There is an algorithm that takes explicitly given functions
    $f,g: \allStatesPartial^k \times \range[0]{n} \rightarrow \{0,1\}$ as input and computes function
    \begin{align*}
        h(c, \kappa) = \sum_{c_f \oplus c_g = c} \sum_{\kappa_f + \kappa_g = \kappa} f(c_f,\kappa_f) g(c_g, \kappa_g),
    \end{align*}
    in $|\allStatesPartial|^{k} \cdot (n + k)^{\Oh(1)}$ arithmetic operations.
\end{lemma}

Note that the $O(1)$ term in the running time of \cref{thm:fast_convolution} is allowed to depend on $\sigma, \rho$ since we treat these sets as fixed.
Now, we are ready to state our dynamic programming algorithm.
\mtheoremUpperBoundPartial*
\begin{proof}
    This dynamic program is fairly standard, and in particular resembles the one used by Focke et al.~\cite{fockeTightComplexityBoundsUpperBound}, Greilhuber et al.~\cite{greilhuberResidueDominationBoundedTreewidth2025}, and utilizes ideas of van Rooij~\cite{vanrooijGenericConvolutionAlgorithm2021}.

    \subparagraph*{The algorithm.}

    We begin by transforming the given tree decomposition into a nice  tree decomposition with the same width in polynomial time
    using \cref{thm:transform_into_nice_decomposition}.
    So, we shall assume that the used tree decomposition is nice for the remainder of the proof.

    Given a node $t$ of the tree decomposition, we let $V_t$ be the set of vertices introduced in the bags of the nodes under and including $t$, and we denote the bag of $t$ by $X_t$.
    For a node $t$ of the tree decomposition and a vector $\vec{u} \in \allStatesPartial^{X_t}$ (where we index positions of $\vec{u} $ by vertices of $X_t$), we say that $\vec{u} $ has a matching partial solution of size $k$ violating $\ell$ vertices if there is a set $S \subseteq V_t$ such that:
    \begin{itemize}
        \item $|S \setminus X_t| = k$, and
        \item $S$ violates exactly $\ell$ vertices of $V_t \setminus X_t$, and
        \item for any $v \in X_t$, $\vec{u}[v]$ is a $\sigma$-state if and only if $v \in S$, and when $\vec{u}[v] = \tau_c$, we have that $c = \min(x, \tauLargestPartial)$, where $x$ is the number of selected neighbors that vertex $v$ has in $V_t \setminus X_t$.
    \end{itemize}

    For any node $t$ of the tree decomposition and any $k, \ell \in \range[0]{|V(G)|}$, we want $T_t[k,\ell]$ to contain exactly those vectors of $\allStatesPartial^{X_t}$ which have a matching partial solution of size $k$ violating $\ell$ vertices.
    We proceed in a bottom-up fashion along the rooted nice tree decomposition, and make different computations depending on which type of vertex the considered node is.

    \begin{description}
        \item[Leaf nodes:] In leaf nodes $t$, we set $T_t[0,0] = \{\varepsilon\}$ (where $\varepsilon$ denotes the empty string) and $T_t[x,y] = \emptyset$ if $(x,y) \not = (0,0)$.

        \item[Introduce nodes:] Let $t$ be a introduce node and $t'$ be the unique child of $t$. Moreover, let $\{v\} = X_t \setminus X_{t'}$.
              For a vector $\vec{u} \in \allStatesPartial^{X_{t'}}$, for $\tau \in \{\sigma,\rho\}$, we define the $\tau$-extension as the unique string $\vec{u}_\tau$ indexed by $X_t$ with the property that $\vec{u}_\tau[X_{t'}] = \vec{u}[X_{t'}]$ and $\vec{u}_\tau[v] = \tau_0$.
              Set $T_t[k,\ell] = \{\vec{u}_\sigma \mid \vec{u} \in T_{t'}[k,\ell]\} \cup \{\vec{u}_\rho \mid \vec{u} \in T_{t'}[k,\ell]\}$.

              The correctness of this computation is justified by the fact that the new vertex is either selected or not, and that it has no neighbors which are forgotten, and that it is itself not forgotten yet, so even it if is selected, it does not change the state of the other vertices of the bag.

        \item[Join nodes:]
              Let $t$ be a join node with children $t_1$ and $t_2$.
              We now intend to use \cref{thm:fast_convolution} to quickly handle the node.
              For this purpose, for any $k, \ell \in \range[0]{|V(G)|}$, and $\vec{u} \in \allStatesPartial^{X_{t}}$ compute the functions
              \begin{align*}
                  f^{\ell}_1(\vec{u},k) = \begin{cases}
                                              1 & \text{if $\vec{u} \in T_{t_1}[k,\ell]$}, \\
                                              0 & \text{otherwise,}
                                          \end{cases} \quad \text{and} \quad
                  f^{\ell}_2(\vec{u},k) = \begin{cases}
                                              1 & \text{if $\vec{u} \in T_{t_2}[k,\ell]$}, \\
                                              0 & \text{otherwise.}
                                          \end{cases}
              \end{align*}

              Then, for any non-negative $\ell_1 + \ell_2 = \ell \in \range[0]{|V(G)|}$, $k \in \range[0]{|V(G)|}$, and $\vec{u} \in \allStatesPartial^{X_{t}}$, compute
              \begin{align*}
                  S_t^{\ell_1,\ell_2}[\vec{u},k] = \sum_{\vec{u}_1 \oplus \vec{u}_2 = \vec{u}} \sum_{k_1 + k_2 = k} f_1^{\ell_1}(\vec{u}_1,k_1) \cdot f_2^{\ell_2}(\vec{u}_2,k_2)
              \end{align*}
              using \cref{thm:fast_convolution}.
              Finally, compute
              \[T_t[k,\ell] = \{\vec{u} \in \allStatesPartial^{X_t} \mid \exists \ell_1, \ell_2: \ell_1 + \ell_2 = \ell \land S_t^{\ell_1,\ell_2}[\vec{u},k] > 0\}.\]

              In the convolution, the outer sum ensures that only strings $\vec{u}_1, \vec{u}_2 $ which actually join to vector $\vec{u}$ are considered.
              The inner sum ensures that the values $k_1$ and $k_2$, which represent the solution size in $V_{t_1}$ and $V_{t_2}$, properly sum up to $k$.
              Finally, the product inside the inner sum is only non-zero if there are actually matching strings in the tables $T_{t_1}[k_1,\ell_1]$ and $T_{t_1}[k_2,\ell_2]$.
              The convolution is computed separately for any number of violations $\ell_1, \ell_2$ which sum up to $\ell$.
              Then finally, the table entry $T_t[k,\ell]$ is set to contain all those strings for which there exist values $\ell_1, \ell_2$ that sum up to $\ell$, such that the convolution for the string had a non-zero value, i.e., such that there are two strings which can be joined to obtain $\vec{u}$, and such that their solution sizes add up to $k$.

              Observe that no overcounting occurs, because the states only keep track of the number of selected forgotten vertices, and we have that $(V_{t_1} \setminus X_t) \cap (V_{t_2} \setminus X_t) = \emptyset$.
              Similarly, we do not overcount the number of violations or solution size because we only keep track of these quantities for forgotten vertices.

        \item[Forget nodes:]
              Let $t$ be a forget node and $t'$ the unique child of $t$.
              Moreover, let $\{v\} = X_{t'} \setminus X_t$.
              For a string $\vec{u} \in \allStatesPartial^{X_{t'}}$ we define the $\sigma$-drop of $\vec{u}$ as the unique string $\vec{u}_\sigma$ indexed over $X_{t}$  such that:
              \begin{itemize}
                  \item For any $w \in X_{t} \setminus N(v)$ we have $\vec{u}_\sigma[w] = \vec{u}[w]$, and
                  \item for any $w \in X_{t} \cap N(v)$, let $\vec{u}[w] = \tau_c$. Then, $\vec{u}_\sigma[w] = \tau_{c} \oplus \tau_1$.
              \end{itemize}
              Moreover, for a string $\vec{u} \in \allStatesPartial^{X_{t'}}$, we define $h(\vec{u})$ to be the number of $\sigma$-states in $\vec{u}$ of vertices which are neighbors of $v$.

              Then, compute
              \begin{align*}
                  T_t[k,\ell] = & \{\vec{u}[X_{t'}] \mid \vec{u} \in T_t[k,\ell], \vec{u}[v] = \rho_c, c + h(\vec{u}) \in \rho\}                 \\
                                & \cup \{\vec{u}_\sigma \mid \vec{u} \in T_t[k-1,\ell], \vec{u}[v] = \sigma_c, c + h(\vec{u}) \in \sigma\}       \\
                                & \cup \{\vec{u}[X_{t'}] \mid \vec{u} \in T_t[k,\ell-1], \vec{u}[v] = \rho_c, c + h(\vec{u})\notin \rho\}        \\
                                & \cup \{\vec{u}_\sigma \mid \vec{u} \in T_t[k-1,\ell-1], \vec{u}[v] = \sigma_c, c + h(\vec{u}) \notin \sigma\},
              \end{align*}
              where we assume that table entries which are out of bounds contain the empty set.

              This computation essentially accounts for all four possible cases of $v$ being selected, and $v$ being violated or not violated.
              Observe that the state of $v$ does not immediately tell us whether the vertex is violated or not for any string $\vec{u} \in \allStatesPartial^{X_{t'}}$, because it only describes the number of forgotten selected neighbors of $v$, in particular, to figure out how many selected neighbors $v$ actually has in a partial solution, we need to add $h(\vec{u})$ to this amount.
              Then, we can easily deduce whether vertex $v$ is satisfied and or selected, and adjust the tables based on this information.
              The last aspect we need to take care of is that when $v$ is selected and forgotten, this naturally should increase the number of selected neighbors of every neighbor that $v$ has in the bag.
              Hence, we use the $\sigma$-drop strings in this case.
    \end{description}

    At the root node $r$, we can simply extract the result for any value $k,\ell$ by checking whether $T_r[k,\ell] = \emptyset$ or $T_r[k,\ell] = \{\varepsilon\}$.

    \subparagraph*{Correctness.}
    Within each type of node, an informal argument for the correctness was provided.
    Formally, the correctness follows by induction.

    \subparagraph*{Running time.}
    We can clearly handle the leaf nodes in polynomial-time $|G|^{\Oh(1)}$.
    For introduce nodes, we also only need to iterate over $T_t[k,\ell]$ for any $k,\ell$ and perform a simple operation.
    Hence, we can handle these nodes in time $|\allStatesPartial|^{\tw} \cdot |G|^{\Oh(1)}.$
    For forget nodes, we similarly only need to iterate over four different tables $T_t[k,\ell]$, and perform a simple polynomial-time computation for each string.
    Hence, these can be computed in time $|\allStatesPartial|^{\tw} \cdot |G|^{\Oh(1)}$ as well.
    At the root node, we can read the result for any $k,\ell$ in time $|G|^{\Oh{(1)}}$.
    The overall number of nodes of our nice tree decomposition is also in $|G|^{\Oh(1)}$.

    All that remains is arguing that we can also handle the join-nodes quickly enough.
    There, we initially compute the functions $f_1^\ell$ and $f_2^\ell$ for $|V(G)|+1$ values of $\ell$.
    To compute these, we initially set all values of the functions to zero, and then iterate over all strings of $T_{t_1}[k,\ell]$, $T_{t_2}[k,\ell]$, respectively, to set some entries to one.
    This needs time $|\allStatesPartial|^{\tw} \cdot |G|^{\Oh(1)}$.

    Then, for each $\ell_1 + \ell_2$ summing up to $\ell$, we compute a table $S_t^{\ell_1,\ell_2}$.
    For this purpose, we utilize \cref{thm:fast_convolution} which allows us to compute this table in $|\allStatesPartial|^{\tw} \cdot |G|^{\Oh(1)}$ arithmetic operations,
    and as each arithmetic operation is on numbers with polynomial bit-length, also in time $|\allStatesPartial|^{\tw} \cdot |G|^{\Oh(1)}$  overall.
    Finally, recovering $T_t[k,\ell]$ from the tables $S_t^{\ell_1,\ell_2}$ can be done by iterating over at most $|\allStatesPartial|^{\tw + 1}$ strings, and for each string, checking whether there are $\ell_1,\ell_2$ that sum up to $\ell$ and whether $S_t^{\ell_1,\ell_2}[s,k] > 0$ takes polynomial time.

    Overall, we obtain that all nodes can be handled in time $|\allStatesPartial|^{\tw} \cdot |G|^{\Oh(1)}$, concluding the proof.
\end{proof}

\section{Intermediate Lower Bounds}
\label{sec:intermediate_lower_bound}
We now proceed to the tight lower bounds for \minParGenDomSet{}.
Our general strategy is the same as a long line of work in the domain of tight lower bounds for dynamic programming algorithms, and was previously used by \cite{curticapeanTightConditionalLower2016,fockeTightComplexityBoundsLowerBound,greilhuberResidueDominationBoundedTreewidth2025,marxDegreesGapsTight2021,marxAntifactorFPTParameterized2025}.
The idea is to first reduce to a problem which lies in between the final problem one wants to show hardness to, and the input problem (which will usually be some variant of \problem{SAT}, or in our case, a constraint satisfaction problem).

\subsection{The PWSETH and Constraint Satisfaction Problems}

There is a discrepancy between the lower bound provided by the \pwseth{}, which is of the form $(2 - \varepsilon)^{\pw}$ and the lower bounds we seek, which are of the form $(|\allStatesPartial| - \varepsilon)^{\pw}$.
For precisely such situations, Lampis shows that the \pwseth{} is actually equivalent to showing that no faster algorithm for a family of constraint satisfaction problems exists.
As the lower bound provided by these constraint satisfaction problems have a more suitable form for us, we will reduce from these problems instead of going from \kSAT[3]{}.

Lampis \cite{lampisPrimalPathwidthSETH2025} considers the \CSP{} problem, here, the input consists of $n$ variables $x_1,\dots,x_n$, and $m$ constraints $C_1,\dots,C_m$.
Each variable can take values from $\range{B}$, and each constraint $C_i$ (for $i \in \range{m}$) is a pair $(\scp{C_i},\acc{C_i})$, where $\scp{C_i}$ is an ordered pair of two variables of the instance, and $\acc{C_i} \subseteq \range{B} \times \range{B}$ a set of accepted assignments to the variables.
The goal is to decide whether there exists a function $\delta: \{x_1,\dots,x_n\} \rightarrow \range{B}$ such that, for any constraint $C_i$ with $\scp{C_i} = (x_\lambda,x_{\lambda'})$, we have $(\delta(x_\lambda),\delta(x_{\lambda'})) \in \acc{C_i}$.
The primal graph of an instance $I$ is the graph where the vertex set is $\{x_1,\dots,x_n\}$, and two vertices are connected by an edge if and only if both appear in the same constraint.

\begin{theorem}[Follows from {\cite[Theorem 3.2]{lampisPrimalPathwidthSETH2025}}]\label{thm:lampis}
    Let $B \geq 3$ be an integer.
    The \pwseth{} is true if and only if, for all $\varepsilon > 0$, there is no algorithm for \CSP{} that takes an instance $I$ of \CSP{} together with a path decomposition of width $\pw$ of its primal graph as input and decides the problem in time $(B - \varepsilon)^\pw{} \cdot |I|^{\Oh(1)}$.
\end{theorem}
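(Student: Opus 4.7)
The plan is to establish both directions of the equivalence by the block-encoding technique of Lampis~\cite[Theorem 3.2]{lampisPrimalPathwidthSETH2025}, which translates between \kSAT[3]{} and \CSP{} while preserving pathwidth tightly enough to match the running-time bases.

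For the direction showing that a fast \CSP{} algorithm refutes PWSETH, I would reduce \kSAT[3]{} to \CSP{} as follows. Given a \kSAT[3]{} instance with a path decomposition of width $\pw$, walk along the decomposition and group the Boolean variables into consecutive blocks of size $t \approx \log_2 B$ aligned with the introduce/forget operations, so that each block becomes a single domain-$B$ variable. Each clause, which may touch up to three distinct blocks, is replaced by a small collection of binary CSP constraints together with $\Oh(1)$ auxiliary ``pair-encoding'' variables that represent pairs of adjacent blocks; this keeps the primal path decomposition of the resulting \CSP{} instance at width $\pw/t + \Oh(1)$. Feeding this into a hypothetical $(B-\varepsilon)^\pw \cdot |I|^{\Oh(1)}$ algorithm then yields a $(B-\varepsilon)^{\pw/t + \Oh(1)}$-time algorithm for \kSAT[3]{}, and a short calculation shows that this can be rewritten as $(2-\varepsilon')^\pw \cdot |I|^{\Oh(1)}$ for some $\varepsilon' > 0$, contradicting PWSETH.

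For the converse, suppose PWSETH fails, so \kSAT[3]{} admits a $(2-\varepsilon)^\pw \cdot |I|^{\Oh(1)}$ algorithm. I would reduce \CSP{} to \kSAT[3]{} by binary-encoding every domain-$B$ variable as $\lceil \log_2 B \rceil$ fresh Boolean variables and replacing each binary constraint by a constant-size 3-CNF gadget via a Tseitin-style transformation (plus a constant number of auxiliary Boolean variables). Expanding each bag of the input path decomposition accordingly yields a path decomposition of the 3-SAT instance of width $\lceil \log_2 B \rceil \cdot \pw + \Oh(1)$. Running the assumed \kSAT[3]{} algorithm on this instance solves the original \CSP{} in time $(2-\varepsilon)^{\lceil \log_2 B \rceil \cdot \pw + \Oh(1)} \cdot |I|^{\Oh(1)}$, which for a suitable $\varepsilon' > 0$ depending on $\varepsilon$ and $B$ is bounded by $(B-\varepsilon')^\pw \cdot |I|^{\Oh(1)}$, refuting the right-hand side of the equivalence.

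The main obstacle in both directions is calibrating the pathwidth blow-up precisely: any multiplicative loss beyond the ideal factor of $\log_2 B$ in one direction, or its reciprocal in the other, would destroy the exact correspondence $(B-\varepsilon)^\pw \leftrightarrow (2-\varepsilon')^\pw$. Lampis resolves this by working with nice path decompositions in which introduce/forget operations can be partitioned cleanly into blocks of the appropriate size, and by absorbing the small arithmetic mismatch that arises when $B$ is not a power of two into the choice of $\varepsilon'$. Once this calibration is in place, the theorem is immediate by instantiating Lampis's result at the fixed value of $B$.
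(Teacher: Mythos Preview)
The paper does not give a proof of this statement; it is quoted directly as \cite[Theorem 3.2]{lampisPrimalPathwidthSETH2025}, so there is no argument in the paper to compare against. Your final sentence already matches what the paper does: instantiate Lampis's theorem at the given $B$.

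That said, your sketch of the converse direction contains a genuine error rather than a mere calibration issue. You assert that $(2-\varepsilon)^{\lceil \log_2 B \rceil \cdot \pw}$ is bounded by $(B-\varepsilon')^{\pw}$ for some $\varepsilon' > 0$, but this is false whenever $B$ is not a power of two and $\varepsilon$ is small. For $B = 3$ and $\varepsilon = 0.01$ one gets $(1.99)^2 \approx 3.96 > 3$, so no positive $\varepsilon'$ exists. The rounding $\lceil \log_2 B \rceil$ versus $\log_2 B$ is a \emph{multiplicative} loss in the exponent, bounded away from $1$, and hence cannot be ``absorbed into the choice of $\varepsilon'$'' as you claim in the last paragraph. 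A correct version of this direction must amortize the rounding, for instance by grouping $q$ domain-$B$ variables into one super-variable of domain $B^q$ and encoding that with $\lceil q \log_2 B \rceil$ bits; then the per-variable overhead $\lceil q \log_2 B \rceil / q - \log_2 B$ tends to $0$ as $q \to \infty$, and for $q$ large enough the inequality $(2-\varepsilon)^{\lceil q \log_2 B \rceil / q} < B$ does hold. This is exactly the kind of bookkeeping that \cref{thm:choosing_alphabet_size} carries out later in the paper for a different pair of bases. Your forward direction is fine.
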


Lampis \cite{lampisPrimalPathwidthSETH2025} also provides some additional equivalent problem variants, such as a weighted variant, but we do not require these variants for our lower bounds.

Suppose that in problem $P_1$ every variable has a domain of size $B_1$, and we already have a lower bound ruling out
$(B_1 - \varepsilon)^\pw{} \cdot |I|^{\Oh(1)}$ time algorithms for problem $P_1$.
Let $P_2$ be a problem where every vertex has $B_2$ states in the natural dynamic programming algorithm operating on path decompositions, and our goal is to
rule out $(B_2 - \varepsilon)^\pw{} \cdot |I|^{\Oh(1)}$ time algorithms for problem $P_2$.
We can try to prove this lower bound by a reduction from problem $P_1$ to problem $P_2$.
If $B_1=B_2$, then we can hope for a fairly direct reduction: each vertex of problem $P_2$ represents a single variable of problem $P_1$, with each of the $B_2$ states representing one of the $B_1$ possible domain values.

Of course, if $B_1\neq B_2$, then we cannot hope for such a direct reduction. In such a case, one can use a grouping idea (already present in the work of Lokshtanov, Marx, and Saurabh~\cite{lokshtanovKnownAlgorithmsGraphs2018}). Instead of a one-to-one correspondence, we want to represent a group of $q$ variables of problem $P_1$ by a group of $g$ variables of problem $P_2$, where $q$ and $g$ are constants depending only on $\varepsilon>0$, $B_1$, and $B_2$.
This may require the construction of larger gadgets, as now each gadget needs to operate on some number of groups of $g$ variables.
Furthermore, the values of $q$ and $g$ needs to be carefully selected.
We want to represent each of the $B_1^q$ possible values of $q$ variables in problem $P_1$ by a ``codeword'': a vector of length $g$ consisting of possible states of $g$ variables in problem $P_2$.
For this to be possible, we need $B_2^g \ge B_1^q$, or in other words, $g$ needs to be an integer at least $q\cdot\log_{B_2} B_1$.
Moreover, $g$ should not be significantly larger than $q \cdot \log_{B_2} B_1$, because we want to argue  that a $(B_2-\varepsilon)^g$ factor in the running time is strictly less than $(B_1-\varepsilon')^q$ for some $\varepsilon'>0$.

\cref{thm:lampis}
can sometimes be used to avoid this grouping idea, as this result allows us to start the reduction from a problem where we can choose the domain size $B$ arbitrarily.
Unfortunately, while being able to choose the initial domain size can streamline proofs significantly, it does not always alleviate the need for grouping.
The problem is that it may not easily be possible to create gadgets that operate on $x$ vertices and that can handle all $B_2^x$ possible states of these vertices: some parity or weight issue might prevent us from being able to handle every possible combination of states.
However, if we are able to handle a large (say, constant) fraction of the $B_2^g$ combinations on $g$ vertices, then this is a negligible loss, and we can still use the grouping idea to represent $q$ variables of problem $P_1$ by codewords of length $g$ of states from problem $P_2$ with a value of $g$ that is slightly larger than $q\cdot\log_{B_2} B_1$.

The goal of the following lemma is to find the appropriate constants for the reduction.
For convenience, we will be using a slightly different setting compared to what is described above.
If $B$ is the domain size of the target problem, then we want to reduce from a problem $P_1$ whose domain size is $B^q$, in such a way that each variable of the source problem is represented by a group of $g$ variables in the target problem.
We want to find $g$ such that it is sufficiently large that it is possible to find $B^q$ codewords of length $g$, but it is sufficiently small that solving problem $P_2$ in time $(B-\varepsilon)^{g \cdot \pw}$ implies solving problem $P_1$ in time $(B^q-\varepsilon')^{\pw}$ for some $\varepsilon' > 0$.
Note that the factor of $g$ in the exponent stems from the fact that our reductions blow up the pathwidth by a factor of $g$, that is, given a path decomposition of width $\pw$ of the primal graph, they output a graph with a path decomposition of width roughly $g \cdot \pw$.
The lemma works even if not every possible group size $g$ is allowed (e.g., to handle parity issues) and when only a $f(g)$ fraction of the codewords can be used.
Note that these types of constants and calculations are used in the literature for the purpose of transforming the lower bound provided by the Strong Exponential Time Hypothesis to the correct lower bound for the considered problem \cite{fockeTightComplexityBoundsLowerBound, marxDegreesGapsTight2021Arxiv}.

\begin{lemma}
    \label{thm:choosing_alphabet_size}
    Let $B \geq 2$ be an integer, let $f: \nat \rightarrow \mathbb{Z}_{\geq 1}$ be a function with $f \in 2^{o(n)}$, and $\epsilon > 0$ be a real.
    Moreover, let $X \subseteq \nat$ be an infinite set.
    Then, there are positive integers $q$, $g$ with $g \in X$ and a real $\varepsilon' > 0$ with $B^q \leq \frac{|B^g|}{f(g)}$ and $(B - \epsilon)^g \leq B^q - \varepsilon'$.
\end{lemma}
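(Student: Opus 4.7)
\subparagraph*{Proof plan for \cref{thm:choosing_alphabet_size}.}

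The plan is to first rewrite the two required inequalities as constraints on $q$ in terms of $g$, show that for every sufficiently large $g$ a valid integer $q$ exists in the resulting interval, and finally invoke the infiniteness of $X$ to pick $g$. Define $\delta = 1 - \log_B(B-\epsilon)$, which is a strictly positive constant since $B - \epsilon < B$ (we may clearly assume $0 < \epsilon < B$, as the statement becomes trivial otherwise, for example by replacing $\epsilon$ with $\min(\epsilon, B/2)$). Condition~(1) is equivalent to $q \leq g - \log_B f(g)$, and for condition~(2) it suffices to choose $q$ as an integer strictly larger than $g \log_B(B-\epsilon) = g(1-\delta)$ and then simply set $\varepsilon' = B^q - (B-\epsilon)^g$, which is strictly positive.

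The core of the argument is therefore to exhibit, for some $g \in X$, an integer $q$ in the half-open interval
\begin{equation*}
    I_g = \bigl(g(1-\delta),\ g - \log_B f(g)\bigr].
\end{equation*}
The length of this interval is $g\delta - \log_B f(g)$. Since $f \in 2^{o(n)}$ we have $\log_B f(g) = o(g)$, so $g\delta - \log_B f(g) \to \infty$ as $g \to \infty$. In particular, there exists a threshold $g_0$ such that for every $g \geq g_0$ the interval $I_g$ has length greater than $1$ and a positive left endpoint; any interval of length greater than $1$ contains an integer, so we may pick a positive integer $q \in I_g$.

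Because $X$ is infinite, we can choose $g \in X$ with $g \geq g_0$, and then choose $q$ as any integer in $I_g$. By construction $q$ and $g$ are positive, condition~(1) holds by the upper bound of $I_g$, and condition~(2) holds with $\varepsilon' = B^q - (B-\epsilon)^g > 0$ because $q > g(1-\delta) = g \log_B(B-\epsilon)$. This completes the proof.

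The only mildly delicate point is ensuring that the interval $I_g$ eventually has length greater than $1$; this is immediate from $\log_B f(g) = o(g)$ and $\delta > 0$, and does not require anything about the structure of $X$ beyond its infiniteness. No obstacle of substance arises.
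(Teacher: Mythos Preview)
Your proof is correct and follows essentially the same strategy as the paper: use $\log_B f(g)=o(g)$ to show that for all sufficiently large $g$ a positive integer $q$ satisfying both constraints exists, then invoke the infiniteness of $X$. Your interval formulation is slightly more direct than the paper's version, which fixes $q=\lfloor g-\log_B f(g)\rfloor$ upfront, introduces an auxiliary $0<\varepsilon''<\varepsilon$ with $\alpha=\log_{B-\varepsilon}(B-\varepsilon'')>1$, and then verifies $g\le \alpha q$ for large $g$; both routes rest on the same asymptotic observation.
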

\begin{proof}
    Proceed as follows.
    \begin{enumerate}
        \item Choose some $0 < \varepsilon'' < \varepsilon$.
        \item Set $\alpha = \log_{B-\varepsilon}(B-\varepsilon'')$. Because $\varepsilon'' < \varepsilon$ we have $B-\varepsilon < B- \varepsilon''$ and hence $\alpha > 1$.
        \item Choose integer $\blocksPerVertexGadget \geq 1$ large enough such that $\blocksPerVertexGadget \leq \alpha \lfloor \blocksPerVertexGadget - \log_{B}(f(g)) \rfloor$ and $g \in X$.
        \item Fix $q = \lfloor \blocksPerVertexGadget - \log_{B}(f(g)) \rfloor$. Observe that we have $\blocksPerVertexGadget \leq \alpha \cdot q$.
    \end{enumerate}

    The only step that might require some additional explanation is the third step.
    We show that we can indeed always choose $\blocksPerVertexGadget \geq 1$ large enough such that $\blocksPerVertexGadget \leq \alpha \lfloor \blocksPerVertexGadget - \log_{B}(f(g)) \rfloor$.
    In particular, we have that $\log \circ f$ is in $o(n)$, and thus, also function $h(x) = \alpha \cdot \log_{B}(f(x)) + \alpha$ is in $o(n)$, as $\alpha$ is just a constant.
    Set $\beta = \alpha - 1$, which is a positive real.
    Then, there exists a constant $x_0 > 0$ such that, for all $x > x_0$, we have $h(x) < \beta \cdot x$.
    Set $g$ to an integer larger than $x_0$ that is in the set $X$.

    Then, we trivially have
    $g = \alpha \cdot g - \beta \cdot g$, and we also have $h(g) < \beta \cdot g$.
    This implies that
    \begin{align*}
        g  \leq \alpha \cdot g - h(g)
         & = \alpha \cdot g - \alpha \log_B(f(g)) - \alpha     \\
         & = \alpha \cdot (g - \log_B(f(g)) - 1)               \\
         & \leq \alpha \cdot \lfloor g - \log_B(f(g)) \rfloor,
    \end{align*}
    which is what we wanted to show.

    By our choice of $q$ and $g$, we have $B^q = B^{\lfloor \blocksPerVertexGadget - \log_{B}(f(g)) \rfloor} \leq B^{g - \log_B(f(g))} = \frac{B^g}{f(g)}$.
    Additionally, we obtain
    $
        (B - \varepsilon)^{g} \leq (B - \varepsilon)^{\alpha \cdot q} = (B - \varepsilon'')^{q}.
    $
    Finally, since $\varepsilon'' > 0$ we have that $(B - \varepsilon'')^q < B^q$ and hence $(B - \varepsilon'')^{q} \leq B^q - \varepsilon'$
    for some $\varepsilon' > 0$.
    To conclude the proof, briefly observe that both $q$ and $g$ are positive integers.
\end{proof}

We remark that \cref{thm:choosing_alphabet_size} is not constructive, in the sense that it may not be possible to efficiently compute the sketched values. 
For example, it may be difficult to compute whether $g$ is in the set $X$ in step 3.
However, this is not problematic because one must not actually compute these values to make the lower bounds work.
Indeed, to show tight lower bounds, we will have a separate reduction that rules out improvements for each \emph{fixed constant} $\varepsilon$.
Since these reductions are relative to this fixed $\varepsilon$, they are allowed to simply hard-code the values $q$ and $g$, and their pure existence is sufficient.

 \label{sec:alphanet_size}
\subsection{The Intermediate Lower Bound for the Partial Problem}
\label{sec:intermediate_partial}

As mentioned above, we first reduce \CSP{} to an intermediate problem called \minParGenDomSetRel{}.
Problem instances of the intermediate problem contain an additional set of constraints, more precisely they contain lists of allowed selections for subsets of the vertex set, these additional constraints are referred to as \emph{relations} (a concept that has already proven to be useful in \cite{curticapeanTightConditionalLower2016,fockeTightComplexityBoundsLowerBound,greilhuberResidueDominationBoundedTreewidth2025,marxDegreesGapsTight2021,marxAntifactorFPTParameterized2025}).
They allow us to express the input instance, which is not a graph problem, as graph problem in which the constraints of the input instance essentially directly correspond to specific relations of the output instance.
Moreover, these relations are helpful in ensuring that the output instance behaves properly in other ways as well.
The idea is to then later get rid of these relations by replacing them with gadgets which model their behavior.
In this section, we cover this intermediate reduction.

For technical reasons, we actually utilize two separate intermediate problems with slightly different properties.
Let us now define these problems \minParGenDomSetRel{} and \parGenDomSetRel{} we reduce to.
Before we can give the formal problem definitions, we need to formally define the notion of a relation.
\begin{restatable}[Relations]{definition}{defRelations}
    \label{def:relations}
    Given a graph $G$, a \emph{relation} $R$ (of $G$) is a pair $(\scp{R},\acc{R})$, where $\scp{R} \subseteq V(G)$ is the scope of the relation, and $\acc{R} \subseteq 2^{\scp{R}}$ the set of accepted assignments.
    The arity of a relation $R$ is $|\scp{R}|$.
    A relation $R$ is superset-closed if for any set $r \in \acc{R}$, it holds that any superset $r' \subseteq \scp{R}$ of $r$ is also in $\acc{R}$.
\end{restatable}
Now, we are ready to state the problem definitions.

\problembox{\minParGenDomSetRel{}}{Graph $G$, integers $k, \ell$, set of superset-closed relations $\mathcal{R}$}{Is there a set $S \subseteq V(G)$ such that $|S| \leq k$ and at most $\ell$ vertices of $V(G)$ are violated by $S$ relative to $(\sigma,\rho)$, and for each $R \in \mathcal{R}$ we have $S \cap \scp{R} \in \acc{R}$?}

\problembox{\parGenDomSetRel{}}{Graph $G$, integer $\ell$, set of relations $\mathcal{R}$}{Is there a set $S \subseteq V(G)$ such that at most $\ell$ vertices of $V(G)$ are violated by $S$ relative to $(\sigma,\rho)$, and for each $R \in \mathcal{R}$ we have $S \cap \scp{R} \in \acc{R}$?}

We say that an instance of \minParGenDomSetRel{} or \parGenDomSetRel{} has arity $d$ (or at most $d$), when $|\scp{R}| \leq d$ for each $R \in \mathcal{R}$.

Observe that the problem \parGenDomSetRel{} is not a minimization problem at all.
The reason for this is that we can actually also easily show the lower bound for \parGenDomSetRel{}, which is sufficient for us in some cases.
However, in other cases, it is not easy to reduce \emph{from} \parGenDomSetRel{}, so, we instead utilize the problem \minParGenDomSetRel{}, which only uses superset-closed relations.

It turns out that, for our purpose, some additional conditions on the output instances created by our reduction are also convenient.
For this purpose, we define the notion of a good instance.

\defGoodInstances*{}

Now, we define the notion of a complement of a state.
This type of definition is essentially the definition also used in \cite[Definition 3.10]{fockeTightComplexityBoundsLowerBound} extended to our case, in which we also have overflow-states for finite sets.
Recall that $\tauLargestNonPartial = \min \tau$ if $\tau$ is simple cofinite, and $\max \tau$ if $\tau$ is finite.

\begin{definition}[Complement of States]
    \label{def:partial_state_complements}
    Let $\sigma, \rho$ be nonempty finite or simple cofinite sets.
    Let $\tau_c \in \allStatesPartial$ be an arbitrary state.
    We define the complement of state $\tau_c$ as
    \begin{equation*}
        \complState{\tau_c} = \begin{cases}
            \tau_{\tauLargestNonPartial - c} & \text{if $\tau_c$ is not an overflow state}, \\
            \tau_c                           & \text{if $\tau_c$ is an overflow state}.     \\
        \end{cases}
    \end{equation*}
\end{definition}

Observe that, for any state $\tau_c$, we have that $\complState{\complState{\tau_c}} = \tau_c$.
Furthermore, when $\tau$ is simple cofinite, we have that $\tau_x = \complState{\tau_y}$ implies $x +y = \min \tau$, and when $\tau$ is finite and $x \leq \max \tau$, we have that $\tau_x = \complState{\tau_y}$ implies $x + y = \max \tau$.
On the other hand, when $\tau_c$ is an overflow-state, also $\complState{\tau_c}$ is an overflow-state.

For the purpose of choosing the correct alphabet size, we need to also define weight notions for states of $\allStatesPartial$.
The main idea behind these weights also goes back to \cite{fockeTightComplexityBoundsLowerBound}, they and we use weights to ensure proper propagation of states in our construction.

\begin{definition}[Weight of State-Vectors]
    \label{def:partial_weights}
    Let $\tau_c \in \allStatesPartial$.
    The weight of $\tau_c$ is defined as
    \begin{align*}
        \weight{\tau_c} = \begin{cases}
                              c                  & \text{if $\tau_c$ is not an overflow state}, \\
                              \allLargestPartial & \text{if $\tau_c$ is an overflow state}.
                          \end{cases}
    \end{align*}

    For a vector $\vec{u} \in \allStatesPartial^n$, for any integer $n \geq 1$, let $P_\sigma$ be those positions of $\vec{u}$ which are $\sigma$-states, and $P_\rho$ those positions of $\vec{u}$ which are $\rho$-states.
    We define the weight of $\vec{u}$ as
    \begin{align*}
        \weight{\vec{u}} = \sum_{i \in \range{n}} \weight{\vec{u}\,[i]},
    \end{align*}
    the $\sigma$-weight of $\vec{u}$ as    \begin{align*}
        \sigWeight{\vec{u}} = \sum_{i \in P_\sigma} \weight{\vec{u}\,[i]},
    \end{align*}
    and the $\rho$-weight of $\vec{u}$ as
    \begin{align*}
        \rhoWeight{\vec{u}} = \sum_{i \in P_\rho} \weight{\vec{u}\,[i]}.
    \end{align*}
\end{definition}

Finally, we need to define the notion of a path decomposition of instances of \minParGenDomSetRel{} and \parGenDomSetRel{}.

\begin{definition}[Path Decompositions of the Problems with Relations]
    \label{def:partial_intermediate_path_decompositions}
    Let $(G,\ell,\mathcal{R})$ be an instance of \parGenDomSetRel{}.
    A path decomposition of the instance is a path decomposition of $G$, such that for each $R \in \mathcal{R}$ there exists a bag of the decomposition that contains all vertices of $\scp{R}$.
    Path decompositions for instances of \minParGenDomSetRel{} are defined analogously.
\end{definition}

\begin{figure}
    \centering
    \includegraphics[width = 0.99\linewidth]{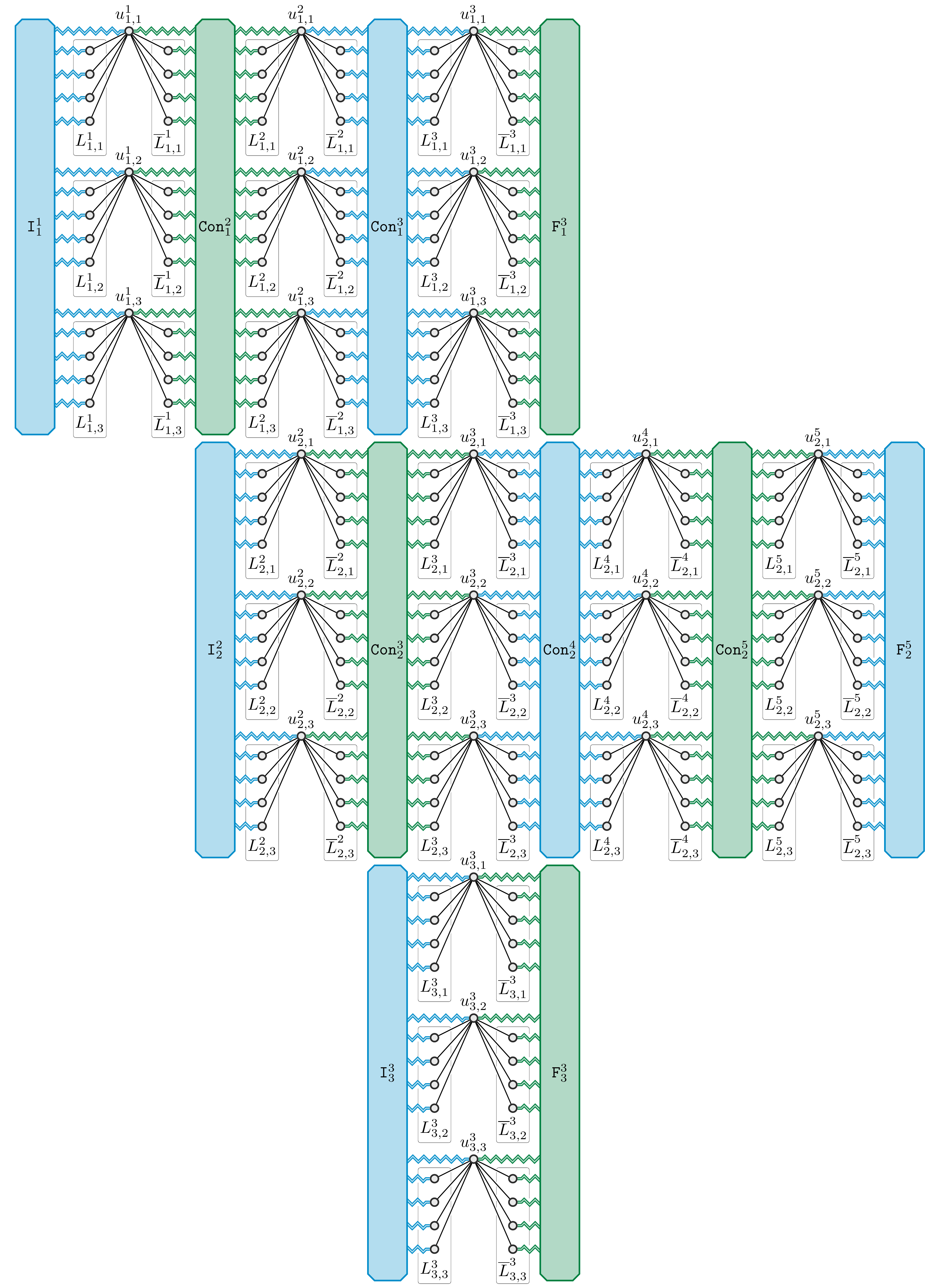}
    \caption{An illustration of the intermediate construction for an input instance on three variables $x_1,x_2,x_3$ with path decomposition of the primal graph $\{x_1\},\{x_1,x_2\},\{x_1,x_2,x_3\},\{x_2\},\{x_2\}$. Furthermore, we have that $g = 3$, and $\allLargestPartial = 4$. For the sake of visibility, the clique $K$ and the constraint relations are not shown, and the drawn relations of $\mathcal{R}$ are labeled with the name of their corresponding relation of $\mathcal{R}_0$. The relations are colored to make it easier to distinguish them.
    }
    \label{fig:intermediate_construction}
\end{figure}

Now, we have all ingredients we require for our first intermediate lower bound.

\thmMinimizationHighLvl*{}
\begin{proof}
    Although the theorem statement only explicitly mentions the case where $\rho$ is simple cofinite, most parts of this proof also work for the case when $\rho$ is finite.
    Actually the only part of the proof that does not work for finite $\rho$ is the fact that then, the output instance is not necessarily good.
    Since we want to reuse parts of this proof later,
    let $\sigma, \rho$ be nonempty sets which are finite or simple cofinite, and keep in mind that we only require the final proof of the theorem statement, in particular that the output instance is good, when $\rho$ is simple cofinite.

    \item\subparagraph*{The size of the alphabet.}

    Before we formally state the reduction, we need to figure out which alphabet size of the 2-\problem{CSP} problem is correct for our use case.
    For any $n$, let $W_n$ be the partition of $\allStatesPartial^n$, such that each string in the same block has the same number of $\sigma$-states, the same number of overflow $\sigma$-states, the same number of overflow $\rho$-states, the same $\sigma$-weight, and the same $\rho$-weight.

    We will use a simple argument to bound the size of $W_n$ for each $n$.
    Observe that the number of $\sigma$-states ranges from $0$ to $n$, similarly, the number of overflow $\sigma$-states and overflow $\rho$-states range from $0$ to $n$.
    Finally, the $\sigma$-weight and $\rho$-weight range from $0$ to $n \cdot \allLargestPartial$.
    Now, we obtain that $|W_n| \leq (n+1)^3 \cdot (n \cdot \allLargestPartial + 1)^2$, which is polynomial in $n$ (note that $\allLargestPartial$ is a constant).\footnote{A slightly more involved argument would give a better bound on the size of $|P_n|$, however, the stated bound is perfectly sufficient for our purposes.}
    Moreover, $|W_n| \geq 1$ for all $n \geq 0$ ($\allStatesPartial^0$ contains the empty string).

    Next, fix an arbitrary $\varepsilon > 0$.
    By \cref{thm:choosing_alphabet_size} (used with $B = |\allStatesPartial|$, $f(n) = |W_n|$, the chosen $\varepsilon$ and $X = \mathbb{Z}_{\geq 3}$),
    we get that there are positive integers $q$, $g \geq 3$ and a real $\varepsilon' > 0$ with $|\allStatesPartial|^q \leq \frac{|\allStatesPartial|^g}{|W_g|}$, and $(|\allStatesPartial| - \varepsilon)^g \leq (|\allStatesPartial|^q - \varepsilon')$.
    Let $\usedStates$ be a largest block of $W_g$, then $\usedStates$ has size at least $\frac{|\allStatesPartial|^g}{|W_g|} \geq |\allStatesPartial|^q$.
    So, we have $(|\allStatesPartial| - \varepsilon)^g \leq (|\allStatesPartial|^q - \varepsilon') \leq (|A| - \varepsilon')$.
    Moreover, since we have that $g \geq 3$, the size of $A$ is at least $3$ since for example, the set in which each string has $1$ state that is $\rho_0$ and $g-1$ states that are $\sigma_0$ is a subset of a partition of $W_g$ that contains at least three strings.

    Let $\selStateVertexPerGadget$ be the number of $\sigma$-states per vector of $A$, $\numOverflowSigma$ be the number of overflow $\sigma$-states per vector, $\numOverflowRho$ the number of overflow $\rho$-states per weight, $\selWeight$ the $\sigma$-weight per weight, and $\unselWeight$ the $\rho$-weight per weight.
    Note that when $\sigma$ is simple cofinite, we have that $\numOverflowSigma = 0$, similarly, when $\rho$ is simple cofinite, we have that $\numOverflowRho = 0$.
    Also define the total number of overflow-states per vector $\numOverflowAll = \numOverflowRho + \numOverflowSigma$, the weight per vector, $\weightConstant = \selWeight + \unselWeight$, and the number of $\rho$-states per vector $\unselStateVertexPerGadget = \blocksPerVertexGadget - \selStateVertexPerGadget$.

    We stress that the values $g,q$ and the set $A$ are not computed during the actual reduction.
    Rather, there is a separate reduction for each fixed constant $\varepsilon > 0$ which can hard-code these values.
    In particular, the pure existence of these values is perfectly sufficient, and we do not require \cref{thm:choosing_alphabet_size} to be constructive.

    \subparagraph*{The construction.}
    We reduce from 2-\problem{CSP} over the alphabet $\range{|A|}$.
    Because we can easily bijectively map integers from $\range{|A|}$ to elements of $A$, we will treat this as an instance over the alphabet $A$, and keep the fact that we use such a mapping implicit.

    Let the variables of the input instance $I$ be $\variable{1},\dots,\variable{\variableCount}$, and the constraints be $\constraint{1},\dots,\constraint{\constraintCount}$.
    Moreover, let the provided path decomposition of the primal graph be $\bag{1},\dots,\bag{\bagCount}$.

    Observe that we can, without loss of generality, assume that there exists an injective function $\constraintBagMapping{}$ which assigns each constraint (the index of) a bag such that all variables of the constraint are present in the bag.
    Indeed, if no such function exists we can simply copy bags of the decomposition until such a function can easily be computed.\footnote{This type of standard argument was already provided by Lampis \cite{lampisPrimalPathwidthSETH2025}.}

    We build the graph $G$ of the output instance as follows.
    \begin{itemize}
        \item For each $i \in \range{\bagCount}$ and each $\variable{j} \in \bag{i}$ and each $p \in \range{\blocksPerVertexGadget}$ we create \emph{state} vertex $\statevertex{i}{j}{p}$.
        \item For each $i \in \range{\bagCount}$ and each $\variable{j} \in \bag{i}$ and each $p \in \range{\blocksPerVertexGadget}$ we create $\allLargestPartial$ vertices, and denote the set of them as $\leftBlock{i}{j}{p}$. We connect each vertex of $\leftBlock{i}{j}{p}$ to vertex $\statevertex{i}{j}{p}$ with an edge.
        \item For each $i \in \range{\bagCount}$ and each $\variable{j} \in \bag{i}$ and each $p \in \range{\blocksPerVertexGadget}$ we create $\allLargestPartial$ vertices, and denote the set of them as $\rightBlock{i}{j}{p}$. We connect each vertex of $\rightBlock{i}{j}{p}$ to vertex $\statevertex{i}{j}{p}$ with an edge.
        \item We add a clique $K$ on $\allLargestPartial+1$ vertices to the graph.
        \item For each $i \in \range{t}$, each $x_j \in \bag{i}$, each $p \in \range{\blocksPerVertexGadget}$ and each $v \in \leftBlock{i}{j}{p} \cup \rightBlock{i}{j}{p}$, we add an edge between $v$ and each vertex of $K$. In other words, we connect each vertex that is neither in $K$ nor a state vertex to each vertex of $K$ with an edge.
    \end{itemize}

    This concludes the description of the output graph $G$.
    For convenience, we define some additional sets.
    For any bag $\bag{i}$ and $\variable{j} \in \bag{i}$, we define $U^i_j = \bigcup_{p \in \range{g}} \{\statevertex{i}{j}{p}\}$, $L^i_j = \bigcup_{p \in \range{g}} \leftBlock{i}{j}{p}$ and $\compl{L}^i_j = \bigcup_{p \in \range{g}} \rightBlock{i}{j}{p}$.
    We also set $\bag{0} = \bag{\bagCount + 1} = \emptyset$.

    Next, we define the state of arbitrary state vertex $\statevertex{i}{j}{p}$ relative to some vertex set $S \subseteq V(G)$ as follows.
    Let $\tau$ be $\sigma$ if $\statevertex{i}{j}{p}$ is selected by $S$, and $\rho$ otherwise.
    Define $c = \min(|S \cap \leftBlock{i}{j}{p}|,\tauLargestPartial)$, and $\state{\statevertex{i}{j}{p}} = \tau_c$.already
    Let $v_1,v_2,\dots,v_\ell$ be a list of $\ell$ state vertices.
    Then, we define $\state{v_1,v_2,\dots,v_\ell} = \state{v_1}\state{v_2}\dots\state{v_\ell}$.

    We similarly define the complementary states of state vertices, $\rightState{\statevertex{i}{j}{p}}$, the only difference in the definition is that we utilize $\rightBlock{i}{j}{p}$ instead of $\leftBlock{i}{j}{p}$.
    We also extend the definition of complementary states of state vertices to lists of complementary states of state vertices.

    \subparagraph*{The relations.}

    Now, we are ready to elaborate on the relations we use.
    We will first create a set of relations $\mathcal{R}_0$, which will not be the set of the relations we end up using, as each relation of the set will not necessarily be superset-closed.
    Later, we compute the set $\mathcal{R}$ of relations that we actually use from $\mathcal{R}_0$.
    So, let us now describe the relations of $\mathcal{R}_0$.

    For each $i \in \range{\bagCount}$ and each $\variable{j} \in \bag{i}\setminus{\bag{i-1}}$, we create an \emph{initial} relation $\initialRelation{i}{j}$.
    The relation scope of this relation is $U^i_j \cup L^i_j$.
    The relation accepts selection $S \cap \scp{\initialRelation{i}{j}}$ if and only if
    \begin{enumerate}
        \item $\state{\statevertex{i}{j}{1},\statevertex{i}{j}{2},\dots,\statevertex{i}{j}{\blocksPerVertexGadget}} \in A$, and
        \item for every $p \in \range{g}$, we have that $S$ selects exactly $\weight{\state{\statevertex{i}{j}{p}}}$ vertices of $L^i_j$.
    \end{enumerate}

    For each $i \in \range{\bagCount}$ and each $\variable{j} \in \bag{i} \cap \bag{i-1}$ we create a \emph{consistency} relation $\consistencyRelation{i}{j}$.
    The relation scope of this relation is $\scp{\consistencyRelation{i}{j}} = U^i_j \cup U^{i-1}_j \cup L^i_j \cup \compl{L}^{i-1}_j$.
    The relation accepts a selection $S \cap \scp{\consistencyRelation{i}{j}}$ if and only if all the following items hold:
    \begin{enumerate}
        \item $\state{\statevertex{i}{j}{1},\statevertex{i}{j}{2},\dots,\statevertex{i}{j}{\blocksPerVertexGadget}} \in A$.
        \item For all $p \in \range{\blocksPerVertexGadget}$ we have that $\state{\statevertex{i}{j}{p}} = \complState{\rightState{\statevertex{i-1}{j}{p}}}$.
        \item For each $p \in \range{\blocksPerVertexGadget}$ exactly $\weight{\rightState{\statevertex{i-1}{j}{p}}}$ vertices of $\rightBlock{i-1}{j}{p}$ are selected.
        \item For each $p \in \range{\blocksPerVertexGadget}$ exactly $\weight{\state{\statevertex{i}{j}{p}}}$ vertices of $\leftBlock{i}{j}{p}$ are selected.
    \end{enumerate}

    Next, for each bag $\bag{i}$ and variable $\variable{j} \in \bag{i}$ with $\variable{j} \notin \bag{i+1}$, we add a final relation $\finalRelation{i}{j}$ with relation scope $U^i_j \cup \compl{L}^i_j$.
    This relation ensures that the selection properly corresponds to the complement of some selection of $A$.
    That is, a selection of $\scp{\finalRelation{i}{j}}$ is accepted if and only if
    \begin{enumerate}
        \item $\complState{\rightState{\statevertex{i}{j}{1}}}\dots\complState{\rightState{\statevertex{i}{j}{\blocksPerVertexGadget}}} \in \usedStates$, and
        \item for each $p \in \range{\blocksPerVertexGadget}$ exactly $\weight{\rightState{\statevertex{i}{j}{p}}}$ vertices of $\rightBlock{i}{j}{p}$ are selected.
    \end{enumerate}

    Now, we add a relation to each vertex of $K$ which forces the vertex to be selected.
    We do not need to give these relations explicit names.

    We proceed by adding the relations we use to model the constraints of the input instance.
    For each $\ell \in \range{\constraintCount}$ we create a \emph{constraint} relation $\constraintRelation{\ell}$ (for constraint $\constraint{\ell}$).
    Set $i = \constraintBagMapping{\constraint{\ell}}$ (recall that $\gamma$ is an injective function that assigns each constraint the index of a bag that contains both variables of the constraint).
    Let $(x_j,x_{j'})$ be the variables appearing in constraint $\constraint{\ell}$.
    The scope of the relation is then $\scp{\constraintRelation{\ell}} = L^i_{j} \cup U^i_j \cup L^i_{j'} \cup U^i_{j'}$.
    The constraint accepts a selection of $\scp{\constraintRelation{\ell}}$ if and only it is the case that
    \[(\state{\statevertex{i}{j}{1},\dots,\statevertex{i}{j}{g}},\state{\statevertex{i}{j'}{1},\dots,\statevertex{i}{j'}{g}}) \in \acc{\constraint{\ell}}.\]

    Let $\mathcal{R}_0$ be the set of all relations described so far.
    Compute the superset-closure $\mathcal{R}$ of $\mathcal{R}_0$, that is, $\mathcal{R}$ is created by iterating overall $R_0 \in \mathcal{R}$, adding a relation $R$ with $\scp{R} = \scp{R_0}$ and $\acc{R} = \{X' \mid X \in \acc{R_0},\scp{R_0} \supseteq X' \supseteq X\}$.

    \subparagraph*{The solution size and violation count.}
    Next, we define our number $\ell$ of allowed violations.
    Before giving the exact values, we provide some intuition for the value we require.
    Recall that the relations $\mathcal{R}$ (and $\mathcal{R}_0$) force all vertices of the clique $K$ to be selected.
    If $\sigma$ is simple cofinite, each vertex of $K$ will be satisfied, and each selected vertex that is not a state vertex will be satisfied due to being adjacent to the vertices of $K$.
    If $\sigma$ is finite, each selected non-state vertex will be violated since the selected neighbors of $K$ will give them too many selected neighbors.
    If $\rho$ is simple cofinite, then each unselected non-state vertex will be satisfied due to the vertices of $K$, but if $\rho$ is finite, then each unselected non-state vertex will be violated due to the vertices of $K$.
    Finally, the number of satisfied and violated state vertices per bag and variable in the bag is completely determined by $\numOverflowSigma, \numOverflowRho$ and $\blocksPerVertexGadget$.
    Then, the violation number $\ell$ is simply the smallest number of vertices which have to be violated if all relations are fulfilled, which is a property that we will of course prove in more detail in the later parts of the proof.

    Now, we proceed to the precise definition of the value $\ell$.
    Recall that each vector of $A$ has the same number $\numOverflowSigma$ of overflow-$\sigma$ states, the same number $\numOverflowRho$ of overflow-$\rho$ states, and that $\numOverflowAll = \numOverflowSigma + \numOverflowRho$.
    Moreover, each vector of $A$ has the same number $\selStateVertexPerGadget$ of $\sigma$ states, and hence also the same number $\unselStateVertexPerGadget$ of $\rho$ states.
    We first want to conveniently calculate how many states of each vector of $A$ (and thus how many state vertices per bag and variable) are not overflow states.
    We set $h_\sigma = \selStateVertexPerGadget - \numOverflowSigma$ and $h_\rho = \unselStateVertexPerGadget -  \numOverflowRho$
    as the number of satisfied selected and unselected state vertices per variable and bag, respectively.

    Next, we want some further variables that describe how many selected non-state vertices we have per variable and bag, and how many unselected non-state vertices we have per variable and bag.
    Regarding the number of selected non-state vertices, we will later show that a state vertex with an overflow-state has exactly $2 \cdot \allLargestPartial$ selected neighbors.
    On the other hand, if a state vertex has state $\tau_c$ that is not an overflow-state, then it will have either $\min \tau$, or $\max \tau$ selected neighbors, depending on whether $\tau$ is finite or simple cofinite.
    Recall that $\sigLargestNonPartial = \min \sigma$ if $\sigma$ is simple cofinite and $\max \sigma$ otherwise, and that $\rhoLargestNonPartial = \min \rho$ if $\rho$ is simple cofinite and $\max \rho$ otherwise.
    We can now see that per satisfied selected state vertex, we have exactly $\sigLargestNonPartial$ selected non-state vertices, and per unselected satisfied state vertex, we have exactly $\rhoLargestNonPartial$ of selected non-state vertices.
    So, per bag and variable in the bag, we will have $\#_\sigma = 2 \cdot \allLargestPartial \cdot \numOverflowAll + h_\sigma \cdot \sigLargestNonPartial + h_\rho \cdot \rhoLargestNonPartial$ selected non-state vertices, which stem from the violated state vertices, as well as the satisfied state vertices.

    Since the state vertices have exactly $2 \cdot \allLargestPartial$ neighbors each, we can also calculate how many unselected non-state vertices there are.
    We have $\#_\rho = \blocksPerVertexGadget \cdot 2 \cdot \allLargestNonPartial - \#_\sigma$ unselected non-state vertices per bag and variable in the bag.
    The way we set $\ell$ now depends on $\sigma$ and $\rho$, as the violation status of the non-state vertices depends on whether $\sigma$ and $\rho$ are finite or not.

    \begin{description}
        \item[$\sigma$ and $\rho$ finite:]
              We set
              \[\ell = |K| + \sum_{i \in \range{\bagCount}}\sum_{\variable{j} \in \bag{i}} \big( \numOverflowAll + \#_\sigma + \#_\rho \big).\] In other words, the number of allowed violations is simply all vertices of the graph, apart from $h_\sigma + h_\rho$ per bag and variable in each bag.
        \item[$\sigma$ is finite and $\rho$ is simple cofinite:]
              In this case, we set
              \[\ell = |K| + \sum_{i \in \range{\bagCount}}\sum_{\variable{j} \in \bag{i}} \big( \numOverflowAll + \#_\sigma \big).\]
        \item[$\sigma$ is simple cofinite and $\rho$ is finite:]
              We set \[\ell = \sum_{i \in \range{\bagCount}}\sum_{\variable{j} \in \bag{i}} \big( \numOverflowAll + \#_\rho \big) .\]
        \item[$\sigma$ and $\rho$ are simple cofinite:] We set $\ell = \sum_{i \in \range{\bagCount}}\sum_{\variable{j} \in \bag{i}} \numOverflowAll = 0$.
    \end{description}

    Set the number of vertices which are allowed to be selected to
    \begin{align*}
        k = |K| + \sum_{i \in \range{\bagCount}}\sum_{\variable{j} \in \bag{i}} \big( \selStateVertexPerGadget + \#_\sigma \big).
    \end{align*}

    The output instance is then $I' = (G,k,\ell,\mathcal{R})$, consult \cref{fig:intermediate_construction} for an illustration (where the names of the relations are the same as the corresponding relations of $\mathcal{R}_0$).

    \subparagraph*{Properties of the construction.}

    We now show that this construction has the desired properties.

    \begin{claim}
        \label{claim:minimization_high_lvl_forward}
        If the input instance $I$ is a yes-instance, then the instances $(G,k,\ell,\mathcal{R}_0)$ and $(G,k,\ell,\mathcal{R})$ are yes-instances.
    \end{claim}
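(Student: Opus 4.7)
The plan is to construct $S$ directly from a satisfying assignment $\delta$ of $I$. For each variable $\variable{j}$, let $\vec{u}_j := \delta(\variable{j}) \in \usedStates$. I define $S$ to contain: all of $K$; for each bag $\bag{i}$, each $\variable{j} \in \bag{i}$, and each $p \in \range{g}$, the state vertex $\statevertex{i}{j}{p}$ exactly when $\vec{u}_j[p]$ is a $\sigma$-state; and arbitrary subsets of $\leftBlock{i}{j}{p}$ and $\rightBlock{i}{j}{p}$ of sizes $\weight{\vec{u}_j[p]}$ and $\weight{\complState{\vec{u}_j[p]}}$, respectively. Each block has $\allLargestPartial$ vertices and each weight is at most $\allLargestPartial$, so this is well-defined, and by construction $\state{\statevertex{i}{j}{p}} = \vec{u}_j[p]$ and $\rightState{\statevertex{i}{j}{p}} = \complState{\vec{u}_j[p]}$.

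Next, I would verify satisfaction of each relation of $\mathcal{R}_0$. The initial and final relations demand that the observed state-vector lies in $\usedStates$ and that the block selection sizes match the specified weights; both are immediate from the construction, using that $\complState{\cdot}$ is an involution for the final case. The consistency relations additionally require $\state{\statevertex{i}{j}{p}} = \complState{\rightState{\statevertex{i-1}{j}{p}}}$, which holds since the same codeword $\vec{u}_j$ is used in every bag containing $\variable{j}$. Each constraint relation is satisfied because $\delta$ satisfies $I$, and the singleton-forcing relations on $K$ hold since $K \subseteq S$. Satisfaction of $\mathcal{R}$ then follows because $\mathcal{R}$ is the superset-closure of $\mathcal{R}_0$. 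The bound $|S| \leq k$ reduces to a direct count: $|K|$ plus, per pair $(i, \variable{j})$, exactly $\selStateVertexPerGadget$ selected state vertices and $\selWeight + \unselWeight$ selected block vertices, and $k$ is defined so that this count matches.

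The main obstacle, which I would handle last, is showing that at most $\ell$ vertices are violated under $S$; the argument splits into four subcases based on whether each of $\sigma$ and $\rho$ is finite or simple cofinite. Four unifying observations drive the count: a state vertex with a non-overflow assigned state $\tau_c$ has exactly $c + (\tauLargestNonPartial - c) = \tauLargestNonPartial \in \tau$ selected neighbors and is thus happy; a state vertex in an overflow state (possible only for finite $\tau$) has $2\allLargestPartial > \max \tau$ selected neighbors, using $\allLargestPartial \geq \tauLargestPartial = \max \tau + 1$, and is therefore violated, contributing $\numOverflowAll$ violations per pair $(i, \variable{j})$; every non-state non-$K$ vertex is adjacent only to its attached state vertex and all of $K$, giving exactly $|K|$ or $|K|+1$ selected neighbors, which strictly exceeds $\max \tau$ when the relevant $\tau$ is finite and is at least $\min \tau$ when $\tau$ is simple cofinite, thanks to the choice $|K| = \allLargestPartial + 1$; and each clique vertex has $\allLargestPartial$ selected clique neighbors plus every selected non-state non-$K$ vertex, so the same dichotomy applies and it is violated iff $\sigma$ is finite. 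Summing these contributions in each of the four subcases yields precisely the value of $\ell$ defined in that case, completing the proof.
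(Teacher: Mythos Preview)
Your proposal is correct and follows essentially the same approach as the paper's proof: construct $S$ from a satisfying assignment by encoding each $\vec{u}_j \in \usedStates$ into state vertices and block selections of sizes $\weight{\vec{u}_j[p]}$ and $\weight{\complState{\vec{u}_j[p]}}$, then verify the relations, the size bound, and the violation count by the same four-way case split. One small slip: in your size count you write ``$\selWeight + \unselWeight$ selected block vertices'' per pair $(i,\variable{j})$, but that is only the left-block contribution $\weightConstant$; the right blocks contribute $\compl{\weightConstant}$ additional vertices, and the total is $\weightConstant + \compl{\weightConstant} = \#_\sigma$, which is exactly what appears in the definition of $k$.
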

    \begin{claimproof}
        Assume that $I$ is a yes-instance.
        Then there exists some assignment $\delta$ satisfying all constraints.
        We show next that this assignment can be transformed into a selection that selects at most $k$ vertices and violates at most $\ell$ vertices while fulfilling all relations of $\mathcal{R}_0$.
        As any selection fulfilling all relations of $\mathcal{R}_0$ also fulfills all relations of $\mathcal{R}$, this shows that both $(G,k,\ell,\mathcal{R}_0)$ and $(G,k,\ell,\mathcal{R})$ are yes-instances.

        We begin by selecting each vertex of $K$, which also fulfills all relations upon the vertices of $K$.
        Then, for each bag $i \in \range{t}$ and each $x_j \in \bag{i}$ and each $p \in \range{g}$, we select vertex $\statevertex{i}{j}{p}$ if and only if $\tau_c = \delta(j)[p]$ is a $\sigma$-state.
        Moreover, if $\tau_c$ is an overflow-state, then we select all $\allLargestPartial$ vertices of $\leftBlock{i}{j}{p}$ and all $\allLargestPartial$ vertices of $\rightBlock{i}{j}{p}$.
        If $\tau_c$ is not an overflow-state, then we select $c$ vertices from $\leftBlock{i}{j}{p}$ and $\tauLargestNonPartial - c$ vertices from $\rightBlock{i}{j}{p}$.
        Observe that this selection guarantees that $\state{\statevertex{i}{j}{p}} = \tau_c$, and hence, $S$ fulfills all constraint-relations.
        Moreover, our selection ensures that $\rightState{\statevertex{i}{j}{p}} = \complState{\state{\statevertex{i}{j}{p}}}$.
        So, we also have that $\tau_c = \complState{\complState{\tau_c}} = \complState{\rightState{\statevertex{i}{j}{p}}}$.

        Then, it is straightforward to check that the selection fulfills all initial relations, all consistency relations, and all final relations.
        So, all that remains is arguing that $S$ is not too large, and that $S$ violates at most $\ell$ vertices of the graph.
        It is not difficult to observe that $|S| = k$, which meets the size bound.

        Regarding the number of violated vertices, fix some $i \in \range{t}$ and $x_j \in \bag{i}$.
        Observe that exactly $\numOverflowAll$ of the state vertices  $\statevertex{i}{j}{1},\dots,\statevertex{i}{j}{\blocksPerVertexGadget}$ are violated because they have too many selected neighbors.
        All other state vertices have a number of selected neighbors which is either the maximum, or minimum value of their respective set, and hence they are satisfied.

        The vertices of the clique $K$ are always selected.
        If $\sigma$ is finite, this means that each vertex of $K$ has more than $\max \sigma$ selected neighbors, violating them all.
        On the other hand if $\sigma$ is simple cofinite, then this guarantees that they have at least $\min \sigma$ selected neighbors, making them satisfied.

        Each vertex $v$ that is not a state vertex is adjacent to all vertices of $K \setminus \{v\}$, this means that $v$ also has at least $\allLargestPartial$ selected neighbors.
        If $\rho$ is finite, this means that unselected vertices that are not state vertices are violated, if $\rho$ is simple cofinite, these are satisfied.
        Similarly, if $\sigma$ is finite, selected vertices that are not state vertices are violated, and if $\sigma$ is simple cofinite, they are satisfied.
        So, we have that $S$ violates exactly $\ell$ vertices of $V(G)$.
    \end{claimproof}

    Now, towards proving the second direction of correctness, we show some key properties of sets $S$ that fulfill all relations of $\mathcal{R}$.
    \begin{claim}
        \label{claim:partial_minimization_relations_fulfilled_size_property}
        Any set $S \subseteq V(G)$ that fulfills all relations of $\mathcal{R}$ must have size at least $k$.
        Moreover, any set $S \subseteq V(G)$ that fulfills all relations of $\mathcal{R}$ and has size at most $k$ fulfills all relations of $\mathcal{R}_0$.
    \end{claim}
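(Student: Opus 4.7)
The plan is to leverage the fact that each $R \in \mathcal{R}$ is the superset-closure of some $R_0 \in \mathcal{R}_0$: since $S$ fulfils $R$, we can fix a witness $X_R \subseteq S \cap \scp{R}$ with $X_R \in \acc{R_0}$. I will derive a lower bound on $|S|$ by summing $|S \cap B|$ over the disjoint blocks $K$ and $U^i_j, L^i_j, \compl{L}^i_j$ for all $(i,j)$ with $\variable{j}\in \bag{i}$, assigning to each block one carefully chosen witnessing relation whose $X_R$ forces a concrete selection count.

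Each $v \in K$ is pinned selected by its singleton relation, contributing $|K|$. For each $(i,j)$ with $\variable{j}\in \bag{i}$, I use the ``predecessor'' relation at $(i,j)$, which is $\initialRelation{i}{j}$ if $\variable{j}\notin\bag{i-1}$ and $\consistencyRelation{i}{j}$ otherwise, to deduce $|S\cap U^i_j|\ge \selStateVertexPerGadget$ (from the common $\sigma$-count of vectors in $A$) and $|S\cap L^i_j|\ge \weightConstant$ (from the common total weight of $A$). For $\compl{L}^i_j$ I use the ``successor'' relation, which is $\consistencyRelation{i+1}{j}$ if $\variable{j}\in\bag{i+1}$ and $\finalRelation{i}{j}$ otherwise; in both variants the rightState vector at $U^i_j$ is forced to be the pointwise complement of a vector of $A$. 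Summing $\weight{\complState{\tau_c}}$ over all $g$ positions, where overflow positions contribute $2\numOverflowAll\cdot\allLargestPartial$ and non-overflow positions contribute $h_\sigma\sigLargestNonPartial+h_\rho\rhoLargestNonPartial-\weightConstant$ via $\weight{\complState{\tau_c}}=\tauLargestNonPartial-c$, gives $|S\cap\compl{L}^i_j|\ge \#_\sigma-\weightConstant$. Adding all disjoint contributions yields
\[ |S|\ge |K| + \sum_{(i,j)}\bigl(\selStateVertexPerGadget+\weightConstant+(\#_\sigma-\weightConstant)\bigr) = |K| + \sum_{(i,j)}(\selStateVertexPerGadget+\#_\sigma) = k. \]
The main technical obstacle is precisely this weight bookkeeping for $\compl{L}^i_j$, and in particular verifying that both successor variants impose the same numerical bound.

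For the second part, $|S|\le k$ together with the above gives $|S|=k$, so every inequality must be tight. Hence for each predecessor, successor, final, and singleton relation $R$ we get $|X_R|=|S\cap\scp{R}|$, which combined with $X_R\subseteq S\cap\scp{R}$ upgrades to $X_R = S\cap\scp{R}\in \acc{R_0}$. For a constraint relation $\constraintRelation{\ell}$ with scope $U^i_j\cup L^i_j\cup U^i_{j'}\cup L^i_{j'}$, the witness $X_R$ has state vectors $(V_X,V'_X)\in\acc{\constraint{\ell}}\subseteq A\times A$, and I argue that the state vector $V_S$ of $S$ at $U^i_j$ equals $V_X$: the tight bound $|S\cap U^i_j|=\selStateVertexPerGadget=|X_R\cap U^i_j|$ forces $X_R\cap U^i_j=S\cap U^i_j$, so the $\tau$-labels agree; since $X_R\cap\leftBlock{i}{j}{p}\subseteq S\cap\leftBlock{i}{j}{p}$, a positionwise case-check shows $\weight{V_X[p]}\le \weight{V_S[p]}$, with equality required because both totals equal $\weightConstant$ (as $V_X,V_S\in A$), so $V_X=V_S$; analogously $V'_X=V'_S$. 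Thus $(V_S,V'_S)\in\acc{\constraint{\ell}}$, meaning $S\cap\scp{\constraintRelation{\ell}}\in\acc{\constraintRelation{\ell}_0}$, which together with the earlier cases shows that $S$ fulfils every relation of $\mathcal{R}_0$.
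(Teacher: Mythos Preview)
Your proof is correct and follows essentially the same approach as the paper: fix a witness $X_R\in\acc{R_0}$ inside $S\cap\scp{R}$ for each relation, derive block-wise lower bounds on $|S\cap K|$, $|S\cap U^i_j|$, $|S\cap L^i_j|$, $|S\cap\compl{L}^i_j|$ from the predecessor/successor relations, sum to $k$, and then use tightness to upgrade each $X_R\subseteq S\cap\scp{R}$ to an equality. One small slip: in your computation of $|S\cap\compl{L}^i_j|$, the overflow positions contribute $\numOverflowAll\cdot\allLargestPartial$ (not $2\numOverflowAll\cdot\allLargestPartial$), and the non-overflow positions contribute $h_\sigma\sigLargestNonPartial+h_\rho\rhoLargestNonPartial-\weightConstant+\numOverflowAll\cdot\allLargestPartial$ (since the non-overflow subscripts sum to $\weightConstant-\numOverflowAll\cdot\allLargestPartial$, not $\weightConstant$); the two errors cancel and your total $\#_\sigma-\weightConstant$ is correct. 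It is worth noting that your treatment of the constraint relations in the second part is more explicit than the paper's, which simply says the argument is ``similar'': indeed, for $\constraintRelation{\ell}$ the accepted sets of $\mathcal{R}_0$ do not directly pin down $|X_R\cap L^i_j|$, so your route via $V_S\in A$ (established from the already-proved predecessor relation in $\mathcal{R}_0$) together with the positionwise weight monotonicity is the right way to close this case.
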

    \begin{claimproof}
        Let $S \subseteq V(G)$ be a set that fulfills all relations of $\mathcal{R}_0$.
        Fix an arbitrary $i \in \range{t}$ and $\variable{j} \in \bag{i}$.
        Then, there is an initial relation or consistency relation that enforces $\state{\statevertex{i}{j}{1},\dots,\statevertex{i}{j}{\blocksPerVertexGadget}} \in \usedStates$.
        Furthermore, the relation ensures that the number of selected vertices of $\leftBlock{i}{j}{p}$ is exactly $\weight{\state{\statevertex{i}{j}{p}}}$.
        In particular, the relation ensures that if $\state{\statevertex{i}{j}{p}}$ is an overflow-state, then all $\allLargestPartial$ vertices of $\leftBlock{i}{j}{p}$ are selected.
        This means that at least $\selStateVertexPerGadget$ of the state vertices of $U^i_j$ are selected, and at least $\weightConstant$ vertices of $L^i_j$ are selected.
        Moreover, a final relation or consistency relation guarantees that $\vec{u} = \complState{\rightState{\statevertex{i}{j}{1}}}\dots\complState{\rightState{\statevertex{i}{j}{\blocksPerVertexGadget}}} \in \usedStates$.

        We now show that this means that at least $\compl{\weightConstant} = 2 \cdot \numOverflowAll \cdot \allLargestPartial + h_\sigma \cdot \sigLargestNonPartial + h_\rho \cdot \rhoLargestNonPartial - \weightConstant$ vertices of $\compl{L}^i_j$ must be selected.
        Let $P_\sigma$ be those positions of $\vec{u}$ which are $\sigma$-states that are not overflow-states, and $P_\rho$ be those positions of $\vec{u}$ which are $\rho$-states that are not overflow-states.
        Also, let $P_{>}$ be those positions of $\vec{u}$ which are overflow-states.
        It is the case that $\complState{\tau_c}$
        is an overflow-state if and only if $\tau_c$ is an overflow-state.
        Moreover, $\vec{u}$ has exactly $\numOverflowAll$ overflow-states as it is in $A$.
        We then immediately obtain that \[\sum_{p \in P_>} \weight{\vec{u}\,[p]} = \sum_{p \in P_>} \allLargestPartial = \numOverflowAll \cdot \allLargestPartial = \sum_{p \in P_>}  \weight{\rightState{\statevertex{i}{j}{p}}}.\]

        We also know that $\weight{\vec{u}} = \weightConstant$, so,
        \[\sum_{p \in P_\sigma \cup P_\rho} \weight{\vec{u}\,[p]} = \weightConstant - \numOverflowAll \cdot \allLargestPartial.\]
        Moreover, we have that \[\sum_{p \in P_\sigma} \weight{{\rightState{\statevertex{i}{j}{p}}}} = \sum_{p \in P_\sigma} \sigLargestNonPartial - \weight{\vec{u}\,[p]}, \text{ and } \sum_{p \in P_\rho} \weight{{\rightState{\statevertex{i}{j}{p}}}} = \sum_{p \in P_\rho} \rhoLargestNonPartial - \weight{\vec{u}\,[p]}. \]

        Combining these equations, we get \[ \sum_{p \in P_\sigma \cup P_\rho} \weight{{\rightState{\statevertex{i}{j}{p}}}} = h_\sigma \cdot \sigLargestNonPartial + h_\rho \cdot \rhoLargestNonPartial - \weightConstant + \numOverflowAll \cdot \allLargestPartial.
        \]
        So, overall, \[\sum_{p \in \range{\blocksPerVertexGadget}} \weight{{\rightState{\statevertex{i}{j}{p}}}} = 2 \cdot \numOverflowAll \cdot \allLargestPartial + h_\sigma \cdot \sigLargestNonPartial + h_\rho \cdot \rhoLargestNonPartial - \weightConstant = \compl{\weightConstant}. \]

        We again have that, if $\rightState{\statevertex{i}{j}{p}}$ is an overflow-state, then a final relation or consistency relation ensures that all $\allLargestPartial$ vertices of $\rightBlock{i}{j}{p}$ are selected.
        Now, this means that the relations $\mathcal{R}_0$ enforce that at least $\selStateVertexPerGadget$ vertices of $U^i_j$ are selected.
        They also enforce that at least $\weightConstant$ vertices of $L^i_j$ are selected, and finally, using the elaborations above, that at least $\compl{\weightConstant}$ vertices of $\compl{L}^i_j$ are selected.
        Furthermore, $\weightConstant + \compl{\weightConstant} = 2 \cdot \numOverflowAll \cdot \allLargestPartial + h_\sigma \cdot \sigLargestNonPartial + h_\rho \cdot \rhoLargestNonPartial = \#_\sigma$.
        Since $\mathcal{R}$ is simply the superset-closure of $\mathcal{R}_0$, the same holds true for any set that respects all relations of $\mathcal{R}$.
        Moreover, the relations among $K$ are the same under $\mathcal{R}$ and $\mathcal{R}_0$.
        So indeed, already the relations $\mathcal{R}$ ensure that at least $k$ vertices of $V(G)$ must be selected.

        Finally, we show that if we have a set $S$ of size at most $k$ that fulfills all relations of $\mathcal{R}$, it must also fulfill all relations of $\mathcal{R}_0$.
        Towards a contradiction, assume that for some $R_0 \in \mathcal{R}_0$, it is not the case that $S \cap \scp{R_0} \in \acc{R_0}$.
        We must have $S \cap \scp{R} \in \acc{R}$, where $R$ is the superset-closure of $R_0$.
        Consider the case that $R$ is a consistency relation.
        Then, its relation scope is $U^i_j \cup U^{i-1}_j \cup L^i_j \cup \compl{L}^{i-1}_j$ for some $i,j$.
        If $S \cap \scp{R_0} \notin \acc{R_0}$, then, we must have that $S \cap \scp{R_0}$ is a proper superset of some accepted assignment in $\acc{R_0}$.
        However, every accepting assignment of $\acc{R_0}$ ensures that we have exactly $\selStateVertexPerGadget$ selected vertices in $U^i_j$ and $U^{i-1}_j$, as well as $\weightConstant$ selected vertices of $L^i_j$, and $\compl{\weightConstant}$ selected vertices of $\compl{L}^{i-1}_j$.
        Hence, if $S$ actually selects a proper superset of some assignment of $\acc{R_0}$, it would have to select more vertices than that from one of these sets.
        But, then the size bound $k$ could not be met.
        The arguments for initial-relations, constraint-relations, and final-relations are similar, while the relations on $K$ are the same in $\mathcal{R}$ and $\mathcal{R}_0$.

        Hence, $S$ simply cannot afford to select a proper superset of any accepted assignment of any relation in $\mathcal{R}_0$, and hence, it fulfills all relations of $\mathcal{R}_0$.
    \end{claimproof}

    Next, we show that if some selection $S$ violates at most $\ell$ vertices and fulfills all relations of $\mathcal{R}_0$, then the input instance is a yes-instance.
    Observe that no assumption about the solution size is made here.

    \begin{claim}
        \label{claim:partial_minimization_high_lvl_backwards_assuming_relations}
        If there exists a set $S$ violating at most $\ell$ vertices that also fulfills all relations of $\mathcal{R}_0$, then
        the input instance $I$ is a yes-instance.
    \end{claim}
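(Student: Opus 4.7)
My plan is to extract from $S$ a candidate CSP assignment $\delta$, prove it is well-defined, and then verify it satisfies every constraint of $I$. For each bag index $i$ and each $\variable{j} \in \bag{i}$, I set $\vec{s}_j^i := \state{\statevertex{i}{j}{1},\dots,\statevertex{i}{j}{\blocksPerVertexGadget}}$; the initial relation $\initialRelation{i}{j}$ (or the consistency relation $\consistencyRelation{i}{j}$) in $\mathcal{R}_0$ immediately forces $\vec{s}_j^i \in \usedStates$, so through the implicit bijection between $\usedStates$ and the alphabet of the $2$-\textsc{CSP} instance each such vector names a candidate value for $\variable{j}$, and I set $\delta(\variable{j}) := \vec{s}_j^i$.

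The critical step is showing this candidate is independent of $i$. The key observation is that the budget $\ell$ is \emph{tight}: the relations of $\mathcal{R}_0$ force every vertex of $K$ to be violated (when $\sigma$ is finite), together with $\#_\sigma$ block vertices (when $\sigma$ is finite) and $\#_\rho$ block vertices (when $\rho$ is finite) per bag--variable, plus the $\numOverflowAll$ overflow state vertices per bag--variable. These forced violations already sum to $\ell$, so every non-overflow state vertex $\statevertex{i}{j}{p}$ must be happy under $S$; this means $c_p + c'_p \in \tau$, where $c_p := \weight{\vec{s}_j^i[p]}$, $c'_p := \weight{\rightState{\statevertex{i}{j}{p}}}$, and $\tau \in \{\sigma, \rho\}$ depending on whether $\statevertex{i}{j}{p}$ is selected.

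I would upgrade these pointwise memberships to the equalities $c_p + c'_p = \tauLargestNonPartial$ by an aggregate-weight argument. The final relation (or the consistency relation for bag $i+1$) forces $\vec{v}_j^i := \complState{\rightState{\statevertex{i}{j}{1}}},\dots,\complState{\rightState{\statevertex{i}{j}{\blocksPerVertexGadget}}}$ into $\usedStates$ as well, so the $\sigma$-weights of $\vec{s}_j^i$ and $\vec{v}_j^i$ both equal $\selWeight$. Because every overflow position contributes a fixed value $\allLargestPartial$ to the relevant weight, the sum $\sum_p (c_p + c'_p)$ taken over non-overflow $\sigma$-positions is pinned to exactly $h_\sigma \cdot \sigLargestNonPartial$. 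Combined with the direction of the pointwise membership---which is $c_p + c'_p \leq \sigLargestNonPartial$ when $\sigma$ is finite and $c_p + c'_p \geq \sigLargestNonPartial$ when $\sigma$ is simple cofinite---each summand must hit the average $\sigLargestNonPartial$, and the analogous computation gives $c_p + c'_p = \rhoLargestNonPartial$ at non-overflow $\rho$-positions. This pins $\rightState{\statevertex{i}{j}{p}} = \complState{\vec{s}_j^i[p]}$, and substituting into the consistency relation $\state{\statevertex{i+1}{j}{p}} = \complState{\rightState{\statevertex{i}{j}{p}}}$ yields $\vec{s}_j^{i+1} = \vec{s}_j^i$, so $\delta$ is well-defined.

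Finally, each constraint relation $\constraintRelation{\ell}$ directly asserts that at the bag $\constraintBagMapping{\constraint{\ell}}$ the two relevant state vectors form an accepting pair for $\constraint{\ell}$; since $\delta$ reads off exactly these vectors, it satisfies all constraints and $I$ is a yes-instance. I expect the aggregate-weight step pinning each $c_p + c'_p$ to its extremal value to be the main obstacle, because $\usedStates$ only preserves aggregate weights and type counts rather than pointwise state values, and one has to carefully track how complementation interacts with overflow states throughout.
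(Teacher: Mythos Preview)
Your proposal is correct and follows essentially the same approach as the paper: both arguments first show that the violation budget $\ell$ is exhausted by the clique vertices, the block vertices, and the $\numOverflowAll$ overflow state vertices per bag--variable, forcing every remaining state vertex to be happy; both then combine this pointwise happiness with the aggregate $\sigma$- and $\rho$-weight constraints from membership in $\usedStates$ to pin down the exact state, and finally invoke the constraint relations. The only organizational difference is that you first establish $\rightState{\statevertex{i}{j}{p}} = \complState{\state{\statevertex{i}{j}{p}}}$ within a single bag and then apply the consistency relation to propagate, whereas the paper directly compares $\weight{\state{\statevertex{i-1}{j}{p}}}$ and $\weight{\state{\statevertex{i}{j}{p}}}$ across consecutive bags via the consistency relation; these are equivalent rewirings of the same weight-pinning computation. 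One point you should make explicit when writing it out: your aggregate-weight step silently uses that the overflow positions of $\state{}$ and $\rightState{}$ coincide (otherwise the non-overflow $\sigma$-positions of $\vec{s}_j^i$ and $\vec{v}_j^i$ would not match up); this does follow from your tightness argument, since an overflow $\rightState{}$ at a non-overflow $\state{}$ position would create an extra violation, but it deserves a sentence.
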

    \begin{claimproof}
        Let $S  \subseteq V(G)$ be a set that violates at most $\ell$ vertices, and fulfills all relations of $\mathcal{R}_0$, and thus also all relations of $\mathcal{R}$.
        By \cref{claim:partial_minimization_relations_fulfilled_size_property}, the size of $S$ is then at least $k$.
        Moreover, partly using properties established in the proof of \cref{claim:partial_minimization_relations_fulfilled_size_property}, it is not difficult to see that the relations of $\mathcal{R}_0$ ensure that each vertex of $K$ is selected, that for any $i \in \range{\bagCount}$ and $\variable{j} \in \bag{i}$ exactly $\#_\sigma = 2 \cdot \numOverflowAll \cdot \allLargestPartial + h_\sigma \cdot \sigLargestNonPartial + h_\rho \cdot \rhoLargestNonPartial$ vertices of $L^i_j \cup \compl{L}^i_j$ are selected, and that exactly $\selStateVertexPerGadget$ vertices of $U^i_j$ are selected.
        This means that exactly $k$ vertices of $G$ are selected by $S$.

        Regarding the number of violated vertices, since $S$ must select each vertex of $K$, it is the case that at most $\ell' = \sum_{i \in \range{\bagCount}} \sum_{\variable{j} \in \bag{i}} \numOverflowAll{}$ state vertices can be violated.
        Indeed, the violation status of any vertex that is not a state vertex is completely determined, because such vertices are either guaranteed to be violated, or guaranteed to be satisfied due to the selected neighbors in $K$.
        Moreover, the relations tightly control how many vertices are selected.
        Then, we obtain that at most $\ell'$ further vertices, in particular at most $\ell'$ state vertices, can be violated.

        Now, observe that the initial relations and consistency relations guarantee that the weight of $\state{\statevertex{i}{j}{1},\dots,\statevertex{i}{j}{g}}$ is $\weightConstant$ for every bag $\bag{i}$ and $\variable{j} \in \bag{i}$, and that $\numOverflowAll$ of the states in $\state{\statevertex{i}{j}{1},\dots,\statevertex{i}{j}{g}}$ are overflow states.
        Moreover, the consistency relations and final relations guarantee that $\numOverflowAll$ of the states in $\rightState{\statevertex{i}{j}{1},\dots,\statevertex{i}{j}{g}}$ are overflow states.

        For any state vertex $\statevertex{i}{j}{p}$ it is the case that if at least one of $\state{\statevertex{i}{j}{p}}$ and $\rightState{\statevertex{i}{j}{p}}$ is an overflow state, then $\statevertex{i}{j}{p}$ is violated.
        Hence, the relations guarantee that at least $\numOverflowAll$ vertices of $U^i_j$ are violated for any bag $\bag{i}$ and variable $\variable{j} \in \bag{i}$.
        Also, for any state vertex $\statevertex{i}{j}{p}$, we must have that $\state{\statevertex{i}{j}{p}}$ is an overflow state if and only if $\rightState{\statevertex{i}{j}{p}}$ is an overflow state, as otherwise, $S$ would violate more than $\numOverflowAll$ vertices of $U^i_j$, and then $S$ would have to violate more than $\ell'$ state vertices overall.
        In particular, this also means that whenever $\state{\statevertex{i}{j}{p}}$ or $\rightState{\statevertex{i}{j}{p}}$ is not an overflow state, then, the vertex $\statevertex{i}{j}{p}$ cannot be violated, as the violations due to the overflow states alone cause $\ell'$ violated state vertices.

        Now, fix some bag $\bag{i}$ and $x_j \in X_i$, where we have $x_j \in X_{i-1}$.
        It is of vital importance to prove that $\state{\statevertex{i}{j}{p}} = \state{\statevertex{i-1}{j}{p}}$ for all $p \in \range{g}$.
        If this is given, we can immediately extract a satisfying assignment for $I$, as the constraint relations then ensure that the states correspond to a satisfying assignment.

        The consistency relation $\consistencyRelation{i}{j}$ guarantees that vertex $\statevertex{i}{j}{p}$ is selected if and only if $\statevertex{i-1}{j}{p}$ is selected.
        So, to argue that their states are equivalent, it suffices to focus on the number of selected vertices in $\leftBlock{i}{j}{p}$ and $\leftBlock{i-1}{j}{p}$.
        First, we consider the case that $\statevertex{i-1}{j}{p}$ is selected, i.e., $\state{\statevertex{i-1}{j}{p}} = \sigma_c$.
        \begin{description}
            \item[$\sigma_c$ is an overflow state:] In this case, we immediately get that $\rightState{\statevertex{i-1}{j}{p}}$ is also an overflow-state, and then the consistency relation $\consistencyRelation{i}{j}$ ensures that also $\state{\statevertex{i}{j}{p}}$ is an overflow state.
                  Naturally, this also implies that $\weight{\state{\statevertex{i-1}{j}{p}}} = \weight{\state{\statevertex{i}{j}{p}}}$.

            \item[$\sigma_c$ is not an overflow state and $\sigma$ is finite:] This immediately implies that $\statevertex{i-1}{j}{p}$ has at most $\max \sigma$ selected vertices, so we must have $|S \cap L^{i-1}_j| + |S \cap \compl{L}^{i-1}_j| \leq \max \sigma$.
                  The relation $\consistencyRelation{i}{j}$ however ensures that $|S \cap \compl{L}^{i-1}_j| + |S \cap L^i_j| = \max \sigma$, which yields $\weight{\state{\statevertex{i-1}{j}{p}}} = |S \cap L^{i-1}_j| \leq |S \cap L^{i}_j| = \weight{\state{\statevertex{i}{j}{p}}}$.

            \item[$\sigma$ is simple cofinite:] In this case the situations flips, we know that $|S \cap L^{i-1}_j| + |S \cap \compl{L}^{i-1}_j| \geq \min \sigma$.
                  Moreover, the relation $\consistencyRelation{i}{j}$ guarantees that $\state{\statevertex{i}{j}{p}} = \complState{\rightState{\statevertex{i-1}{j}{p}}}$.
                  In particular, using the fact that $\consistencyRelation{i}{j}$ also ensures that at most $\min \sigma$ vertices of $\compl{L}^{i-1}_j$ are selected, this yields $|S \cap \compl{L}^{i-1}_j| + |S \cap L^i_j| = \min \sigma$,
                  resulting in $\weight{\state{\statevertex{i-1}{j}{p}}} = |S \cap L^{i-1}_j| \geq |S \cap L^{i}_j| = \weight{\state{\statevertex{i}{j}{p}}}$.
        \end{description}

        Let $P_\sigma$ be the set of positions of $\state{\statevertex{i}{j}{1},\dots,\statevertex{i}{j}{\blocksPerVertexGadget}}$ which are $\sigma$ states, and observe that these are also exactly those positions of $\state{\statevertex{i-1}{j}{1},\dots,\statevertex{i-1}{j}{\blocksPerVertexGadget}}$ that are $\sigma$ states.

        \begin{description}
            \item[$\sigma$ is finite:] Using our observations from above, we have that $\weight{\state{\statevertex{i-1}{j}{p'}}}  \leq \weight{\state{\statevertex{i}{j}{p'}}}$ for all $p' \in P_\sigma$.
                  However, the $\sigma$-weight of $\state{\statevertex{i-1}{j}{1},\dots,\statevertex{i-1}{j}{\blocksPerVertexGadget}}$ and of $\state{\statevertex{i}{j}{1},\dots,\statevertex{i}{j}{\blocksPerVertexGadget}}$ must both be $\selWeight$.
                  That is, we must have $\selWeight = \sum_{p' \in P_\sigma} \weight{\state{\statevertex{i-1}{j}{p'}}} = \sum_{p' \in P_\sigma} \weight{\state{\statevertex{i}{j}{p'}}}$.
                  Hence, we must actually have $\weight{\state{\statevertex{i-1}{j}{p}}} = \weight{\state{\statevertex{i}{j}{p}}}$, which directly yields that $\state{\statevertex{i-1}{j}{p}} = \state{\statevertex{i}{j}{p}}$.

            \item[$\sigma$ is simple cofinite:] We have that $\weight{\state{\statevertex{i-1}{j}{p'}}}  \geq \weight{\state{\statevertex{i}{j}{p'}}}$ for all $p' \in P_\sigma$.
                  However, the $\sigma$-weight of $\state{\statevertex{i-1}{j}{1},\dots,\statevertex{i-1}{j}{\blocksPerVertexGadget}}$ and of $\state{\statevertex{i}{j}{1},\dots,\statevertex{i}{j}{\blocksPerVertexGadget}}$ must both be $\selWeight$.
                  That is, we must have $\selWeight = \sum_{p' \in P_\sigma} \weight{\state{\statevertex{i-1}{j}{p'}}} = \sum_{p' \in P_\sigma} \weight{\state{\statevertex{i}{j}{p'}}}$.
                  Hence, we must actually have $\weight{\state{\statevertex{i-1}{j}{p}}} = \weight{\state{\statevertex{i}{j}{p}}}$, which directly yields that $\state{\statevertex{i-1}{j}{p}} = \state{\statevertex{i}{j}{p}}$.
        \end{description}

        The arguments for the case when $\statevertex{i}{j}{p}$ is unselected are analogous, keeping in mind that each vector of $A$ also has the same $\rho$-weight $\unselWeight$.
        So, the states of all state vertices of the same variables are the same.
        Then, the constraint relations guarantee that these correspond to a satisfying assignment of $I$.
    \end{claimproof}
    Now, we can easily show the backwards direction of correctness.

    \begin{claim}
        \label{claim:partial_minimization_high_lvl_backwards}
        If the output instance $I'$ is a yes-instance, then the input instance $I$ is a yes-instance.
    \end{claim}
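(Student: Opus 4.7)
The plan is to combine the two preceding claims in a straightforward way, since they have been set up precisely to cover this final step. By assumption the output instance $I' = (G, k, \ell, \mathcal{R})$ is a yes-instance, so there exists some set $S \subseteq V(G)$ with $|S| \leq k$, violating at most $\ell$ vertices, and fulfilling every relation of $\mathcal{R}$.

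First, I would invoke \cref{claim:partial_minimization_relations_fulfilled_size_property}. Since $S$ satisfies all relations of $\mathcal{R}$ and has size at most $k$, that claim immediately yields that $S$ in fact satisfies all relations of $\mathcal{R}_0$ (the non-superset-closed version). This is the key observation that allows us to pass from the looser constraints $\mathcal{R}$ back to the tighter ones $\mathcal{R}_0$, exploiting that selecting any proper superset of an accepted assignment of some $R_0 \in \mathcal{R}_0$ would exceed the budget $k$.

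Then, the set $S$ violates at most $\ell$ vertices and fulfills all relations of $\mathcal{R}_0$, so \cref{claim:partial_minimization_high_lvl_backwards_assuming_relations} applies directly and concludes that the input instance $I$ is a yes-instance. This proves \cref{claim:partial_minimization_high_lvl_backwards}. There is no real obstacle here since the technical content has already been distilled into the two prior claims; the only thing to check is that the hypotheses of both claims are met by $S$, which is immediate from the definition of $I'$ being a yes-instance.
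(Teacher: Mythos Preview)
Your proposal is correct and follows essentially the same approach as the paper's proof: take a witnessing set $S$ for $I'$, apply \cref{claim:partial_minimization_relations_fulfilled_size_property} to upgrade from $\mathcal{R}$ to $\mathcal{R}_0$, and then apply \cref{claim:partial_minimization_high_lvl_backwards_assuming_relations} to conclude that $I$ is a yes-instance.
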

    \begin{claimproof}
        Assume that $I'$ is a yes-instance, and let $S \subseteq V(G)$ be a selection of size at most $k$ that violates at most $\ell$ vertices of $V(G)$ and fulfills all relations of $\mathcal{R}$.
        Then, by \cref{claim:partial_minimization_relations_fulfilled_size_property} it must be the case that $S$ actually also fulfills all relations of $\mathcal{R}_0$.
        So, by \cref{claim:partial_minimization_high_lvl_backwards_assuming_relations}, the input instance $I$ is a yes-instance.
    \end{claimproof}

    Next, we show that whenever $\rho$ is simple cofinite, the instance is good (see \cref{def:partial_good_instance} for the definition of good instances).
    \begin{claim}
        \label{claim:good_instance}
        If $\rho$ is simple cofinite, then $I'$ is a good instance.
    \end{claim}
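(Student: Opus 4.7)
The plan is to verify the two requirements of \cref{def:partial_good_instance} for the constructed instance $I'$, assuming $\rho$ is simple cofinite. Requirement~(1), that every $S \subseteq V(G)$ satisfying all relations in $\mathcal{R}$ has $|S| \geq k$, is exactly the first statement of \cref{claim:partial_minimization_relations_fulfilled_size_property}, which has already been established. So the real task is to produce, for an arbitrary such $S$, a set $D \subseteq S$ of size at least $\ell$ whose vertices all have strictly more than $\max \sigma$ selected neighbors in $G$.

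If $\sigma$ is (also) simple cofinite, then $\max \sigma = \infty$, the requirement is vacuously true precisely when $\ell = 0$, and indeed in this subcase the construction sets $\ell = 0$, so there is nothing to do. The substantive subcase is $\sigma$ finite and $\rho$ simple cofinite, where $\ell = |K| + \sum_{i,\variable{j}} (\numOverflowAll + \#_\sigma)$. Here the crucial simplification is that $\rho$ being simple cofinite forces $\numOverflowRho = 0$, and hence $\numOverflowAll = \numOverflowSigma$: every overflow coordinate appearing in any codeword of $\usedStates$ is a $\sigma$-overflow, so the corresponding state vertex is necessarily selected. I would then take $D$ to be the union of (i) all $|K|$ vertices of $K$, (ii) for each bag $\bag{i}$ and $\variable{j} \in \bag{i}$, the $\numOverflowSigma$ state vertices of $U^i_j$ that receive a $\sigma$-overflow state, and (iii) for each bag $\bag{i}$ and $\variable{j} \in \bag{i}$, the $\#_\sigma$ selected vertices of $L^i_j \cup \compl{L}^i_j$ whose existence and count follow from the weight identities derived in the proof of \cref{claim:partial_minimization_relations_fulfilled_size_property}. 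These three parts lie in pairwise-disjoint vertex sets and sum to exactly $\ell$.

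It then remains to check that every $v \in D$ has more than $\max \sigma = \sigLargestNonPartial$ selected neighbors. For $v \in K$ the clique $K$ itself already provides $|K|-1 = \allLargestPartial \geq \sigLargestPartial = \sigLargestNonPartial + 1$ selected neighbors. For a $\sigma$-overflow state vertex $\statevertex{i}{j}{p}$, the weight constraint $\weight{\state{\statevertex{i}{j}{p}}} = \allLargestPartial$ enforced by the initial or consistency relation selects all $\allLargestPartial$ vertices of $\leftBlock{i}{j}{p}$, and symmetrically the final or consistency relation selects all $\allLargestPartial$ vertices of $\rightBlock{i}{j}{p}$, giving $2\allLargestPartial > \max \sigma$ selected neighbors. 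Finally, every non-state vertex outside $K$ is, by construction, adjacent to all of $K$, and since every vertex of $K$ is selected (forced by its singleton relation) it receives at least $|K| = \allLargestPartial + 1 > \max \sigma$ selected neighbors. The only subtlety I anticipate is in verifying the disjointness and counts for part (iii), which however is immediate from the fact that $L^i_j$ and $\compl{L}^i_j$ are disjoint and that the contributions across different $(i,\variable{j})$ live in disjoint blocks.
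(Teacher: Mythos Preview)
Your proposal is correct and follows essentially the same argument as the paper: invoke \cref{claim:partial_minimization_relations_fulfilled_size_property} for the size bound, dispose of the cofinite-$\sigma$ subcase via $\ell=0$, and in the finite-$\sigma$ subcase exhibit $D$ as the union of $K$, the $\sigma$-overflow state vertices, and the selected block vertices, each of which has more than $\max\sigma$ selected neighbors thanks to the clique or the overflow weight. One minor point: your claim that the right block $\rightBlock{i}{j}{p}$ is also fully selected at each left-overflow position $p$ is not quite justified under the superset-closed relations $\mathcal{R}$ (the witnessing accepted assignments for the initial/consistency and final/consistency relations may place their overflow coordinates at different positions), but this is harmless since the left block alone already supplies $\allLargestPartial \geq \sigLargestPartial = \max\sigma + 1 > \max\sigma$ selected neighbors.
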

    \begin{claimproof}
        Let $S \subseteq V(G)$ be a set that fulfills all relations of $\mathcal{R}$.
        Then, \cref{claim:partial_minimization_relations_fulfilled_size_property} already establishes that $|S| \geq k$.
        If $\sigma$ and $\rho$ are both simple cofinite, then this is all that is necessary for the output instance to be good, as we also have $\ell = 0$ in that case.

        Otherwise, $\sigma$ is finite and $\rho$ is simple cofinite.
        Recall that each non-state vertex is adjacent to at least $\max \sigma$ vertices of the clique $K$.
        The relations of $\mathcal{R}$ then ensure that each vertex of $K$ is selected, and thus, every selected non-state vertex is violated due to having more than $\max \sigma$ selected neighbors.
        Furthermore, the relations guarantee that we have at least $\#_\sigma$ selected non-state vertices per bag and variable in the bag.
        So, we have at least $|K| + \sum_{i \in \range{t}} \sum_{\variable{j} \in \bag{i}} \#_\sigma$ selected non-state vertices that have more than $\max \sigma$ selected neighbors.

        The relations $\mathcal{R}$ also guarantee that $\numOverflowAll = \numOverflowSigma$ of the state vertex of each bag and variable in the bag have a state that is an overflow-state.
        Concretely, this means that these state vertices are themselves selected, and also have more than $\max \sigma$ selected neighbors.
        This means that at least $\sum_{i \in \range{t}} \sum_{\variable{j} \in \bag{i}} \numOverflowAll$ selected state vertices have more than $\max \sigma$ selected neighbors.
        Overall, we obtain that at least $\ell = |K| + \sum_{i \in \range{t}} \sum_{\variable{j} \in \bag{i}} (\numOverflowAll + \#_\sigma)$ vertices are selected with more than $\max \sigma$ selected neighbors.
    \end{claimproof}
    Note that we generally cannot prove that the output instance is good if $\rho$ is finite.
    For example, if $\rho$ is finite and $\sigma$ is simple cofinite, then we have $\ell \not = 0$, so there cannot be a set $D$ of size at least $\ell$ such that each vertex of $D$ is selected with more than $\max \sigma$ selected neighbors.

    Next, we show some further important properties of the reduction regarding its computation time and pathwidth.
    \begin{claim}
        The pathwidth of the output instance is $\pw \cdot \blocksPerVertexGadget + \Oh(1)$, where $\pw$ is the width of the path decomposition of the primal graph provided together with the input.
        Moreover, a path decomposition of the output instance of width $\pw \cdot \blocksPerVertexGadget + \Oh(1)$ can be computed in polynomial-time.
    \end{claim}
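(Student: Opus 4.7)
The plan is to construct the path decomposition explicitly by expanding each bag $\bag{i}$ of the input decomposition into a short block of output bags, with transition bags inserted between consecutive blocks to cover the consistency and final relations. First, by a standard normalization, I would assume the input decomposition $\bag{1},\dots,\bag{\bagCount}$ is nice in the sense that $|\bag{i} \triangle \bag{i+1}| \leq 1$ for every consecutive pair; this preserves the width $\pw$ and can be done in polynomial time.

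For each primal bag $\bag{i}$, the output block would consist of bags whose common part is $K$ together with the state vertices $U^i_j$ for every $\variable{j} \in \bag{i}$. Since $|K| = \allLargestPartial + 1$ and $|\bag{i}| \leq \pw + 1$, this common part already has size at most $g \cdot \pw + \Oh(1)$, where the $\Oh(1)$ absorbs terms depending only on the constants $g$ and $\allLargestPartial$. Within the block for $\bag{i}$, each initial relation $\initialRelation{i}{j}$ (for $\variable{j} \in \bag{i} \setminus \bag{i-1}$) and each constraint relation $\constraintRelation{\ell}$ with $\constraintBagMapping{\constraint{\ell}} = i$ would be processed by adding a single bag that additionally contains the required $L^i$-vertices (at most $2g\allLargestPartial$ extra), after which these are immediately forgotten. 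Between the blocks of $\bag{i}$ and $\bag{i+1}$, I would process each $\variable{j} \in \bag{i}$ in turn: if $\variable{j} \in \bag{i+1}$, I introduce $U^{i+1}_j \cup L^{i+1}_j \cup \compl{L}^i_j$ in a single bag (which then contains the entire scope of $\consistencyRelation{i+1}{j}$) and then forget $L^{i+1}_j, \compl{L}^i_j$, and $U^i_j$; if $\variable{j} \notin \bag{i+1}$, I instead introduce only $\compl{L}^i_j$ in a single bag (containing $\finalRelation{i}{j}$) and then forget $\compl{L}^i_j$ together with $U^i_j$.

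A direct count would then show that every bag contains $K$, plus state vertices from at most $\pw + 2$ different $U$-groups (those of $\bag{i}$ together with at most one freshly introduced $U^{i+1}_j$ during a consistency transition), plus the block vertices of at most one relation being processed, for a total of $(\pw + 2) g + \Oh(g \allLargestPartial) = g \cdot \pw + \Oh(1)$. The verifications that every edge of $G$ is covered, that each relation scope appears in some bag, and that each vertex's set of bags is contiguous would all be routine from the construction. I expect the only subtlety to be the contiguity condition during the transition phase, which is precisely why the consistency relations must be processed one variable at a time and each $U^i_j$ must be forgotten immediately after its relation, preventing the simultaneous presence of state vertices of both $\bag{i}$ and $\bag{i+1}$ for all variables of $\bag{i} \cap \bag{i+1}$. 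Since the total number of output bags is $\Oh(g \cdot \bagCount)$, the decomposition can clearly be computed in polynomial time.
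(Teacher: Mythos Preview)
Your overall plan is essentially the same as the paper's, and the size count is right, but the step you wave away as ``routine'' contains a real gap: contiguity of the block vertices $L^i_j$ fails in your construction whenever a consistency relation and a constraint relation share a variable. Concretely, take $x_j \in \bag{i-1}\cap\bag{i}$ such that some constraint $C_\ell$ with $\constraintBagMapping{C_\ell}=i$ involves $x_j$. In your transition from block $i-1$ to block $i$ you introduce $L^i_j$ in the single bag realizing $\consistencyRelation{i}{j}$ and then immediately forget it; later, inside block $i$, you re-introduce $L^i_j$ in the bag realizing $\constraintRelation{\ell}$. Between these two occurrences sit the consistency bags for all the other variables of $\bag{i-1}\cap\bag{i}$ (and possibly an initial-relation bag), so the bags containing $L^i_j$ do not form an interval. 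The paper handles exactly this by adding the full scope of $\constraintRelation{\ell}$ to \emph{every} intermediate bag $\hat X_i^1,\dots,\hat X_i^z$ and to $\hat X_i$; since $\gamma$ is injective this is at most one constraint per index and hence only $\Oh(1)$ extra vertices per bag, but without it the decomposition is not valid.

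A second, smaller issue: your first move is to make the input decomposition nice, but the output instance $I'$ has already been built from the original bags $\bag{1},\dots,\bag{t}$ --- the very vertex set of $G$ is indexed by those bags. Normalizing at this point does not give you a decomposition of the instance the claim is about. You can legitimately argue that the normalization should be done before the construction (which is fine for the enclosing lemma), but you should say so explicitly rather than treat it as a harmless local step; the paper sidesteps this by working with arbitrary $\bag{i-1},\bag{i}$ directly.
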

    \begin{claimproof}
        Recall that a path decomposition of the output instance is a path decomposition of the graph $G$, such that additionally, for each relation there exists a bag containing all vertices of the relation scope.
        Throughout the proof, we treat any set whose indices are out of bounds as the empty set.
        Also, recall that we set $\bag{0} = \bag{t+1} = \emptyset$ for convenience.

        Begin by copying the path decomposition provided with the input, denote the copy as $\outputBag{1},\dots,\outputBag{\bagCount}$.
        We begin by modifying the path decomposition by introducing two empty bags $\outputBag{0}$ and $\outputBag{\bagCount+1}$.
        In a first step, for any bag $\outputBag{i}$ and $\variable{j} \in \outputBag{i}$, we replace $\variable{j}$ with the state vertices $U^i_j$.

        Proceed as follows for any $i \in \range[1]{\bagCount+1}$.
        Order the variables of $\bag{i-1} \cup \bag{i}$ as \[\variable{\lambda_1^i},\dots,\variable{\lambda_r^i},\variable{\lambda_{r+1}^i},\dots,\variable{\lambda_q^i},\variable{\lambda_{q+1}^i},\dots,\variable{\lambda_z^i},\] such that for any $j \in \range{r}$, $\variable{\lambda_j^i}$ is in $\bag{i-1}$ but not in $\bag{i}$, for any $j \in \range[r+1]{q}$, we have that $\variable{\lambda_j^i}$ is in $\bag{i-1} \cap \bag{i}$, and for any $j \in \range[q+1]{z}$, we have that $\variable{\lambda_j^i}$ is in $\bag{i} \setminus \bag{i-1}$.
        Then, for any pair of adjacent bags $\outputBag{i-1}$ and $\outputBag{i}$, and any $j \in \range{z}$, we introduce initially empty bag $\outputBag{i}^j$.
        In the path decomposition, we put these new bags between $\outputBag{i-1}$ and $\outputBag{i}$ while keeping them in ascending order.
        That is, the bags $\outputBag{i-1},\outputBag{i}^1,\dots,\outputBag{i}^z,\outputBag{i}$ will form a subsequence of the path decomposition.
        The idea of these additional bags is that they should allow us to slowly transition from $\outputBag{i-1}$ to $\outputBag{i}$.
        So, we need to explain which vertices we put into them next.
        \begin{itemize}
            \item We first handle the vertices of variables in $\bag{i-1}$ which are not in $\bag{i}$.
                  For all $j \in \range{r}$, we add the vertices $\outputBag{i-1} \cup \compl{L}^{i-1}_{\lambda_j^i}$ to $\outputBag{i}^{j}$.
            \item Now, we want to handle the vertices of the variables in $\bag{i-1} \cap \bag{i}$.
                  For all $j \in \range[r+1]{q}$, we add the vertices $\bigcup_{j' \in \range[r+1]{j-1}} U^{i}_{\lambda_{j'}^i} \cup \bigcup_{j' \in \range[j+1]{q}} U^{i-1}_{\lambda_{j'}^{i}} \cup
                      U^i_{\lambda_j^i} \cup U^{i-1}_{\lambda_j^i} \cup \compl{L}^{i-1}_{\lambda_j^i} \cup L^i_{\lambda_j^i}$ to $\outputBag{i}^{j}$.
            \item We handle those vertices of the variables which are part of $\bag{i}$ but not part of $\bag{i-1}$.
                  For all $j \in \range[q+1]{z}$, we add the vertices $\outputBag{i} \cup L^i_{\lambda_j^i}$ to $\outputBag{i}^j$.
            \item If there is a constraint $C_\ell$ such that $\constraintBagMapping{C_\ell} = i$, then, we add all vertices of $\scp{\constraintRelation{\ell}}$ to the bag $\outputBag{i}^j$, for any $j \in \range{z}$, and also to the bag $\outputBag{i}$.
            \item  Finally, we add all vertices of $K$ to each bag.
        \end{itemize}
        This concludes the description of the path decomposition, we will give a brief example before proving that it is correct.
        \begin{example}
            For the input instance considered in \cref{fig:intermediate_construction} where the path decomposition of the primal graph is $\{x_1\},\{x_1,x_2\},\{x_1,x_2,x_3\},\{x_2\},\{x_2\}$, we would first add two empty bags at the start and end, and then replace all variables with their respective state-vertices, yielding the sequence of bags $\emptyset, U^1_1,U^2_1 \cup U^2_2, U^3_1 \cup U^3_2 \cup U^3_3, U^4_2, U^5_2,\emptyset$.
            Then, we would add the additional bags and vertices we require to transition between bags.
            Let us illustrate this step once for the transition between the bag $\outputBag{2}$ containing $U^2_1 \cup U^2_2$ and the bag $\outputBag{3}$ containing $U^3_1 \cup U^3_2 \cup U^3_3$.
            First, we order the variables of the bags $X_2 = \{x_1,x_2\}$ and $X_3 = \{x_1, x_2,x_3\}$.
            Since variables $x_1,x_2$ appear in both bags, and variable $x_3$ only appears in $X_3$, one potential order of these variables is $x_1,x_2,x_3$.
            The respective values $r$, $q$ and $z$ of this order fulfill $r = 0$, $q = 2$ and $z = 3$.
            We do not add any bags for variables which are in $\bag{3}$ but not in $\bag{2}$, since there are no such vertices.
            Then, we handle the vertices corresponding to the variables which are in $X_2 \cap X_3$.
            Here, we go through all $j \in \range[r+1]{q} = \{1,2\}$.
            First, we fix $j = 1$.
            We add a bag $\outputBag{3}^1$ containing $U^2_2 \cup U^3_1 \cup U^2_1 \cup \compl{L}^2_1 \cup L^3_1$, and then for $j = 2$ a bag $\outputBag{3}^2$ containing $U^3_1 \cup U^3_2 \cup U^2_2 \cup \compl{L}^2_2 \cup L^3_2$.
            Now, since $x_3$ is $\bag{3}$, but not in $\bag{2}$, we add another bag $\outputBag{3}^3$ that contains $U^3_1 \cup U^3_2 \cup U^3_3 \cup L^3_3$.
            If there is a constraint $C_\ell$ such that $\gamma(C_\ell) = 3$, then we add $\scp{\constraintRelation{\ell}}$ to bags $\outputBag{3}^1,\outputBag{3}^2,\outputBag{3}^3$ and $\outputBag{3}$.
            Finally, the vertices of the clique $K$ are added to all bags.
            Overall, the interval between the two considered bags $\outputBag{2}$ and $\outputBag{3}$ consists of the bags $\outputBag{2},\outputBag{3}^1,\outputBag{3}^2,\outputBag{3}^3,\outputBag{3}$.
        \end{example}

        Now, we formally show that the sketched sequence of bags is actually a valid path decomposition.
        First, we show that any vertex appears in some bag, and that for any initial-relation, final-relation and consistency-relation there exists some bag containing the relation scope.
        For a vertex of $K$ and all relations on $K$ this is easy, as all vertices of $K$ are in all bags.
        Each other vertex is in the scope of an initial-relation, a consistency-relation, or a final-relation.
        Observe that for any $\variable{j}$, if $\variable{j}$ is in $\bag{i} \setminus \bag{i-1}$, then, $\outputBag{i}^{j'}$ contains $U^i_j \cup L^i_j = \scp{\initialRelation{i}{j}}$, where $j'$ is so that $j = \lambda_{j'}^i$.
        Similarly, if $\variable{j}$ is both in $\bag{i}$ and $\bag{i-1}$, there exists bag $\outputBag{i}^{j'}$ that contains $U^i_j \cup L^i_j \cup U^{i-1}_j \cup \compl{L}^{i-1}_j = \scp{\consistencyRelation{i}{j}}$,
        where $j'$ is so that $j = \lambda_{j'}^i$.
        Finally, if $\variable{j}$ is in $\bag{i-1}$ but not in $\bag{i}$, then let $j'$ such that $j = \lambda^{i}_{j'}$.
        Then bag $\outputBag{i}^{j'}$ contains $U^{i-1}_j \cup \compl{L}^{i-1}_j = \scp{\finalRelation{i-1}{j}}$.
        This means that we indeed cover all initial-, consistency-, and final-relations, and thus also all vertices.
        Regarding the constraint relation $\constraintRelation{\ell}$, for any constraint $C_\ell$, observe that $\scp{\constraintRelation{\ell}}$ is contained in all bags between and including bag $\outputBag{\constraintBagMapping{C_\ell}}^1$ and $\outputBag{\constraintBagMapping{C_\ell}}$.

        It is also not difficult to see that each edge of $G$ appears in some bag.
        In particular, all edges of $K$ and going to $K$ are covered because each vertex of $V(G)$ is in some bag, and all vertices of $K$ are in all bags.
        Otherwise, an edge that is within a set $U^i_j$, $L^i_j$, or $\compl{L}^i_j$ is covered as these sets appear in some bag.
        The only other types of edges are between $U^i_j$ and $L^i_j$, or between $U^i_j$ and $\compl{L}^i_j$ (observe that there are no edges between $L^i_j$ and $\compl{L}^i_j$).
        The sets $U^i_j \cup L^i_j$ are always subject to an initial-relation or consistency-relation, and the sets $U^i_j \cup \compl{L}^i_j$ are always subject to a consistency-relation or a final-relation.
        Hence, we have already argued that these are covered.

        Finally, it remains to argue that for any vertex $v$, all nodes of bags containing $v$ form a connected subpath of the path decomposition.
        For the vertices of $K$, this is clear.
        For the vertices of the sets $L^i_j$ or $\compl{L}^i_j$, it is also the case that either they appear in exactly one bag, or they are subject to some constraint relation, in which case they appear in exactly the bags in which the scope of the constraint relation appears, which form a connected subpath.

        So, all that remains is arguing about vertices in a set $U^i_j$, which is the set of state-vertices of a variable $\variable{j} \in \bag{i}$.
        Clearly $U^i_j$ is part of $\outputBag{i}$.
        We now show that there is some integer $D_\ell$ such that $U^i_j$ is in all bags between and including $\outputBag{i}^{D_\ell}$ and $\outputBag{i}$, but in no bag before $\outputBag{i}^{D_\ell}$.
        
        Consider that $U^i_j$ is subject to a constraint relation, in that case $U^i_j$ is explicitly added to all bags between and including $\outputBag{i}^1$ and $\outputBag{i}$.
        Moreover, $U^i_j$ is not part of bag $\outputBag{i-1}$.
        Therefore, in this case $D_\ell = 1$.

        Otherwise, if we have that $\variable{j}$ is not in $\bag{i-1}$, then there is some $q$ (namely, the lowest integer $q$ such that $\variable{\lambda_{q+1}^i}$ is in $\bag{i} \setminus \bag{i-1}$) such that $U^i_j$ is part of the subsequence $\outputBag{i}^{q+1},\dots,\outputBag{i}$, but $U^i_j$ is not part of any bag before $\outputBag{i}^{q+1}$.
        In this case $D_\ell  = q + 1$

        Otherwise, if we have that $\variable{j}$ is in $\bag{i-1}$, then $U^i_j$ is in all bags of the subsequence $\outputBag{i}^{j'},\dots,\outputBag{i}$, but not part of any bag before $\outputBag{i}^{j'}$, where $j'$ is such that $j = \lambda_{j'}^i$.
        In this case $D_\ell = j'$.

        Now, we show that there is some integer $D_r$ such that $U^i_j$ is in all bags between and including $\outputBag{i}$ and bag $\outputBag{i+1}^{D_r}$.

        If we have that $\variable{j}$ is not in $\bag{i+1}$, then there is some integer $r$ (namely, the highest integer $r$ such that $\variable{\lambda_{r}^{i+1}}$ is not in $\bag{i+1}$), such that $U^i_j$ is contained in the subsequence of bags $\outputBag{i},\dots,\outputBag{i+1}^{r}$, but not in any bag after $\outputBag{i+1}^r$.
        In this case $D_r = r$.

        If we have that $\variable{j}$ is in $\bag{i}$ and $\bag{i+1}$, then let $j'$ be such that $j = \lambda^{i+1}_{j'}$, and observe that $U^i_j$ is in the subsequence $\outputBag{i},\dots,\outputBag{i+1}^{j'}$, but $U^i_j$ is not contained in any bag after $\outputBag{i+1}^{j'}$.
        In this case $D_r = j'$.
        
        Overall, we have that $U^i_j$ is exactly in the bags $\outputBag{i}^{D_\ell},\dots,\outputBag{i},\dots,\outputBag{i+1}^{D_r}$.
        It is now not difficult to see that our approach results in a valid path decomposition of width $\pw \cdot \blocksPerVertexGadget + \Oh(1)$, because our initial step of replacing the variables with the state vertices blows up the pathwidth by a factor of $\blocksPerVertexGadget$, and all the other steps ensure that each bag contains only $\pw \cdot \blocksPerVertexGadget + \Oh(1)$ state vertices, as well as a constant number of further vertices, where we utilize the fact that the sizes of $L^i_j$ and $\compl{L}^i_j$ for any $i,j$ as well as the size of $K$ are constant.
        It is also clear that we can compute the sketched decomposition in polynomial-time.
    \end{claimproof}

    \begin{claim}
        The arity of the output instance is at most $d$ for some constant $d$ depending only on $\varepsilon$ and $\sigma$ and $\rho$.
    \end{claim}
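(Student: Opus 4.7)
The plan is to verify the arity bound by inspecting each family of relations created by the construction and observing that every scope is a union of a constant number of sets, each of size depending only on $g$ and $\allLargestPartial$. Since $g$ was fixed in the opening step of the proof as a constant depending only on $\varepsilon$, $\sigma$, and $\rho$ (via \cref{thm:choosing_alphabet_size} applied to $|\allStatesPartial|$), and since $\allLargestPartial$ depends only on $\sigma$ and $\rho$, this immediately yields the desired bound.

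Concretely, I would enumerate the relations in $\mathcal{R}_0$ and bound each scope. For any $i,j$, the set $U^i_j$ has size $g$, while the sets $L^i_j$ and $\compl{L}^i_j$ each have size $g \cdot \allLargestPartial$. Initial relations $\initialRelation{i}{j}$ have scope $U^i_j \cup L^i_j$ of size $g(1+\allLargestPartial)$; final relations $\finalRelation{i}{j}$ have scope $U^i_j \cup \compl{L}^i_j$ of the same size; consistency relations $\consistencyRelation{i}{j}$ have scope $U^i_j \cup U^{i-1}_j \cup L^i_j \cup \compl{L}^{i-1}_j$ of size $2g(1+\allLargestPartial)$; and constraint relations $\constraintRelation{\ell}$ have scope $L^i_j \cup U^i_j \cup L^i_{j'} \cup U^i_{j'}$ of size $2g(1+\allLargestPartial)$. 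The singleton relations enforcing that each vertex of $K$ is selected have arity $1$.

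Finally, I would observe that passing from $\mathcal{R}_0$ to $\mathcal{R}$ via the superset-closure operation does not alter the scope of any relation — only the set $\acc{R}$ of accepted assignments is enlarged, while $\scp{R} = \scp{R_0}$ by construction. Hence setting $d := 2g(1+\allLargestPartial)$ gives an arity bound that depends only on $\varepsilon$, $\sigma$, and $\rho$, completing the claim. There is no real obstacle here: the argument is a routine enumeration, and the key substantive work — controlling $g$ as a function of $\varepsilon$ — has already been carried out in the alphabet-size step of the theorem's proof.
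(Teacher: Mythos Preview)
Your proposal is correct and follows essentially the same approach as the paper's own proof: both argue that every relation scope is a union of a constant number of sets of the form $U^i_j$, $L^i_j$, $\compl{L}^i_j$, each of size depending only on $g$ and $\allLargestPartial$, and hence bounded by a constant in $\varepsilon$, $\sigma$, $\rho$. Your version is slightly more explicit, computing the concrete bound $d = 2g(1+\allLargestPartial)$ and noting that the superset-closure step preserves scopes, but the substance is identical.
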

    \begin{claimproof}
        The relations upon $K$ have arity one.
        The other relations have a relation scope containing a constant number of sets of the form $U^i_j$, $L^i_j$, $\compl{L}^i_j$ (for some values $i,j$).
        Then, the claim follows from the fact that $\blocksPerVertexGadget$ is a constant only depending on $\varepsilon, \sigma$, and $\rho$, and that each block $\leftBlock{i}{j}{p}$ and $\rightBlock{i}{j}{p}$ also has constant size depending only on $\sigma$ and $\rho$, yielding that also the size of $U^i_j$, $L^i_j$ and $\compl{L}^i_j$ is constant for any $i,j$.
    \end{claimproof}

    Recalling that $\varepsilon$, $g$, and $q$ are constants, it is also not difficult to see that the reduction runs in polynomial-time.
    \begin{claim}
        The reduction runs in polynomial-time.
    \end{claim}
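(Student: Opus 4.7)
\begin{claimproof}
The plan is to walk through the construction and observe that each component takes polynomial time in $|I|$. The crucial observation is that every ``alphabet-size'' constant fixed at the start---namely $q$, $g$, the block $\usedStates \subseteq \allStatesPartial^g$, and the derived quantities $\selStateVertexPerGadget$, $\unselStateVertexPerGadget$, $\numOverflowSigma$, $\numOverflowRho$, $\selWeight$, $\unselWeight$, $\allLargestPartial$, and $|K|$---depends only on $\varepsilon, \sigma, \rho$ and can be hard-coded, so obtaining them contributes only an additive constant.

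First I would handle the bookkeeping steps: duplicating bags of the provided path decomposition to obtain the injective map $\constraintBagMapping{}$ in a single polynomial-time pass, and relabeling the input instance from $\range{|\usedStates|}$ to $\usedStates$ through a fixed bijection. Next I would build the graph. The total number of state vertices is $\sum_{i \in \range{\bagCount}} |\bag{i}| \cdot \blocksPerVertexGadget = \Oh(\bagCount \cdot \pw)$, each carries $\Oh(1)$ block vertices, and the clique $K$ is of constant size, so enumerating all vertices together with their incident edges (within each block, between block and state vertex, from non-state vertices to $K$, and within $K$) takes polynomial time.

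Then I would write down the relations. The set $\mathcal{R}_0$ has $\Oh(\bagCount \cdot \pw + \constraintCount + |K|)$ members, and by the arity bound shown in the previous claim, each has scope of constant size. The list $\acc{\cdot}$ of accepted assignments can therefore be produced by brute force in constant time per relation: one enumerates every subset of the scope and checks the stated local conditions (membership of the induced state tuple in $\usedStates$, block-weight counts, or membership in $\acc{\constraint{\ell}}$ for a constraint relation). Computing the superset-closure $\mathcal{R}$ from $\mathcal{R}_0$ is then also constant-time per relation, since both the scope size and $|\acc{R_0}|$ are bounded by constants.

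Finally, the parameters $k$ and $\ell$ are explicit sums of $\Oh(1)$-many constant-size terms ranging over pairs $(i, \variable{j})$ with $\variable{j} \in \bag{i}$, and are therefore polynomially computable. I do not foresee a genuine obstacle here: the only subtlety worth noting is that the constants from \cref{thm:choosing_alphabet_size} and the chosen block $\usedStates$ are fixed before the reduction sees the input, which is immediate from their definition.
\end{claimproof}
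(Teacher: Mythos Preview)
Your proposal is correct and follows the same approach as the paper, which simply notes that since $\varepsilon$, $g$, $q$ are constants, the reduction is polynomial-time. Your write-up is far more detailed than the paper's one-line proof, but the underlying observation---that every alphabet-related quantity is a constant depending only on $\varepsilon,\sigma,\rho$, and hence every per-relation and per-gadget computation has constant cost---is identical.
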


    \subparagraph*{Finishing the proof.}

    Assume that there is an $\varepsilon > 0$, and an algorithm
    that can solve instances $\tilde I$ of arity at most $d$, which additionally must be good when $\rho$ is simple cofinite, of \minParGenDomSetRel{} in time $(|\allStatesPartial| - \varepsilon)^\pw \cdot |\tilde I|^{\Oh(1)}$ when provided with a path decomposition of width $\pw$ of $\tilde I$.

    We take an instance $I$ of 2-\problem{CSP} over the alphabet $A$ as input, and utilize the reduction of this proof for the constant $\varepsilon$.
    This produces an equivalent instance $I'$ with pathwidth $\pw \cdot g + \Oh(1)$, where $\pw$ is the pathwidth of the decomposition of the primal graph of $I$ that is provided together with the input.
    We then apply our hypothetical fast algorithm on instance $I'$.
    This whole procedure decides instance $I$ in time
    \begin{align*}
        |I|^{\Oh(1)} + (|\allStatesPartial| - \varepsilon)^{\pw \cdot g + \Oh(1)} \cdot |I'|^{\Oh(1)} & \leq (|\allStatesPartial| - \varepsilon)^{\pw \cdot g} \cdot |I|^{\Oh(1)} \\
                                                                                                      & \leq (|A| - \varepsilon')^\pw \cdot |I|^{\Oh(1)},
    \end{align*}
    for constant $\varepsilon' > 0$.
    As we have that $|A| \geq 3$, this would contradict the \pwseth{} by \cref{thm:lampis}.
\end{proof}

Next, we show that the same type of statement also holds for the problem \parGenDomSetRel{}, and even when $\rho$ is finite.
\thmDecisionHighLvl*{}
\begin{proof}
    The used construction is identical to the one of the proof of \cref{thm:minimization_high_lvl}, with the exception that we actually use the relations of $\mathcal{R}_0$ for our output instance, and that we do not require an upper bound on the solution size.

    The correctness for the forward direction then directly follows from the proof of \cref{claim:minimization_high_lvl_forward}, which did not use the property that $\rho$ is simple cofinite.
    For the backwards direction, \cref{claim:partial_minimization_high_lvl_backwards_assuming_relations} proves correctness, which also does not exploit the fact that $\rho$ is simple cofinite.

    The bound on the pathwidth, the polynomial-time computability of the reduction, and the arity bound also work identically.
    Then, the concluding proof of the lemma directly follows in the same manner as well.
\end{proof}

 \label{sec:intermediate_lb_partial}

\section{Replacing Relations with Gadgets}
\label{sec:realizing_relations}
To complete our lower bound, we need to get rid of the set of relations that the problem instances resulting form the intermediate lower bound can have.
All reductions in this section have the property that they do not change the input value $\ell$ of allowed violations.
Hence, although we craft them for the partial problem, they will automatically also work for the nonpartial problem, which is the same as the partial problem with $\ell = 0$.
In that sense, the reductions we provide here are quite powerful: they can handle violated vertices, but do not rely on violating vertices to function.
Our reduction chain is different depending on whether $\rho$ is finite or not.
We first cover the case when $\rho$ is finite in \cref{sec:replacing_relations_rho_finite}, and later handle the case when $\rho$ is simple cofinite in \cref{sec:replacing_relations_rho_simple_cofinite}.

\subsection{Set \texorpdfstring{$\mathbf{\rho}$}{Rho} is Finite}
\label{sec:replacing_relations_rho_finite}

In this case, \cref{thm:minimization_high_lvl} cannot be used, as although most proofs of that lemma also work for finite $\rho$, the output instance of the provided construction would not necessarily be a good instance in that case, which makes things difficult.
Luckily, it will turn out that we can actually realize arbitrary relations well in this setting.
This means that it is more convenient to use the lower bound from \cref{thm:decision_high_lvl}, as we do not have to worry about the solution size of the input problem.

Our strategy will be to first utilize a result by Focke et al. \cite{fockeTightComplexityBoundsLowerBound}, which allows us to replace the relations of the input instance by $\hwRelation{}$-relations.
A relation $R$ is a $\hwRelation{}$-relation if and only if $\acc{R} = \{X \mid X \subseteq \scp{R}, |X| = 1\}$, that is, a relation that accepts a selection from its scope if and only if it has size exactly one.

\thmReplacingArbitraryHWOne*{}
\begin{proof}
    Focke et al. \cite[Corollary 8.8]{fockeTightComplexityBoundsLowerBound} show that, for the nonpartial problem and all constants $d$, there is a pathwidth-preserving reduction from the problem with arbitrary relations of arity at most $d$ to the problem using only $\hwRelation{}$ relations, such that the arity of the output instance is at most $2^d + 1$.

    We can actually use the same reduction they use, with the only difference being that our input instance contains the input integer $\ell$ of allowed violations, and the output instance contains the same integer $\ell$ of allowed violations.

    We now briefly explain why we can use their reduction even for the partial problem without going into too much depth to avoid lengthy explanations which are identical to those already given in the literature.
    In case more details are desired, we invite the reader to inspect the proof given in \cite{fockeTightComplexityBoundsLowerBound}.
    Within the reduction, Focke et al. modify the input instance only by adding additional relations, which are $\hwRelation$-relations, and copies of the useful provider $H$ of \cref{thm:sigma_s_rho_s_provider_gadget}.
    Note that the assumption $\rho \neq \{0\}$ is used to justify the existence of the useful provider.
    Each copy of $H$ is not connected to any vertex outside $H$ with an edge, but subject to relations.

    Crucially, for any copy of $H$ only the portal $u$ of $H$ can be part of a relation that also has vertices outside $H$ in its scope.
    Moreover, the graph $H$ has a selection that selects $u$ and satisfies all vertices of $H$, and a selection that does not select $u$ and satisfies all vertices of $H$.

    For the forward direction, one can then easily extend a solution $S$ of the input instance violating at most $\ell$ vertices to a solution $S'$ of the output instance violating at most $\ell$ vertices by using the same approach Focke et al. use.
    The selections $S$ and $S'$ are identical on the vertex set of the input graph.
    In particular, in any copy of $H$, one always has a selection making all vertices of $H$ satisfied, so $S'$ will only violate exactly those vertices of the output instance which $S$ already violated.

    For the backwards direction, it can be shown that any set $S'$ fulfilling all relations of the output instance must fulfill all relations of the input instance when restricted to the vertex set of the input instance.
    Moreover, when $S'$ violates at most $\ell$ vertices of the output instance, then $S'$ restricted to the vertices of the input graph can also violate at most $\ell$ vertices there, as no vertex of a copy of $H$ is adjacent to any vertex of the input graph.
\end{proof}

So, we may assume that the input instances of \parGenDomSetRel{} only use $\hwRelation{}$-relations.
We must thus now only show that we can simulate these relations with appropriate graphs.
We begin by introducing a very useful gadget for our purposes.
\begin{definition}[Tremendous Gadget]
    \label{def:tremendous_gadget}
    Let $\sigma$, $\rho$ be sets of non-negative integers.
    A tremendous gadget is a triple $(H,u, \portalSelectionGadgetConstant)$,
    where $H$ is a graph, $\portalSelectionGadgetConstant$ a non-negative integer, and $u \in V(H)$ is called the portal, such that any set $S \subseteq V(H)$
    \begin{enumerate}
        \item selects $u$ and at least $\portalSelectionGadgetConstant$ vertices of $H$, or
        \item violates a vertex of $V(H) \setminus \{u\}$, or
        \item violates $u$ relative to both $(\sigma, \rho)$ and $(\sigma - 1, \rho - 1)$.
    \end{enumerate}
    Moreover, at least one set $S$ of size $\portalSelectionGadgetConstant$ that violates no vertex of $H$ exists.
\end{definition}

Now, we show that these type of gadgets actually exist when we need them.

\begin{figure}
    \centering
    \begin{subfigure}{0.45\linewidth}
        \centering
        \includegraphics{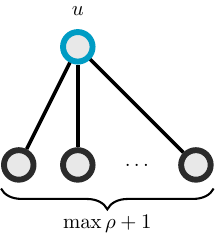}
        \subcaption{The tremendous gadget when $\min \sigma = 0$ and $\min \rho = 1$.}
    \end{subfigure}
    \hfill
    \begin{subfigure}{0.45\linewidth}
        \centering
        \includegraphics{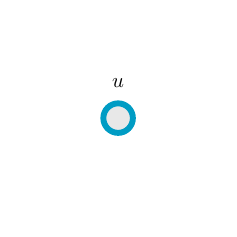}
        \subcaption{The tremendous gadget when $\min \sigma = 0$ and $\min \rho = 2$.}
    \end{subfigure}

    \begin{subfigure}{0.45\linewidth}
        \centering
        \includegraphics{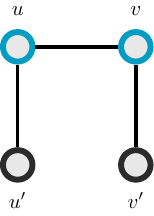}
        \subcaption{The tremendous gadget when $\min \sigma = \min \rho = 1$.}
    \end{subfigure}
    \hfill
    \begin{subfigure}{0.45\linewidth}
        \centering
        \includegraphics{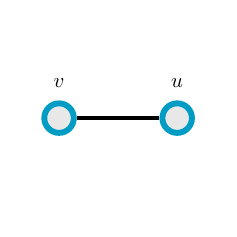}
        \subcaption{The tremendous gadget when $\min \sigma = 1$ and $\min \rho = 2$.}
    \end{subfigure}

    \begin{subfigure}{0.45\linewidth}
        \centering
        \includegraphics{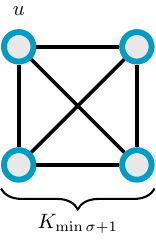}
        \subcaption{The tremendous gadget when $\min \sigma \geq 2$.}
    \end{subfigure}
    \caption{The tremendous gadgets of \cref{thm:tremendous_gadgets_exist} and a selection of size $\portalSelectionGadgetConstant$ that satisfies all vertices of the gadget. Selected vertices are drawn with a blue border.}
    \label{fig:tremendous_gadget}
\end{figure}

\begin{lemma}
    \label{thm:tremendous_gadgets_exist}
    Let $\sigma$ be a finite or simple cofinite set and $\rho$ a finite set with $0 \notin \rho$.
    Then, there is a tremendous gadget.
\end{lemma}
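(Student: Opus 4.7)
The plan is to prove the lemma by explicit construction, doing a case analysis on the values $\min \sigma$ and $\min \rho$ that corresponds to the five subcases in \cref{fig:tremendous_gadget}. Since $\rho$ is finite with $0 \notin \rho$, we have $\min \rho \geq 1$, so the natural breakdown is along whether $\min \sigma \in \{0,1\}$ or $\min \sigma \geq 2$, and likewise whether $\min \rho = 1$ or $\min \rho \geq 2$.

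The first step I would take is to simplify option (3) of \cref{def:tremendous_gadget}: if $u \notin S$ and $k = |N_H(u) \cap S|$, then option (3) is equivalent to $k \notin \rho$ and $k + 1 \notin \rho$. This guides the design of each gadget: the auxiliary structure should ensure that whenever $u$ is unselected and no auxiliary vertex is violated, the number of selected neighbors $k$ of $u$ lies in the set of integers avoided by both $\rho$ and $\rho - 1$. Observe also that when $u$ is selected, option (1) holds automatically as long as $|S| \geq \portalSelectionGadgetConstant$, so the interesting analysis concerns selections with $u \notin S$.

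For each case, I would exhibit a concrete gadget, specify the portal $u$ and the constant $\portalSelectionGadgetConstant$, and then verify the two properties required: the existence of a size-$\portalSelectionGadgetConstant$ selection that leaves every vertex happy, and that any other selection satisfies option (1), (2), or (3). For example, when $\min \sigma = 0$ and $\min \rho \geq 2$, the single-vertex gadget $H = \{u\}$ with $\portalSelectionGadgetConstant = 1$ suffices, because selecting $u$ alone gives $0 \in \sigma$ and leaving $u$ unselected forces $k = 0$ with both $0, 1 \notin \rho$. When $\min \sigma = 0$ and $\min \rho = 1$, one can instead attach $\max \rho + 2$ leaves to $u$: the witness $\{u\}$ gives each leaf exactly one selected neighbor, while in the reverse direction either an unselected leaf is violated (having zero selected neighbors with $0 \notin \rho$) or every leaf is selected and $u$ accumulates $\max \rho + 2$ selected neighbors, which falls outside both $\rho$ and $\rho - 1$. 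The cases $\min \sigma = 1$ are handled by $u$ together with a single additional neighbor (for $\min \rho \geq 2$) or a slightly enlarged variant (for $\min \rho = 1$), with $\portalSelectionGadgetConstant = 2$.

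The main obstacle is the case $\min \sigma \geq 2$: a witness selection now needs $u$ together with at least $\min \sigma$ selected neighbors that are themselves happy under $\sigma$, which forces a symmetric auxiliary structure such as a clique $K_{\min \sigma + 1}$ on $\{u, v_1, \ldots, v_{\min \sigma}\}$ with $\portalSelectionGadgetConstant = \min \sigma + 1$. Here selecting the whole clique leaves every vertex with exactly $\min \sigma \in \sigma$ selected neighbors. Verification then reduces to checking all partial selections of the clique when $u \notin S$: if $|S| = j \in \{0, \ldots, \min \sigma\}$ then each selected $v_i$ has only $j - 1 < \min \sigma$ selected neighbors and is violated (triggering option 2) unless $j = 0$, in which case $u$ has $0$ neighbors and we argue option (3) directly from $\min \rho \geq 1$. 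Organizing this case split so that every pair $(\sigma, \rho)$ permitted by the hypothesis is covered, without leaving gaps in the verification, is the main technical content of the proof.
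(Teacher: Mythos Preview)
Your approach matches the paper's exactly: the same five-way case split on $(\min\sigma,\min\rho)$ and the same concrete gadgets (star, single vertex, short path, edge, clique). Two points need attention in the clique case $\min\sigma\ge 2$.

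First, your treatment of $j=0$ is not quite right. You write that when $u\notin S$ and $S=\emptyset$, option~(3) follows ``directly from $\min\rho\ge 1$''. But option~(3) requires both $0\notin\rho$ and $1\notin\rho$; the hypothesis only gives $\min\rho\ge 1$, so when $\min\rho=1$ option~(3) fails. The paper instead argues option~(2) here: any $v_i\neq u$ is unselected with zero selected neighbors, hence violated since $0\notin\rho$.

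Second, your remark that ``the interesting analysis concerns selections with $u\notin S$'' glosses over the case $u\in S$ with $|S|<\portalSelectionGadgetConstant$. For the clique with $\portalSelectionGadgetConstant=\min\sigma+1$ this case is not vacuous: if $S=\{u\}$ then no $v_i$ is violated (each has one selected neighbor and $\min\rho\ge 1$ may allow this when $\min\rho=1$), so you must invoke option~(3) via $0\notin\sigma$ and $1\notin\sigma$, which holds precisely because $\min\sigma\ge 2$. If $u\in S$ and some but not all $v_i$ are selected, you recover option~(2) as in your $j\ge 1$ analysis. Both gaps are easy to fill, but they should be made explicit.
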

\begin{proof}
    Recall that a tremendous gadget is a triple $(H,u,\portalSelectionGadgetConstant)$, where $u \in V(H)$, and $\portalSelectionGadgetConstant$ is a integer.
    We consider different cases depending on $\sigma$ and $\rho$, consult \cref{fig:tremendous_gadget} for illustrations of the gadget in each case.

    \begin{description}
        \item[$\min \sigma = 0$ and $\min \rho = 1$:]               In this case $H$ consists of a vertex $u$ with $\max \rho + 1$ pendants, and $\portalSelectionGadgetConstant = 1$.
              Assume that $S \subseteq V(H)$ violates no vertex of $V(H) \setminus \{u\}$.
              If $u$ is not selected, then, each pendant of $u$ must be selected.
              But then $u$ has $\max \rho + 1$ selected neighbors, which violates $u$, both relative to $\rho$ and $\rho - 1$.
              So, otherwise, $u$ must be selected, and, we must thus select at least $\portalSelectionGadgetConstant$ vertices of $H$.
              Finally, if we consider the set $S = \{u\}$, this set has size exactly $\portalSelectionGadgetConstant$, and satisfies every vertex of $H$.

        \item[$\min \sigma = 0$ and $\min \rho \geq 2$:] The graph $H$ simply consists of a single vertex $u$, and $\portalSelectionGadgetConstant  = 1$. Observe that if $u$ is selected, it is satisfied. Moreover, if $u$ is not selected, then it is violated relative to both $\rho$ and $\rho - 1$, as $0$ is neither in $\rho$ nor in $\rho - 1$.

        \item [$\min \sigma = 1$ and $\min \rho = 1$:]
              Now consider the case that $\min \sigma = \min \rho = 1$.
              Then, $H$ consists of a path on four vertices $u',u,v,v'$, and $\portalSelectionGadgetConstant  = 2$.
              Assume that $S \subseteq V(H)$ violates no vertex of $V(H) \setminus U$.
              Since vertices $u'$ and $v'$ have degree one, it must be the case that both $u$ and $v$ are selected.
              Hence, $S$ must select $u$ and at least $\portalSelectionGadgetConstant$ vertices of $H$.
              Moreover, all vertices are satisfied when $u$ and $v$ are selected.
        \item [$\min \sigma = 1$ and $\min \rho \geq 2$:]
              Then, $H$ consists of vertices $v$ and $u$ connected by an edge, and $\portalSelectionGadgetConstant = 2$.
              If $v$ is not violated, then both $u$ and $v$ must be selected.
              So, either we select both $u$ and $v$, or we violate a vertex of $H \setminus \{u\}$.
              Moreover, all vertices of $H$ are satisfied when $u$ and $v$ are selected.

        \item [$\min \sigma \geq 2$:]
              In this case $H$ is a clique on $\min \sigma + 1$ vertices, $u$ is an arbitrary vertex of the clique, and $\portalSelectionGadgetConstant = \min \sigma + 1$.\footnote{Note that using cliques of size $\min \sigma + 1$ for gadgets is a staple for these types of problems, and was, for example, also used in \cite{fockeTightComplexityBoundsLowerBound,greilhuberResidueDominationBoundedTreewidth2025}.}

              Consider an arbitrary selection $S$ which violates no vertex of $H \setminus \{u\}$, and also does not violate $u$ relative to both $(\sigma, \rho)$ and $(\sigma - 1, \rho - 1)$.
              Observe that if any vertex $v \not = u$ of the clique is selected, then all vertices of the clique must be selected to satisfy $v$.
              So, either $S = \{u\}$, or $S = \emptyset$, or $S$ must select every vertex of the clique.

              If we have that $S = \emptyset$, then, this would clearly violate a vertex  of $V(H) \setminus \{u\}$, as $0 \notin \rho$.

              If we have that $S = \{u\}$, then, $u$ would be violated relative to both $\sigma$ and $\sigma - 1$ because $\min \sigma \geq 2$, but $u$ has no selected neighbors in the graph $H$.

              Hence, it must be the case that $S = V(H)$, which means that $S$ selects $u$ and $\portalSelectionGadgetConstant$ vertices of $H$.
              Finally, if we select all vertices of the clique $H$ all of them are satisfied, as each vertex is then selected with $\min \sigma$ selected neighbors.
    \end{description}
\end{proof}

We now utilize the tremendous gadget to build another type of gadget, which we call the robust realization gadget.

\defRobustRealization*{}

We now show that these types of gadgets exist for arbitrarily large penalties.
The construction is reminiscent of gadgets used by Focke et al. \cite{fockeTightComplexityBoundsLowerBound}, and can be seen as an extension of the gadgets they use to simulate the $\hwRelation$-relations, which now also take violations and solution sizes into account.

\thmRobustRelationGadgetsExist*{}
\begin{proof}
    The graph $G$ consists of a vertex $v$, that is adjacent to the portal of $\min \rho - 1$ fresh (graphs of the) copies of the tremendous gadget from \cref{thm:tremendous_gadgets_exist}.
    Set $t = \penaltySym \cdot \portalSelectionGadgetConstant + 2\penaltySym + 1$, and $\beta = \portalSelectionGadgetConstant$, where $\portalSelectionGadgetConstant$ is the constant of the tremendous gadget.
    Then, $v$ is also adjacent to $t$ vertices $r_{1},\dots,r_{t}$, each of them adjacent to the portals of $\max \rho$ fresh copies of the tremendous gadget.

    Similarly, $G$ contains a vertex $w$, adjacent to $\max \rho - 1$ portals of fresh copies of the tremendous gadget, and $w$ is furthermore adjacent to $t$ vertices $z_{1},\dots,z_{t}$, each of them adjacent to $\max \rho$ portals of fresh copies of the tremendous gadget.

    Vertices $w$ and $v$ are adjacent to the fresh vertices $U = \{u_1,\dots,u_d\}$, the vertices $U$ themselves form an independent set.
    Let $\#_\uptextsf{t}$ be the number of copies of the tremendous gadget.
    The value $\gamma = \#_\uptextsf{t} \cdot \portalSelectionGadgetConstant$, which is in $\Oh(\delta)$ as $\portalSelectionGadgetConstant$ is a constant.

    We now show that this pair $(G,U)$ has the desired properties.
    Let $H$ be a copy of a tremendous gadget with portal $p$, and $S \subseteq V(G)$.
    From \cref{def:tremendous_gadget}, we get certain guarantees regarding $S \cap V(H)$, we show that these guarantees extend to the whole set $S$ in graph $G$.
    If $S \cap V(H)$ violates a vertex of $V(H) \setminus \{p\}$, then, $S$ violates a vertex of $V(H) \setminus \{p\}$, because no vertex of $V(H) \setminus \{p\}$ has a neighbor outside $H$.
    Next, consider the case that $S \cap V(H)$ violates $p$ relative to both $(\sigma, \rho)$ and $(\sigma - 1, \rho - 1)$.
    In graph $G$, vertex $p$ has only a single neighbor outside $H$, call this neighbor $p'$.
    If $S$ does not select $p'$, then the number of selected neighbors of $p$ is the same in $H$ under $S \cap V(H)$ and $G$ under $S$.
    Hence, $p$ is violated by $S$ relative to $(\sigma, \rho)$ in $G$, because $p$ is violated by $S \cap V(H)$ relative to $(\sigma, \rho)$ in $H$.
    Otherwise, $S$ selects $p'$; then the number of selected neighbors of $p$ is exactly one larger in $G$ under $S$ than in $H$ under $S \cap V(H)$.
    Since $p$ is violated by $S \cap V(H)$ relative to $(\sigma - 1, \rho - 1)$ in $H$, this implies that $p$ is violated by $S$ relative to $(\sigma, \rho)$ in $G$.
    Hence, by \cref{def:tremendous_gadget}, any selection $S \subseteq V(G)$ that violates no vertex of $H$ contains at least $\portalSelectionGadgetConstant$ vertices of $H$, and moreover selects $p$.

    Furthermore, if a set $S \subseteq V(G)$ violates at most $\ell$ vertices of $V(G) \setminus U$ for some $\ell$, then it is clear that at least $\#_\uptextsf{t} - \ell$
    copies of the tremendous gadget must have all of their vertices non-violated.
    In particular this means that $S \setminus U$ must have size at least $(\#_\uptextsf{t} - \ell) \cdot \portalSelectionGadgetConstant = \gamma - \ell \cdot \portalSelectionGadgetConstant = \gamma - \ell \cdot \beta$.
    This covers the second property of \cref{def:robust_realization}.

    Next, we show that for any $u_i \in U$, there exists a set $S$ which violates no vertex of $V(G) \setminus U$ that fulfills $|S \setminus U| = \gamma$, selects no vertex of $U$ apart from $u_i$, and selects no vertex of $N(U)$.
    Indeed, we must simply select vertex $u_i$, and the solution of size $\portalSelectionGadgetConstant$ in each tremendous gadget that selects the portal vertex.
    Then, each vertex of a tremendous gadget is satisfied, because they receive no additional selected neighbors, and the selection within the gadget satisfies each vertex.
    Vertex $r_i$ for any $i \in \range{t}$ is satisfied because it has exactly $\max \rho$ selected neighbors and is unselected, for the same reason vertex $z_i$ is also satisfied.
    Finally, the vertex $v$ has exactly $\min \rho$ selected neighbors, and $w$ exactly $\max \rho$ selected neighbors, which again satisfies these vertices.
    This covers the first property of \cref{def:robust_realization}.

    Next, consider an arbitrary selection $S$ such that $|S \setminus U| \leq \gamma$ and $S$ violates no vertex of $V(G) \setminus U$.
    We show that then, we must have $|S \cap U| = 1$.
    All portal vertices of tremendous gadgets must be selected by $S$.
    Moreover, we select $\gamma$ vertices in the tremendous gadgets alone, hence, no vertex of $V(G) \setminus U$ which is not part of a tremendous gadget is selected.
    In particular, this means that unselected vertex $v$ has at most $\min \rho - 1$ selected neighbors from the set $V(G) \setminus U$, so it is violated unless $|S \cap U| > 0$.
    Similarly, unselected vertex $w$ has at least $\max \rho - 1$ selected neighbors from the set $V(G) \setminus U$, so $w$ is violated if $|S \cap U| > 1$.
    Hence, we must have $|S \cap U| = 1$.
    This shows the third property of \cref{def:robust_realization}.

    Finally, consider an arbitrary selection $S$ such that $|S \setminus U| \leq \gamma + \penaltySym$, and $S$ violates at most $\penaltySym$ vertices of $V(G) \setminus U$.
    Observe that this means that in all apart from $\penaltySym$ copies of the tremendous gadget, $S$ must select at least $\portalSelectionGadgetConstant$ vertices.
    This means that $S$ must select at least $\gamma - \penaltySym \cdot \portalSelectionGadgetConstant$ vertices from the copies of the tremendous gadgets alone.
    Recall that the value $t = \penaltySym \cdot \portalSelectionGadgetConstant + 2\penaltySym + 1$.
    In particular, since $S \setminus U$ has size at most $\gamma + \penaltySym$, $S$ can select at most $\penaltySym \cdot \portalSelectionGadgetConstant + \penaltySym = t - (\delta + 1)$ vertices which are not part of a tremendous gadget.
    So, there exists an index set $I \subseteq \range{t}$ such that $|I| \geq \penaltySym + 1$, and for each $i \in I$, vertex $r_i$ is not selected.
    Given that $S$ violates at most $\penaltySym$ vertices, there must even exist an $i$ such that vertex $r_i$ is not selected, none of the $\max \rho$ tremendous gadgets adjacent to $r_i$ contain a violated vertex, and $r_i$ is not violated.
    Hence, $r_i$ has $\max \rho$ selected neighbors, and vertex $v$ cannot be selected, as this would violate $r_i$.
    An analogous argument shows that also $w$ cannot be selected.
    Hence, we have $S \cap N(U) = \emptyset$.
    This proves the fourth property of \cref{def:robust_realization}.

    Regarding the size of the gadget, observe that $t$ is polynomial in $\penaltySym$, and that each tremendous gadget has constant size.
    Then, it is easy to confirm that the construction can be computed in time polynomial in $\penaltySym + d$.

    Finally, let us argue about the pathwidth of $G$.
    For each tremendous gadget, we create a bag containing all vertices of the gadget.
    We add the vertices $U$, $v$, $w$, as well as all vertices of the $\min \rho - 1$ tremendous gadgets attached to $v$, as well as all vertices of the $\max \rho - 1$ tremendous gadgets attached to $w$ to all bags.
    Observe that so far, each bag only contains $\Oh{(d)}$ number of vertices, because $\max \rho, \min \rho$ and the size of each tremendous gadget are constant.

    All that is left is to cover the vertices $r_1,\dots,r_t$ and $z_1,\dots,z_t$ and their incident edges.
    For this purpose, we order the bags created thus far such that, for any $i \in \range{t}$, the nodes of the bags containing the vertices of the tremendous gadgets attached to $r_i$ and $z_i$ form a connected subpath.
    Since each tremendous gadget is attached to exactly one vertex, this can be done easily.
    Now, for any $i$, we simply add the vertex $r_i$ to all the bags containing a tremendous gadget attached to $r_i$, and we similarly proceed for $z_i$.
    This covers vertices $r_i$, $z_i$ and their incident edges for any $i \in \range{t}$ while ensuring that the bags containing $r_i$ and $z_i$ form a connected subpath.
    Moreover, for any $i,i' \in \range{t}$ with $i \not = i'$ and any bag, at most one of $r_i,r_{i'},z_i,z_{i'}$ can appear in the bag.
    Hence, the sketched path decomposition has width $\Oh{(d)}$, and it is easy to see that it can be computed in time polynomial in the size of the gadget.
\end{proof}

This gadget already suffices for the pathwidth preserving reduction from the problem \parGenDomSetRel{} using only $\hwRelation$-relations to \minParGenDomSet{}, which we state next.
\thmRhoFinitePartialRealizeRelationsReduction*{}
\begin{proof}
    Let $I = (G, \ell, \mathcal{R})$ be the input instance of \parGenDomSetRel{}.
    Let $\beta$ be the tradeoff constant from \cref{thm:robust_relations_exist}.
    We utilize the robust realization provided by \cref{thm:robust_relations_exist} for the penalty value $\delta = |V(G)| + \ell + \ell \cdot \beta + 1$ (and multiple arity sizes, depending on the used relations).
    Let $\gamma$ be the corresponding cost value from \cref{thm:robust_relations_exist}.

    For any relation $R \in \mathcal{R}$, we add $\delta$ copies of the robust realization gadget with arity $|\scp{R}|$ and penalty $\delta$.
    Denote these copies as $G_R^1,\dots,G_R^{\delta}$.
    Identify the set $U$ of each of these copies with $\scp{R}$.
    Let $G'$ be the resulting graph.
    Then, set $k$ to $|V(G)| + |\mathcal{R}| \cdot \delta \cdot \gamma$.
    The output instance is $I' = (G',k,\ell)$.

    \begin{claim}
        If $I$ is a yes-instance, then $I'$ is a yes-instance.
    \end{claim}
    \begin{claimproof}
        Let $S$ be a solution to the input instance.
        Then, recalling that $S$ selects exactly one vertex of the scope of each relation, we can easily extend $S$ to a solution $S'$ for the output instance.
        To do this, we simply utilize the solution that selects exactly one vertex of the relation scope and $\gamma$ vertices in the rest of the gadget for each copy of the robust realization gadget.
        This selection satisfies all vertices which are not part of $V(G)$.
        Moreover, since the set $U$ of the realization gadget is an independent set, and the selection within each realization gadget does not select neighbors of $V(G)$, the vertices of $V(G)$ have exactly the same number of selected neighbors in $G$ under $S$ and $G'$ under $S'$.
        Hence, $S'$ violates at most $\ell$ vertices.
        Regarding the size, observe that $S'$ has size at most $k$ as $|S| \leq |V(G)|$.
    \end{claimproof}

    \begin{claim}
        If $I'$ is a yes-instance, then $I$ is a yes-instance.
    \end{claim}
    \begin{claimproof}
        Let $S$ be a solution for the output instance, that is, a set of size at most $k$ that violates
        at most $\ell$ vertices.
        Set $S' = S \cap V(G)$, our goal is showing that $S'$ is a solution for the input instance.

        For an arbitrary $R \in \mathcal{R}$, and $i \in \range{\delta}$ let $\ell_R^i$ be the number of violated vertices of $V(G_R^i) \setminus \scp{R}$.
        It is clear that $\ell_R^i \leq \sum_{R' \in \mathcal{R}, i' \in \range{\delta}} \ell_{R'}^{i'} < \ell + 1 < \delta.$
        From the tradeoff value of $\beta$ we now obtain that the size of $S \cap (V(G_R^i) \setminus \scp{R})$ is at least $\gamma - \ell_R^i \cdot \beta$.
        Now, consider the case that $S$ selects at least $\gamma + \delta$ vertices of $V(G_R^i) \setminus \scp{R}$.
        We will show that then, the solution $S$ must be larger than $k$.

        For this purpose, let $P = \bigcup_{R' \in \mathcal{R}, i' \in \range{\delta}} V(G_{R'}^{i'}) \setminus \scp{R'}$ be the set of vertices of all copies of the robust realization gadget (excluding the relation scopes).
        Using the tradeoff value we must have
        \begin{align*}
            |S \cap (P \setminus V(G_R^i))| & \geq \sum_{R' \in \mathcal{R}, i' \in \range{\delta}, (R',i') \not = (R,i)} |S \cap V(G_{R'}^{i'}) \setminus \scp{R'}|                                 \\
                                            & \geq \sum_{R' \in \mathcal{R}, i' \in \range{\delta}, (R',i') \not = (R,i)} \gamma - \ell_{R'}^{i'} \cdot \beta                                        \\
                                            & \geq (|\mathcal{R}| \cdot \delta - 1) \cdot \gamma - \beta \cdot \sum_{R' \in \mathcal{R}, i' \in \range{\delta}, (R',i') \not = (R,i)} \ell_{R'}^{i'} \\
                                            & \geq (|\mathcal{R}| \cdot \delta - 1)\cdot \gamma - \ell \cdot \beta.
        \end{align*}
        From this, we directly obtain
        \begin{align*}
            |S \cap P| & \geq (|\mathcal{R}| \cdot \delta - 1) \cdot \gamma - \ell \cdot \beta + \gamma + \delta \\
                       & = |\mathcal{R}| \cdot \delta \cdot \gamma - \ell \cdot \beta + \delta                   \\
                       & = |\mathcal{R}| \cdot \delta \cdot \gamma + |V(G)| + \ell + 1 > k.
        \end{align*}
        So, $S$ is larger than $k$, a contradiction.

        Hence, we know that for any $R \in \mathcal{R}$ and $i \in \range{\delta}$, $S$ selects fewer than $\gamma + \delta$ vertices of $V(G_R^i) \setminus \scp{R}$, and violates fewer than $\delta$ vertices of $V(G_R^i) \setminus \scp{R}$.
        It then follows directly from the properties of a robust realization gadget, and the fact that no vertex of $V(G_R^i) \setminus \scp{R}$ has a neighbor outside $G_R^i$, that $S$ cannot select any vertex of $N(\scp{R}) \cap V(G_R^i)$.
        In particular, $S$ does not select any vertex of $N(V(G))$.
        Then, since $S$ selects no neighbors of $N(V(G))$, the vertex set $U$ of a realization gadget is an independent set, and $S$ violates at most $\ell$ vertices, it must be the case that $S'$ violates at most $\ell$ vertices of the input instance.

        So, all that is left is to confirm that $S'$ also fulfills all relations of the input instance.
        Assume that for some $R \in \mathcal{R}$ and all $i \in \range{\delta}$, $S$ violates a vertex of $V(G_R^i) \setminus \scp{R}$ or selects more than $\gamma$ vertices of $V(G_R^i) \setminus \scp{R}$.
        Without loss of generality, let $G_R^1,\dots,G_R^x$ be those realization gadgets for which $S$ violates no vertex outside $\scp{R}$.
        Moreover, let $P' = \bigcup_{i \in \range{x}} V(G_R^i) \setminus \scp{R}$.
        We clearly have that $\{P \setminus P', P'\}$ is a partition of $P$.
        So $|S \cap P| = |S \cap (P \setminus P')| + |S \cap P'|$.
        We can easily bound the quantities on the right side of this equation.

        Observe that the number of realization gadgets that make up the set $P \setminus P'$ is exactly $(|\mathcal{R}| -1) \cdot \delta + (\delta - x)$.
        Hence, once again using the tradeoff, we obtain
        \begin{align*}
            |S \cap (P \setminus P')| \geq (|\mathcal{R}| - 1) \cdot \delta \cdot \gamma  + (\delta - x) \gamma - \ell \cdot \beta.
        \end{align*}
        Furthermore, from each of the $x$ gadgets $G_R^1,\dots, G_R^x$ we know that $S$ selects at least $\gamma + 1$ vertices outside $\scp{r}$, so
        $
            |S \cap P'| \geq x \cdot (\gamma + 1).
        $
        Overall, we now have
        \begin{align*}
            |S \cap P| & \geq (|\mathcal{R}| - 1) \cdot \delta \cdot \gamma  + (\delta - x) \gamma - \ell \cdot \beta + x \cdot (\gamma + 1) \\
                       & = |\mathcal{R}| \cdot \delta \cdot \gamma - \ell \cdot \beta + x.
        \end{align*}
        Now, we know that $x \geq \delta - \ell = |V(G)| + \ell \cdot \beta + 1$.
        So, in fact $|S| \geq |\mathcal{R}| \cdot \delta \cdot \gamma + |V(G)| + 1 > k$, a contradiction.

        Hence, we see that for any $R \in \mathcal{R}$ there exists at least one value $i$ such that no vertex of $V(G_R^i) \setminus \scp{R}$ is violated, and such that no more than $\gamma$ vertices of $V(G_R^i) \setminus \scp{R}$ are selected.
        Then, by the third property of the realization gadget, it must be the case that exactly one vertex of $\scp{R}$ is selected.
        This means that $S'$ also fulfills all relations of $\mathcal{R}$, and thus the input instance is a yes-instance.
    \end{claimproof}

    \begin{claim}
        The reduction is pathwidth-preserving.
    \end{claim}
    \begin{claimproof}
        It is not hard to see that the reduction runs in polynomial-time, as we only add some polynomial-number of polynomial-sized gadgets.

        Regarding the pathwidth, observe that the path decomposition provided with the input graph contains a bag containing $\scp{R}$ for any $R \in \mathcal{R}$.
        Moreover, the pathwidth of the realization gadgets we added to the graph is actually constant (as $d$ is a constant), and we can easily compute a path decomposition of these gadgets.
        Hence, we must only copy the bag containing all vertices of $\scp{R}$ sufficiently many times, and overlap these with the path decomposition for the realization gadgets of $R$.

        This way, we only increase the pathwidth by the width of the decomposition of the realization gadgets, which is constant.
    \end{claimproof}
\end{proof}

\subsection{Set \texorpdfstring{$\mathbf{\rho}$}{Rho} is Simple Cofinite}
\label{sec:replacing_relations_rho_simple_cofinite}

Now, we tackle the case when the set $\rho$ is simple cofinite.
The following prevents us from using the same approach that worked when $\rho$ was finite: Any unselected vertex that is satisfied when it has at least one more selected neighbor is also satisfied when it receives more than one additional selected neighbors.
This means that we simply \emph{cannot} realize the $\hwRelation{}$-relation via the types of gadgets we used when $\rho$ was finite.

Our way out of this dilemma is to reduce from \minParGenDomSetRel{}, for which \cref{thm:minimization_high_lvl} provided the lower bound that utilizes relations that are superset-closed.
We show that superset-closed relations can be modeled using only $\hwGeqOneRelation{}$-relations, and these \emph{can} be replaced by gadgets in a subsequent step (recall that a $\hwGeqOneRelation$-relation accepts a selection from its scope if and only if it is nonempty).

\thmReplaceSupersetByHwGeqOne*{}
\begin{proof}
    Let the input instance be $I = (G,k,\ell,\mathcal{R})$.
    If there is a relation $R \in \mathcal{R}$, such that $\acc{R} = \emptyset$, then output a constant-sized no-instance.
    This is justified by the fact that then it is simply impossible to fulfill the relations.

    Otherwise,
    for each $R \in \mathcal{R}$, let $\complAcc{R}$ be the set of forbidden assignments, i.e., $\complAcc{R} = 2^{\scp{R}} \setminus \acc{R}$.
    For each set $r \in \complAcc{R}$, we add a $\hwGeqOneRelation{}$-relation with scope $\scp{R} \setminus r$ to the graph.
    Observe that $\scp{R} \setminus r$ is never the empty set, as this would imply that $r = \scp{R}$, and thus $\acc{R} = \emptyset$, taking into account that all relations are superset-closed.
    Let $\mathcal{R}'$ be the resulting set of relations.
    The output instance is then $I' = (G,k,\ell,\mathcal{R}')$.

    It is clear that this transformation does not increase the pathwidth as the scope of any relation of $\mathcal{R}'$ is a subset of the scope of some relation of $\mathcal{R}$.
    In particular, the path decomposition provided with the input instance is also a valid path decomposition for the output instance, naturally of the same width.
    Also observe that because the arity is a constant, the size of each relation of $\mathcal{R}$ and $\mathcal{R}'$ is also constant.
    Therefore, the reduction runs in polynomial-time.

    \begin{claim}
        If $I$ is a yes-instance, then $I'$ is a yes-instance.
    \end{claim}
    \begin{claimproof}
        Let $S$ be a solution to the input instance.
        Then, we show that each relation of the output instance is fulfilled.
        Indeed, as $S$ fulfills all relations $R$ of $\mathcal{R}$, we have that for each $R \in \mathcal{R}$ the set $S \cap \scp{R} \in \acc{R}$.
        Then, let $\complAcc{R}$ be the set of forbidden assignments of $R$, and consider an arbitrary $r \in \complAcc{R}$.

        Towards a contradiction, assume that $S$ selects no vertex of $\scp{R} \setminus r$.
        Then, $S \cap \scp{R}$ would necessarily be a subset of $r$.
        But, since $S \cap \scp{R}$ fulfilled relation $R$, this implies that any superset of $S \cap \scp{R}$ is in $\acc{R}$.
        Then, we must have that $r \in \acc{R}$, which clearly contradicts that $r \in \complAcc{R}$.

        Hence, $S$ fulfills all relations of $\mathcal{R}'$.
        Moreover, it is easy to observe that $S$ violates at most $\ell$ vertices and has size at most $k$.
    \end{claimproof}

    \begin{claim}
        If $I'$ is a yes-instance, then $I$ is a yes-instance.
    \end{claim}
    \begin{claimproof}
        Assume that $S$ is a solution to the output instance.
        We argue that the same set is a solution to the input instance.
        Clearly, it violates at most $\ell$ vertices and selects at most $k$ vertices.

        All that remains is arguing that $S$ fulfills each relation of $\mathcal{R}$.
        Towards a contradiction, assume there is some relation $R \in \mathcal{R}$ which is violated by $S$.
        Then, $S \cap \scp{R} \in \complAcc{R}$.
        However, there is a relation in $\mathcal{R}'$ that ensures at least one vertex of $\scp{R} \setminus (S \cap \scp{R})$ is selected by $S$, which is clearly impossible.
    \end{claimproof}

    \begin{claim}
        If $I$ is a good instance, then $I'$ is a good instance.
    \end{claim}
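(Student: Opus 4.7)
\begin{claimproof}
The plan is to observe that the reduction keeps the graph $G$, the size bound $k$, and the violation bound $\ell$ untouched, changing only the set of relations from $\mathcal{R}$ to $\mathcal{R}'$. Therefore, to transfer goodness it suffices to show that every $S \subseteq V(G)$ which satisfies all relations of $\mathcal{R}'$ also satisfies all relations of $\mathcal{R}$. Once this is established, the two defining conditions of a good instance (namely $|S| \geq k$, and the existence of a set $D \subseteq S$ of size at least $\ell$ whose vertices each have more than $\max\sigma$ selected neighbors) follow immediately from the goodness of $I$ applied to the very same $S$.

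The key step is the implication ``$S$ satisfies $\mathcal{R}'$ $\Rightarrow$ $S$ satisfies $\mathcal{R}$'', and this is exactly the contrapositive argument already used in the backward correctness claim above, which makes no use of the solution size $k$ and therefore applies verbatim here. Concretely, suppose some $R \in \mathcal{R}$ is violated by $S$, so $S \cap \scp{R} \in \complAcc{R}$. Let $r = S \cap \scp{R}$. By construction of $\mathcal{R}'$, the set $\mathcal{R}'$ contains a $\hwGeqOneRelation{}$-relation with scope $\scp{R} \setminus r$, which forces $S$ to select at least one vertex of $\scp{R} \setminus r$. But by definition of $r$, no vertex of $\scp{R} \setminus r$ is selected by $S$, a contradiction.

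Since $I$ is good and $S$ satisfies $\mathcal{R}$, we immediately obtain $|S| \geq k$ and the existence of a set $D \subseteq S$ of size at least $\ell$ such that each $v \in D$ has more than $\max \sigma$ selected neighbors. Thus $I'$ is good as well. The only potential subtlety is the corner case where some $R \in \mathcal{R}$ has $\acc{R} = \emptyset$: in this situation the reduction outputs a constant-sized no-instance, and the claim about goodness is vacuously preserved since that instance is trivially good (no set satisfies its relations). Otherwise $\scp{R} \setminus r \neq \emptyset$ for every $r \in \complAcc{R}$, so all introduced $\hwGeqOneRelation{}$-relations have nonempty scope and the argument goes through.
\end{claimproof}
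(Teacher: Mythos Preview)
Your proof is correct and takes essentially the same approach as the paper. The paper's argument is slightly terser: it simply notes that the two earlier correctness claims together establish that a set $S$ satisfies all relations of $\mathcal{R}$ if and only if it satisfies all relations of $\mathcal{R}'$, and then goodness transfers immediately since the collection of relation-respecting sets is unchanged. You isolate and re-prove the one direction actually needed ($\mathcal{R}' \Rightarrow \mathcal{R}$) and additionally address the corner case of an unsatisfiable relation, which the paper leaves implicit.
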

    \begin{claimproof}
        We have already established in this proof that a set $S$ fulfills all relations of $\mathcal{R}$ if and only if it fulfills all relations of $\mathcal{R}'$.
        Hence, when instance $I$ is good, so is instance $I'$, as the set of subsets of $V(G)$ that fulfill all relations remains exactly the same.
    \end{claimproof}

    By combining all these claims, the lemma is proven.
\end{proof}

This means that we can assume that each relation in our instance is a $\hwGeqOneRelation{}$-relation, and all that remains is showing that these can be simulated with appropriate gadgets.
We first define the notion of a \emph{fragile realization}, which is the notion of realization gadget we require to do this.

\defFragileRealization*{}

We now show that these fragile realization gadgets actually exist.\footnote{The described gadget uses cliques of size $\min \sigma + 1$, which are frequently useful for these types of problems, and shares similarities with gadgets used in \cite{fockeTightComplexityBoundsLowerBound,greilhuberResidueDominationBoundedTreewidth2025}.}

\begin{lemma}
    \label{thm:fragile_realization_gadgets_exist}
    Let $\sigma$ and $\rho$ be nonempty sets where $\rho$ is simple cofinite with $0 \notin \rho$.
    Then, there is a  constant $\gamma$, such that for any $d \geq 1$, there is a pair $(G,U)$ with $|U| = d$ fragilely realizing the $\hwGeqOneRelation{}$-relation of arity $d$ with cost $\gamma$.
\end{lemma}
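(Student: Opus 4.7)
The plan is to build the gadget around a single central vertex $v$ adjacent to every vertex of $U$, together with a small \emph{forcing attachment} $H$ of constant size wired only to $v$. The attachment will force $v$ to receive exactly $\min\rho - 1$ selected neighbors from $V(G)\setminus U$ in any no-violation selection; since an unselected $v$ needs at least $\min\rho$ selected neighbors to be happy and $0\notin\rho$, the missing one must come from $U$, which is precisely the $\hwGeqOneRelation{}$-behavior. The set $U$ is independent by construction, settling property (4).

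First, handle the easy case $\min\rho=1$ by taking $H=\emptyset$ (so $G=U\cup\{v\}$) and $\gamma=0$. All four properties are immediate: if $S\cap U=\emptyset$, then either $v\notin S$ and $v$ has $0\notin\rho$ selected neighbors (violated), or $v\in S$, giving $|S\setminus U|\geq 1>0=\gamma$; and if $v\in S$ we are again in the $|S\setminus U|>\gamma$ case.

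For $\min\rho=c\geq 2$, attach to $v$ exactly $c-1$ vertex-disjoint copies of a small forcing block. If $0\in\sigma$, each block is a single pendant of $v$. If $0\notin\sigma$ (so $s=\min\sigma\geq 1$, whether $\sigma$ is finite or simple cofinite), each block is a clique of size $s+1$ in which a single portal vertex is joined to $v$. Set $\gamma=c-1$ in the pendant case and $\gamma=(c-1)(s+1)$ in the clique case; $\gamma$ depends only on $\sigma$ and $\rho$. For the intended selection corresponding to nonempty $U'\subseteq U$, take $S=U'\cup V(H)$ and leave $v$ unselected: in the pendant case each pendant has $0\in\sigma$ selected neighbors; in the clique case each clique vertex has exactly $s\in\sigma$ selected neighbors; and $v$ sees $|U'|+(c-1)\geq c$ selected neighbors, all in $\rho$. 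This gives property (2), and $|S\setminus U|=\gamma$.

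For properties (1) and (3), I will use the forcing property of the blocks: a block that contains no violated vertex must have its entire vertex set selected. For a pendant, an unselected pendant sees at most one selected neighbor ($<c$, not in $\rho$); for a clique of size $s+1$, any strict nonempty selection leaves some selected vertex with at most $s-1\notin\sigma$ selected neighbors, while the empty selection violates via $0\notin\rho$. Hence any no-violation $S$ contributes at least $\gamma$ vertices from $V(G)\setminus U$, giving (1). For (3), suppose $S\cap U=\emptyset$ or $v\in S$ and that $S$ violates no vertex of $V(G)\setminus U$; then every block is fully selected, so $v$ has exactly $c-1$ selected neighbors from $H$. If $v\notin S$, then $v$ has $c-1<c$ selected neighbors and is itself violated, a contradiction; if $v\in S$, then $|S\setminus U|\geq 1+\gamma>\gamma$, as required. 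The main subtlety is that in the clique case the portal vertex acquires an extra selected neighbor when $v\in S$, pushing its degree to $s+1$; this lies in $\sigma$ when $\sigma$ is simple cofinite and may fail when $\sigma$ is finite, but a failure here just yields a violation in $V(G)\setminus U$, which property (3) already permits.
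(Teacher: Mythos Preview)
Your construction is exactly the paper's: a central vertex $v$ adjacent to all of $U$ and to one vertex of each of $\min\rho-1$ cliques on $\min\sigma+1$ vertices (your pendant case is precisely the $\min\sigma=0$ instance, and your $\min\rho=1$ case is the zero-cliques instance), with the same $\gamma=(\min\rho-1)(\min\sigma+1)$. One small imprecision: the forcing claim that a strict nonempty clique-selection leaves some \emph{selected} vertex with at most $s-1$ selected neighbors ignores that the portal may gain $v$ as an extra neighbor; however, in that sub-case an unselected non-portal vertex has at most one selected neighbor and is violated since $\min\rho\geq 2$, so the conclusion stands.
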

\begin{proof}
    We begin by adding the vertices $U = \{u_1,\dots,u_d\}$ to the graph, as well as $\min \rho - 1$ cliques $K^{1},\dots,K^{\min \rho - 1}$, each on $\min \sigma + 1$ vertices.
    We make $v$ adjacent to exactly one vertex of each of these cliques, and to each vertex of $U$.
    This concludes the description of the graph $G$.

    Set $\gamma = (\min \rho - 1) \cdot (\min \sigma + 1)$, and quickly observe that $U$ is indeed an independent set.

    First, we argue that for each clique, it must be the case that all vertices of the clique are selected if no vertex of them is violated.
    Consider an arbitrary clique $K^i$.
    Observe that the mere existence of the clique implies $\min \rho > 1$.
    If the clique has size exactly one, then, $\min \sigma = 0$, and the vertex of the clique must be selected to not be violated since it has degree one, but $0,1 \notin \rho$.
    Otherwise, there exists some vertex $w$ in the clique that is not adjacent to $v$.
    If this vertex is selected, then, $w$ requires that all other vertices of the clique are also selected.
    Otherwise, if $w$ is not selected, then $w$ requires at least two selected neighbors.
    But even in this case, there would be a selected neighbor of $w$ which is not $v$,
    forcing the whole clique to be selected.

    We have shown that, unless a vertex of $V(G) \setminus U$ is violated, any selection $S$ must have size at least $\gamma$, as all $\min \sigma + 1$ vertices of each clique must be selected by $S$.
    Now, consider that $S$ selects no vertex of $U$.
    Then, $v$ is violated if it is not selected, because $v$ would then have only $\min \rho - 1$ selected neighbors.
    So, if $S$ violates no vertex of $V(G) \setminus U$, and $S$ selects no vertex of $U$, then the set $S$ must also select vertex $v$, and thus $|S \setminus U| > \gamma$.

    Similarly, because $N(U) = \{v\}$, it must be the case that when $S$ selects a vertex of $N(U)$, either $|S \setminus U| > \gamma$, or $S$ violates a vertex of $V(G) \setminus U$.
    Finally, observe that if we select all vertices of the cliques, and at least one vertex of $U$, then each vertex of $V(G) \setminus U$ is satisfied, and we select exactly $\gamma$ vertices of $V(G) \setminus U$.
\end{proof}

With these gadgets, we are already ready to state the final reduction we require to handle the case when $\rho$ is simple cofinite.

\thmPartialRhoCofiniteReplaceRelations*{}
\begin{proof}
    Let $I = (G,k,\ell,\mathcal{R})$ be the input instance, and $\gamma$ be the cost constant from \cref{thm:fragile_realization_gadgets_exist}.
    Moreover, set $t = \ell + \ell \cdot \gamma + k + 1$, and $k' = k + \gamma \cdot |\mathcal{R}| \cdot t$.
    The number $t$ represents how many copies of the fragile realization gadget we require per relation in $\mathcal{R}$.

    For each $r \in \mathcal{R}$, consider the fragile realization gadget $(H,U)$ of \cref{thm:fragile_realization_gadgets_exist} which realizes the relation with scope size $|U| = |\scp{r}|$.
    The number of copies of $(H,U)$ we add to the graph is $t$, we denote them by $H_R^1,\dots,H_R^t$.
    We also identify $U$ of each copy with $\scp{r}$.
    Let the resulting graph be $G'$.
    Then, the output instance is $I' = (G',k',\ell)$.

    \begin{claim}
        If $I$ is a yes-instance, then $I'$ is a yes-instance.
    \end{claim}
    \begin{claimproof}
        This is the straightforward direction of the proof.
        Let $S$ be a solution to instance $I$.

        Then, for any $R \in \mathcal{R}$ and $i \in \range{t}$, we know that $S$ selects at least one vertex of $\scp{R}$.
        Hence, we can extend $S$ to a solution of $H_R^i$ which violates no vertex of $V(H_R^i) \setminus \scp{R}$, selects exactly $\gamma$ vertices of $V(H_R^i) \setminus \scp{R}$, selects exactly the vertices $S \cap \scp{R}$ of $\scp{R}$ and selects no vertex of $N_{H_R^i}(\scp{R})$.

        It is clear that the resulting selection $S'$ has size at most $k'$.
        Moreover, this selection $S'$ does not change the violation status of any vertex that is in $V(G)$, and, no vertex of $V(G') \setminus V(G)$ is violated.
        Hence, $S'$ violates at most $\ell$ vertices.
    \end{claimproof}

    \begin{claim}
        If $I'$ is a yes-instance, then $I$ is a yes-instance.
    \end{claim}
    \begin{claimproof}
        Let $S$ be a solution for the output instance.
        First, we argue that then, $S \cap \scp{r} \in \acc{r}$ for all $r \in \mathcal{R}$.
        This will in turn automatically mean that $S$ must violate at least $\ell$ vertices of $G$, and that $S$ must select at least $k$ vertices of $V(G)$, because the input instance was good.
        Then, the solution $S$ cannot afford to violate any vertices of $V(G') \setminus V(G)$, and cannot afford to select more than $\gamma$ vertices of each attached gadget.
        This will imply that the solution behaves like we want.

        Observe that at most $\ell$ copies of the fragile realization gadget can have a violated vertex outside the scope of the relation the gadget is for.
        Then, consider that in the whole graph, this means that we must select at least $t \cdot |\mathcal{R}| \cdot \gamma - (\ell \cdot \gamma)$ vertices just from $V(G') \setminus V(G)$.
        The difference between this value and $k'$ is only $(\ell \cdot \gamma) + k$.

        In particular, since $S$ has size at most $k'$, this means that there are at most $(\ell \cdot \gamma) + k$ gadgets $(H,U)$ (which are copies of the fragile realization gadget) in which $S$ can select more than $\gamma$ vertices of $V(H) \setminus U$.
        So, there are at most $\ell + (\ell \cdot \gamma) + k < t$ copies $(H,U)$ of the fragile realization gadget in which a violation occurs or more than $\gamma$ vertices of $V(H) \setminus U$ are selected.

        Now, fix an arbitrary $R \in \mathcal{R}$.
        This means there exists at least one $i \in \range{t}$ such that at most $\gamma$ vertices of $V(H_R^i) \setminus \scp{R}$ are selected, and no vertex of $V(H_R^i) \setminus \scp{R}$ is violated.
        But then, by the properties of \cref{def:fragile_realization}, we obtain that $S$ must select at least one vertex of $\scp{R}$.

        Now, consider the set $S \cap V(G)$, and observe that this set fulfills each relation of $\mathcal{R}$.
        By the fact that the input instance was good, this means that $S \cap V(G)$ has size at least $k$, and violates at least $\ell$ vertices such that adding further selected vertices to these $\ell$ vertices does not cause them to be no longer violated.

        Hence, we see that $S$ selects $k$ vertices of $V(G)$, and violates $\ell$ vertices of $V(G)$.
        But then it must be the case that actually not a single vertex of $V(G') \setminus V(G)$ is violated.
        Moreover, for any added copy $(H,U)$ of a realization gadget it must then be the case that $S$ selects at least $\gamma$ vertices of $V(H) \setminus U$.
        But, then by the chosen value of $k'$, $S$ must select exactly $\gamma$ vertices of $V(H) \setminus U$, and from this, we obtain that $S$ cannot select a vertex of $N_H(U)$, as any solution for $(H,U)$ that satisfies all vertices of $V(H) \setminus U$ and selects a vertex of $N_H(U)$ has size strictly larger than $\gamma$.

        This means that actually, $S \cap V(G)$ not only fulfills all relations of $\mathcal{R}$, but also, any vertex in $V(G)$ has no selected neighbor in $V(G') \setminus V(G)$.
        Thus, we know that $S \cap V(G)$ violates at most $\ell$ vertices of $V(G)$, and furthermore, $S \cap V(G)$ must have size at most $k$.
        So, the input instance $I$ is a yes-instance.
    \end{claimproof}

    \begin{claim}
        The reduction is pathwidth-preserving.
    \end{claim}
    \begin{claimproof}
        This follows almost immediately from the fact that each added fragile realization gadget from \cref{thm:fragile_realization_gadgets_exist} actually has constant size, because the size of these gadgets only depends on $\sigma, \rho$ as well as $d$, which are all constants.
        Then, it is easy to see that the reduction runs in polynomial-time.

        The path decomposition provided with the input graph contains a bag containing $\scp{R}$ for any $R \in \mathcal{R}$.
        We can simply copy this bag sufficiently often, such that for each fragile realization gadget of $R$, we add all vertices to one of the copied bag in a way that never increases the pathwidth by more than a constant amount.
    \end{claimproof}
\end{proof}

\subsection{Finishing the Lower Bound Proof}
\label{sec:replacing_relations_final_reductions}

Now, we must only put together all the pieces we have so carefully crafted throughout the last two sections.

\mtheoremLowerBoundPartial*
\begin{proof}
    Observe that $|\allStatesPartial| = \sigLargestPartial + \rhoLargestPartial + 2$.
    We utilize two different reduction chains depending on whether $\rho$ is simple cofinite or not.

    \begin{description}
        \item[$\rho$ is simple cofinite:] In this case, we utilize the intermediate lower bound of \cref{thm:minimization_high_lvl} which works for good instances of bounded arity.
              Then, \cref{thm:partial_replace_superset_by_hw_geq_1} is used to replace all relations with $\hwGeqOneRelation$-relations via a pathwidth-preserving reduction.
              Finally, \cref{thm:partial_rho_cofinite_replace_relations} provides a reduction from the resulting instance to \minParGenDomSet{}.
        \item[$\rho$ is finite:] In this case, we utilize the intermediate lower bound of \cref{thm:decision_high_lvl}.
              Then, \cref{thm:replacing_arbitrary_with_hw_1} provides a reduction to the problem in which all relations are $\hwRelation$-relations.
              Finally, \cref{thm:rho_finite_partial_realize_relations_reduction} provides a reduction to \minParGenDomSet{}.
    \end{description}

    All used reductions are pathwidth-preserving, and also the arity of each used instance of an intermediate problem is bounded by a constant.
    Hence, the lower bound from the intermediate problems transfer to \minParGenDomSet{}.
\end{proof}

\section{The Nonpartial Problem}
\label{sec:nonpartial}
Let us now proceed to the tight lower bound for the nonpartial problem.
The main goal of this section is proving \cref{thm:main_theorem_lower_bound_nonpartial}.
The used reduction chain is similar to the one used for the partial problem.

\subsection{The Intermediate Lower Bound for the Nonpartial Problem}

Two different intermediate problems are used depending on whether $\rho$ is finite or simple cofinite.
Recalling the definition of relations in \cref{def:relations}, these problems are formally defined next.

\problembox{\minGenDomSetRel{}}{Graph $G$, integer $k$, set of superset-closed relations $\mathcal{R}$}{Is there a set $S \subseteq V(G)$ such that $|S| \leq k$ and no vertices of $V(G)$ are violated by $S$ relative to $(\sigma,\rho)$, and for each $R \in \mathcal{R}$ we have $S \cap \scp{R} \in \acc{R}$?}

\problembox{\genDomSetRel{}}{Graph $G$, set of relations $\mathcal{R}$}{Is there a set $S \subseteq V(G)$ such that no vertices of $V(G)$ are violated by $S$ relative to $(\sigma,\rho)$, and for each $R \in \mathcal{R}$ we have $S \cap \scp{R} \in \acc{R}$?}

Observe that an instance of \minGenDomSetRel{} can be viewed as an instance of \minParGenDomSetRel{} with $\ell = 0$.
Similarly, an instance of \genDomSetRel{} is an instance of \parGenDomSetRel{} with $\ell = 0$.

For the nonpartial problem we utilize the fact that lower bounds for the problem \genDomSetRel{} already exist in the literature.
For us, these lower bounds are useful to handle the cases when $\rho$ is finite and $\sigma$ is simple cofinite.
The following lower bound was first proven under SETH by Focke et al. \cite[Lemmas 2.1 and 2.2]{fockeTightComplexityBoundsLowerBound}.
The same lower bound can also be achieved under the \pwseth{} by adapting the proofs accordingly \cite{schepperFasterHigherEasier}.

\begin{lemma}[{Follows from \cite[Main Lemmas 10.3.7 and 10.4.3]{schepperFasterHigherEasier}}]
    \label{thm:intermediate_lower_bound_sigma_cofinite_rho_finite}
    Let $\sigma$ be a simple cofinite set, and $\rho$ be a nonempty finite set.
    Then, unless the \pwseth{} is false, the problem \genDomSetRel{} cannot be solved in time $(\sigLargestNonPartial + \rhoLargestNonPartial + 2 - \varepsilon)^{\pw} \cdot |V(G)|^{\Oh(1)}$ for any $\varepsilon > 0$, even when a path decomposition of width $\pw$ is provided with the input.
\end{lemma}

Although the lower bound of \cref{thm:intermediate_lower_bound_sigma_cofinite_rho_finite} would also hold for the case when $\rho$ is simple cofinite, we cannot utilize it in that scenario.
The problem is that we cannot replace arbitrary relations by gadgets when $\rho$ is finite.
Here, we instead go into a direction that is similar to our lower bound for the partial problem.
That is, we utilize superset-closed relations for the intermediate lower bound, and later we show that those can actually be replaced by gadgets.

The overall strategy of this intermediate reduction is the same as for the partial variant in \cref{sec:intermediate_partial}, but we need to make some important changes.
That we no longer allow violations makes the reduction simpler on the one hand, since we no longer need to deal with violated vertices and no longer need to propagate overflow-states along the construction.
But, it makes the reduction more difficult on the other hand, since we now need to ensure that all vertices can be satisfied if the input instance is a yes-instance.
In the partial variant, we could control the violation status of all vertices which are not state-vertices by making them adjacent to a clique of constant size, whose vertices were forced to be selected.
This same approach no longer works since we are not allowed to violate any vertices.
Before continuing, we need to define the notion of states for the nonpartial problem.

\begin{definition}[State Sets for the Nonpartial Problem]
    \label{def:states_non_partial}
    Let $\sigma, \rho$ be nonempty finite or simple cofinite sets.
    We define the set of $\rho$-states for the nonpartial problem as $\rhoStatesNonPartial = \{\rho_0,\dots,\rho_{\rhoLargestNonPartial}\}$,
    and the set of $\sigma$-states for the nonpartial problem as $\sigStatesNonPartial = \{\sigma_0,\dots,\sigma_{\sigLargestNonPartial}\}$.
    We denote the set of all states for the nonpartial problem as $\allStatesNonPartial = \sigStatesNonPartial \cup \rhoStatesNonPartial$.
\end{definition}

Focke et al. \cite{fockeTightComplexityBoundsLowerBound} solve the problem of ensuring that vertices are not violated by building complex graph families, called $A$-\emph{managers} (where $A \subseteq \allStatesNonPartial$).
Intuitively, there is a manager for each integer $n$, which has a set of $n$ distinct portal vertices.
The manager also has some notion of a left side, and a right side.
The state of a portal vertex in a solution is determined by whether the portal vertex is selected and by the number of selected vertices it has in the left side of the manager.
The $A$-manager for integer $n$ has the property that for any string $\vec{s} \in A^n$ there is a $(\sigma,\rho)$-set of the graph in which the portals have this state vector as their state.
The portal vertices of managers almost form a cut of the manager, and hence they can be used as the state-vertices in the reduction while keeping the pathwidth small.
Ideally, we would directly reuse their manager constructions for $A = \allStatesNonPartial$.
Unfortunately, they give no explicit guarantees regarding the size of these solutions.
This is a problem for our use case, as our idea behind using superset-closed relations only works if we can tightly control how many vertices are selected in a solution.
Note that it may be the case that their construction, potentially with some small adjustments or restrictions, has the properties we need.
However, proving this fact seems to be significantly more tedious and lengthy compared to just giving our own constructions.
It should be noted that the challenges we face for our construction are similar to the ones Focke et al. \cite{fockeTightComplexityBoundsLowerBound} face, so also similarities between the constructions would not be surprising.
Since the graphs we create are simpler than those of Focke et al. in some sense, we call them \emph{simple managers}.

\emph{Simple managers}  are inspired by the concept of managers in \cite[Definition 4.7]{fockeTightComplexityBoundsLowerBound}.
However, our managers are simpler in the sense that they are objects of constant size, and not infinite families of graphs.
We want to use these simple managers in our lower bound construction such that each simple manager should represent one variable of the input instance in one bag.
For this purpose, we need to ensure that a simple manager has a small cut $U$ (which will later be our state vertices), and such that the vertices in this cut $U$ can have (almost) any possible state in a solution, where the state of a vertex is given by the number of selected vertices in the ``left'' side of the gadget and by whether the vertex is selected.
We formalize these properties together with some additional guarantees on the solution size in the next definition.

\begin{definition}[Simple Manager]
    \label{def:simple_manager}
    Let $\sigma, \rho$ be finite or simple cofinite sets, and $A \subseteq \allStatesNonPartial^g$ for some $g \geq 1$.
    A simple $A$-manager is a structure $(G,U = \{v_1,\dots,v_g\},L,\compl{L},\simpleManagerLeftSolSize,\simpleManagerRightSolSize,\simpleManagerPortSolSize)$, where $G$ is a graph, $U,L,\compl{L}$ are pairwise disjoint subset of $V(G)$, and $\simpleManagerLeftSolSize,\simpleManagerRightSolSize,\simpleManagerPortSolSize$ are integers.
    The structure has all the following properties:
    \begin{enumerate}
        \item The set $V(G) \setminus U$ can be partitioned into the set $L$, the left side of the manager and the set $\compl{L}$, the right side of the manager, such that no vertex of $L$ has a neighbor in $\compl{L}$.
        \item For each $v \in U$ we have $N(v) \subseteq L \cup \compl{L}$. In other words, the set $U$ is independent.
        \item For any $\vec{s} \in A$, there is a $(\sigma,\rho)$-set $S_{\vec{s}}$ of $G$.
              Fix an arbitrary $i \in \range{g}$, and let $\vec s [i] = \tau_c$.
              The set $S_{\vec s}$ has the following properties:
              \begin{itemize}
                  \item Vertex $v_i$ is selected by $S_{\vec{s}}$ if and only if $\tau$ is the symbol $\sigma$.
                  \item Vertex $v_i$ has exactly $c$ selected neighbors in $L$.
                  \item If $v_i$ is selected, then $v_i$ has exactly $\sigLargestNonPartial - c$ selected neighbors in $\compl{L}$.
                  \item If $v_i$ is not selected, then $v_i$ has exactly $\rhoLargestNonPartial - c$ selected neighbors in $\compl{L}$.
                  \item Exactly $\simpleManagerPortSolSize$ vertices of $U$, $\simpleManagerLeftSolSize$ vertices of $L$, and $\simpleManagerRightSolSize$ vertices of $\compl{L}$ are selected.
              \end{itemize} \end{enumerate}
\end{definition}

Let us remark that the managers of Focke et al. \cite{fockeTightComplexityBoundsLowerBound} are more flexible in the sense that they can build $\allStatesNonPartial$-managers\footnote{Unless $(\sigma,\rho)$ is $m$-structured for some $m \geq 2$.}.
It is clear from \cref{def:simple_manager} that we cannot build simple $A$-managers for arbitrary $A \subseteq \allStatesNonPartial^g$, as for example, condition $3$ implies that each vector of $A$ has the same number of $\sigma$-states.
However, our construction will still be sufficient to craft the lower bound.

Before going into the details of how we construct the simple managers, we introduce a gadget of Focke et al. \cite[Lemma 5.1]{fockeTightComplexityBoundsLowerBound}.
\begin{lemma}
    \label{thm:sigma_s_rho_s_provider_gadget}
    Let $\sigma$ and $\rho$ be arbitrary nonempty subsets of the natural numbers where $\rho \not = \{0\}$.
    Then, for any $s \in \sigma$ and $r \in \rho \setminus \{0\}$, there is a graph $G$ and a vertex $u \in V(G)$, called the portal, such that there is a $(\sigma,\rho)$-set of $G$ of size $(s +1) \cdot r$ that selects $u$ and exactly $s$ neighbors of $u$, and a $(\sigma,\rho)$-set of $G$ of size $(s+1) \cdot r$ that does not select $u$ and that selects exactly $r$ neighbors of $u$.
\end{lemma}
\begin{proof}
    A quick inspection of the proof of \cite[Lemma 5.1]{fockeTightComplexityBoundsLowerBound} confirms these properties for the gadget created there.
\end{proof}

Since we use the gadget of \cref{thm:sigma_s_rho_s_provider_gadget} several times in our construction, we will simply refer to it as the \emph{useful provider}.
Let us note that
because $\allStatesNonPartial \subseteq \allStatesPartial$, the weight definitions in \cref{def:partial_weights} also work for the nonpartial case in the natural way.
Similarly, the definition of state complements in \cref{def:partial_state_complements} also applies for the nonpartial state set, notice here that no state of $\allStatesNonPartial$ is an overflow state.
We begin by providing a simple manager for the case that $\sigma \not = \{0\}$, and that $\rho$ is a simple cofinite set.

\begin{figure}
    \centering
    \includegraphics[width = \linewidth]{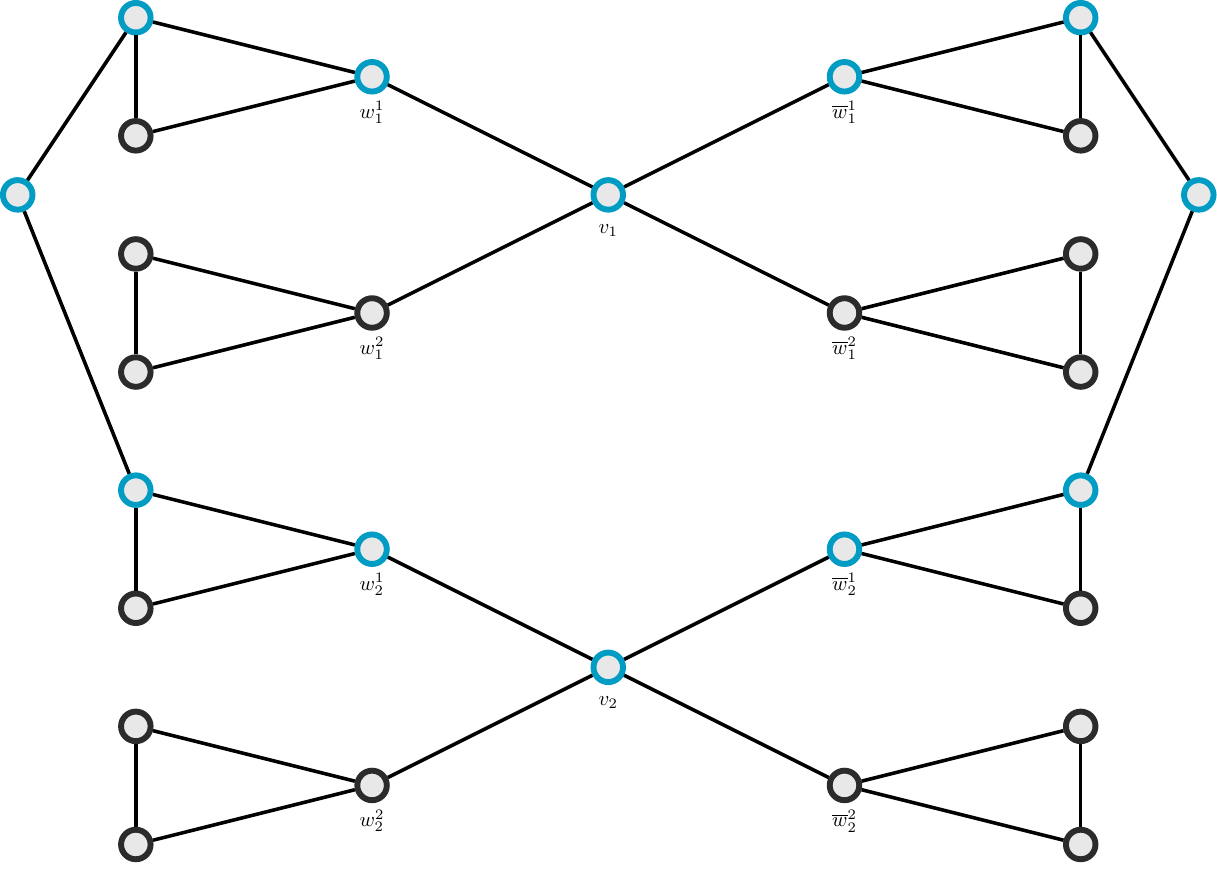}
    \caption{An example of the simple manager from \cref{thm:simple_manager_sigma_not_zero} for the case $A = \{\sigma_1\sigma_1\}$, $\sigma = \{2\}$, $\parityConstant = 2$, $\rho = \mathbb{Z}_{\geq 1}$ and a selection satisfying all vertices. Only selected vertices of $Q$ and $\compl{Q}$ are drawn, and copies of the useful provider are omitted. The selections within the omitted useful providers guarantee that all unselected vertices are satisfied.}
    \label{fig:simple_manger_sigma_no_zero}
\end{figure}

\begin{lemma}
    \label{thm:simple_manager_sigma_not_zero}
    Let $\sigma$ be a finite or simple cofinite set such that $\sigma \not = \{0\}$, and $\rho$ a simple cofinite set.
    Furthermore, let $A \subseteq \allStatesNonPartial^g$ (where $g \geq 1$) be a set of state-vectors, such that every $\vec s \in A$ has the same weight $\weightConstant$, the same number of $\sigma$-states $\selStateVertexPerGadget$, and the same $\sigma$-weight $\selWeight$, and there is some $\parityConstant \in \sigma$ with $\parityConstant > 0$ such that $\selStateVertexPerGadget \equiv_\parityConstant 0$ and $\selWeight \equiv_\parityConstant 0$.
    Then, there is a simple $A$-manager.
\end{lemma}
\begin{proof}
    Since each vector of $A$ has the same weight $\weightConstant$ and the same $\sigma$-weight $\selWeight$, each vector of $A$ also has the same $\rho$-weight $\unselWeight$.
    Each vector of $A$ has $\selStateVertexPerGadget$ $\sigma$-states, so also $\unselStateVertexPerGadget = g - \selStateVertexPerGadget$ $\rho$-states.

    We introduce the vertices $U = \{v_1,\dots,v_g\}$ to the graph.
    Then, for each $i \in \range{g}$, we introduce $\allLargestNonPartial$ vertices $w_i^1,\dots,w_i^{\allLargestNonPartial}$, and $\allLargestNonPartial$ vertices $\compl{w}_i^1,\dots,\compl{w}_i^{\allLargestNonPartial}$.
    For any $i \in \range{g}$ and $j \in \range{\allLargestNonPartial}$, we make vertices $w_i^j$ and $\compl{w}_i^j$ adjacent to vertex $v_i$.
    Furthermore, for each $i \in \range{g}$ and each $j \in \range{\allLargestNonPartial}$ we introduce a clique $K_i^j$ on $\parityConstant$ vertices, similarly introduce a clique $\compl{K}_i^j$ on $\parityConstant$ vertices.
    Each vertex of $K_i^j$ is made adjacent to vertex $w_i^j$, and each vertex of $\compl{K}_i^j$ is made adjacent to vertex $\compl{w}_i^j$.

    Now, let $Z = \bigcup_{i \in \range{g}, j \in \range{\allLargestNonPartial}} V\left(K_i^j\right)$ and $\compl{Z} = \bigcup_{i \in \range{g}, j \in \range{\allLargestNonPartial}} V\left(\compl{K}_i^j\right)$.
    Then, for each subset $z$ of $Z$ of size exactly $\parityConstant$, introduce an additional vertex $q_z$.
    Make vertex $q_z$ adjacent to each vertex of $z$.
    Denote the set of all of these vertices $q_z$ introduced so far as $Q$.
    Similarly, for each subset $z$ of $\compl{Z}$ of size exactly $\parityConstant$, introduce an additional vertex $q_z$, and make this vertex adjacent to each vertex of $z$.
    Denote the set of all of these vertices $q_z$ we introduced for $\compl{Z}$ as $\compl{Q}$.

    Let $G'$ be the graph created so far.
    Proceed by adding $\min \rho$ copies of the useful provider from \cref{thm:sigma_s_rho_s_provider_gadget} (for the constants $\sigLargestNonPartial \geq 0$ and $\rhoLargestNonPartial + 1 > 0$) to the graph for every vertex $v \in V(G') \setminus U$.
    Make the portal vertex of each of these $\min \rho$ gadgets adjacent to vertex $v$.
    For convenience, define $C = (\sigLargestNonPartial + 1) \cdot (\rhoLargestNonPartial + 1)$, which corresponds to the size of the solutions which are guaranteed to exist in each useful provider.
    Denote the vertices of all of these $\min \rho$ copies of the useful provider attached to $v$ as the vertex set $P_v$.
    This concludes the description of $G$.
    Consult \cref{fig:simple_manger_sigma_no_zero} for an illustration.

    Define $L' = Q \cup \bigcup_{i \in \range{g}, j \in \range{\allLargestNonPartial}} \big( \{w_i^j\} \cup V(K_i^j) \big)$, and $L = L' \cup \bigcup_{v \in L'} P_v$.
    Similarly, define $\compl{L'} = \compl{Q} \cup \bigcup_{i \in \range{g}, j \in \range{\allLargestNonPartial}} \big( \{\compl{w}_i^j\} \cup V(\compl{K}_i^j) \big)$, and $\compl{L} = \compl{L'} \cup \bigcup_{v \in \compl{L'}} P_v$.
    We utilize $L$ as the left side of the simple manager, and $\compl{L}$ as the right side of the simple manager.
    It is not difficult to see that $L,\compl{L}$ is a partition of $V(G) \setminus U$.
    Hence, $V(G) \setminus U$ fulfills condition 1 of the simple manager definition.
    Moreover, each vertex of $U$ only has neighbors in $L' \cup \compl{L'}$, so $U$ fulfills condition 2 of the definition of simple managers.

    So, all that remains is showing that an appropriate set $S_{\vec s}$ exists for any $\vec s \in A$.
    For this purpose, define
    \[M_{L'}= \unselWeight{} \cdot (\parityConstant + 1) + \selWeight{} \cdot \parityConstant + \frac{\selWeight \cdot (\parityConstant-1)}{\parityConstant},\] which turns out to be the number of selected vertices of the set $L'$ in our chosen solutions.
    First, observe that since $\selWeight \equiv_\parityConstant 0$, the value $M_{L'}$ is an integer.
    The value $M_{L'}$ comes together as follows.
    For each selected vertex $w_i^j$, we will select almost all vertices of the clique of size $\parityConstant$ that is attached to $w_i^j$.
    When $v_i$ is not selected, we will select the whole clique of size $\parityConstant$, when $v_i$ is selected, we will select the whole clique except for one vertex.
    Hence, in exactly $\unselWeight{}$ cliques we will select $\parityConstant$ vertices, and in $\selWeight{}$ we will select $\parityConstant - 1$ vertices.
    Furthermore, we will select $\unselWeight + \selWeight$ vertices of the form $w_i^j$.
    If the whole clique of size $\parityConstant$ is selected, and the neighbor $w_i^j$ of each clique vertex is selected, this ensures that each clique vertex and vertex $w_i^j$ are satisfied since those vertices then have $\parityConstant \in \sigma$ selected neighbors.
    However, if only $\parityConstant - 1$ vertices of the clique are selected, this is not the case.
    While the vertex $w_i^j$ neighboring each clique vertex then has $\parityConstant$ selected neighbors ($\parityConstant - 1$ from the clique and selected neighbor $v_i$), each selected clique vertex has only $\parityConstant - 1$ selected neighbors: $\parityConstant - 2$ from within the clique, and the selected neighbor $w_i^j$.
    Since we might not have $\parityConstant - 1 \in \sigma$, each of the $\parityConstant - 1$ selected vertices of the clique requires an additional selected vertex.
    So, we will select $\frac{\selWeight \cdot (\parityConstant-1)}{\parityConstant}$ further vertices $q_z$ to satisfy all selected clique vertices.

    Now, let us define $\compl{\selWeight} = \selStateVertexPerGadget \cdot \sigLargestNonPartial - \selWeight$ which corresponds to the $\sigma$-weight of the complement of vectors in $A$.
    Note that, since $\selStateVertexPerGadget$ and $\selWeight$ are divisible by $\parityConstant$, so is $\compl{\selWeight}$.
    The $\rho$-weight the complements of vectors in $A$ is $\compl{\unselWeight} = \unselStateVertexPerGadget \cdot \rhoLargestNonPartial - \unselWeight$.
    Next, define \[M_{\compl{L'}} = \compl{\unselWeight} \cdot (\parityConstant+1) + \compl{\selWeight} \cdot \parityConstant + \frac{\compl{\selWeight} \cdot (\parityConstant-1)}{\parityConstant}.\]
    The intuition behind this value is the same as for $M_{L'}$, only that it is for the complement of states of $A$.

    Next, fix an arbitrary $\vec s \in A$.
    We now build solution $S_{\vec s}$ of condition 3 of the simple manager definition.
    We select vertex $v_i$ if and only if $\vec{s}[i]$ is a $\sigma$-state.
    Moreover, if $\vec{s}[i] = \tau_c$, we select exactly $c$ neighbors of $v_i$ in $L$, which are vertices from the set $w_i^1,\dots,w_i^{\allLargestNonPartial}$, and $s_\tau - c$ vertices from the set $\compl{w}_i^1,\dots,\compl{w}_i^{\allLargestNonPartial}$.
    Then, if vertex $w_i^j$ is selected and $v_i$ is not selected, we select all vertices of the clique $K_i^j$.
    On the other hand, if $w_i^j$ is selected and $v_i$ is also selected, we select all except for one vertex of $K_i^j$.
    Similarly, if $\compl{w}_i^j$ is selected and $v_i$ is not selected, we select all vertices of $\compl{K}_i^j$.
    If both $\compl{w}_i^j$ and $v_i$ are selected, we select all vertices of $\compl{K}_i^j$ apart from one.

    Call the selection so far $S_{\vec s}''$.
    Observe that each vertex of $U$ is satisfied by $S_{\vec s}''$.
    When $w_i^j$ is selected and $v_i$ is not selected, also the vertex $w_i^j$ and each vertex of $K_i^j$ are satisfied, because these latter vertices then have exactly $\parityConstant \in \sigma$ selected neighbors each.
    On the other hand, when both $w_i^j$ and $v_i$ are selected, this is not the case.
    Then, vertex $w_i^j$ is satisfied, but each selected vertex of $K_i^j$ only has $\parityConstant - 1$ selected neighbors.
    Let $L_{\uptextsf{v}}$ denote all the selected vertices which are part of a clique $K_i^j$ that only have $\parityConstant - 1$ selected neighbors in set $S_{\vec s}''$.
    Observe that we have exactly $\parityConstant - 1$ of these vertices for every selected $w_i^j$ with selected $v_i$.
    Given that the $\sigma$-weight of $\vec s$ is $\selWeight$, this means that $|L_{\uptextsf{v}}| = (\parityConstant - 1) \cdot \selWeight$.
    In particular, $|L_{\uptextsf{v}}|$ is divisible by $\parityConstant$ because $\selWeight$ is divisible by $\parityConstant$ by assumption.
    Similarly, let $\compl{L_{\uptextsf{v}}}$ be the selected vertices which are part of a clique $\compl{K}_i^j$ and only have $\parityConstant - 1$ selected neighbors in the set $S_{\vec s}''$.
    Observe that we again have $\parityConstant - 1$ of these vertices for every vertex $\compl{w}_i^j$ with selected $v_i$.
    In particular, this means that $\left|\compl{L_{\uptextsf{v}}}\right| = (\parityConstant - 1) \cdot \compl{\selWeight}$.
    We have already established that $\compl{\selWeight}$ is divisible by $\parityConstant$.
    So, also $\left|\compl{L_{\uptextsf{v}}}\right|$ is divisible by $\parityConstant$.

    We now extend $S_{\vec s}''$ to a selection $S_{\vec s}'$.
    Partition $L_{\uptextsf{v}}$ into $\frac{|L_{\uptextsf{v}}|}{\parityConstant}$ blocks of size exactly $\parityConstant$.\footnote{The empty set can be partitioned into zero blocks of arbitrary size.}
    Then, iterate through all blocks $z$ of this partition, and add the vertex $q_z$ to the set $S_{\vec s}'$.
    Similarly, partition $\compl{L_{\uptextsf{v}}}$ into blocks of size exactly $\parityConstant$, iterate through each block $z$ of this partition, and add the vertex $q_z$ to the set $S_{\vec s}'$.

    Now, observe that in $S_{\vec s}'$, each vertex of $L_{\uptextsf{v}}$ and $\compl{L_{\uptextsf{v}}}$ received exactly one additional neighbor compared to $S_{\vec s}''$, so each of these vertices is now satisfied.
    Moreover, each selected vertex $q_z$ (for any set $z$) is satisfied because all of its $\parityConstant \in \sigma$ neighbors are selected, and it is selected itself.

    In particular, every selected vertex of $L' \cup \compl{L'}$ is satisfied, and every vertex of $U$ is satisfied.
    So, finally, we want to extend solution $S_{\vec s}'$ to solution $S_{\vec s}$ which also makes the remaining vertices satisfied, that is, the unselected vertices and vertices of the useful providers.
    For this purpose, consider an arbitrary unselected vertex $v$ of $L' \cup \compl{L'}$.
    Then, this vertex $v$ is adjacent to $\min \rho$ vertices of the set $P_v$.
    Each useful provider that makes up $P_v$ has some $(\sigma,\rho)$-set of size $C$ selecting the neighbor of $v$ in the gadget.
    We choose this solution in the gadget.
    Then, $v$ has at least $\min \rho$ selected neighbors, and is thus satisfied.

    Finally, we need to show that, even when vertex $v \in L' \cup \compl{L'}$ is selected, we can satisfy all vertices of $P_v$.
    This is also not difficult.
    In each gadget that makes up $P_v$, we choose the solution of size $C$ that does not select the neighbor of $v$.
    Considering that $\rho$ is simple cofinite this selection satisfies all vertices of $P_v$.
    Regarding the number of selected vertices, it is not difficult to confirm that we select exactly $\simpleManagerPortSolSize = \selStateVertexPerGadget$ vertices of $U$, $M_{L'}$ vertices of $L'$, and $M_{\compl{L'}}$ vertices of $\compl{L'}$.
    Furthermore, we select exactly $C$ vertices of each useful provider, and there are $\min \rho$ useful providers attached to each vertex of $L'$ and $\compl{L'}$.
    Hence, we select $\simpleManagerLeftSolSize = M_{L'} + |L'| \cdot \min \rho \cdot C$ vertices of $L$, and $\simpleManagerRightSolSize = M_{\compl{L'}} + |\compl{L'}| \cdot \min \rho \cdot C$ vertices of $\compl{L}$.
\end{proof}

Since the simple manager above does not work when $\sigma = \{0\}$ (observe that in this case we certainly cannot find a suitable value for $\parityConstant$), we need another construction to handle this case, which we present next.

\begin{lemma}
    \label{thm:simple_manager_sigma_zero}
    Let $\sigma = \{0\}$ and $\rho$ be a simple cofinite set.
    Moreover, let $A \subseteq \allStatesNonPartial^g$ (where $g \geq 1$), such that each $s \in A$ has the same number of selected vertices $\selStateVertexPerGadget$ and the same weight $\weightConstant$.
    Then, there is a simple $A$-manager.
\end{lemma}
\begin{proof}
    This construction is similar to the one for the case when $\sigma \not = \{0\}$.
    We begin by introducing the vertices $U = \{v_1,\dots,v_g\}$ to the graph.
    Then, for each $i \in \range{g}$, introduce $\allLargestNonPartial$ vertices $w_i^1,\dots,w_i^{\allLargestNonPartial}$, and $\allLargestNonPartial$ vertices $\compl{w}_i^1,\dots,\compl{w}_i^{\allLargestNonPartial}$.

    Proceed by adding $\min \rho$ copies of the useful provider from \cref{thm:sigma_s_rho_s_provider_gadget} (for the constants $0 \in \sigma$ and $\min \rho + 1$) to the graph for every vertex $v$ that is not in $U$, and make the portal of each gadget adjacent to vertex $v$.
    Denote the vertices of all of these $\min \rho$ copies attached to $v$ as the set $P_v$.
    This concludes the description of $G$.

    Define $L' = \bigcup_{i \in \range{g}, j \in \range{\allLargestNonPartial}} \{w_i^j\}$, and $L = L' \cup \bigcup_{v \in L} P_v$.
    Similarly, define $\compl{L'} = \bigcup_{i \in \range{g}, j \in \range{\allLargestNonPartial}} \{\compl{w}_i^j\}$, and $\compl{L} = \compl{L'} \cup \bigcup_{v \in \compl{L'}} P_v$.
    It is not difficult to see that $L,\compl{L}$ is a partition of $V(G) \setminus U$, we will use $L$ as the left side, and $\compl{L}$ as the right side of the simple manager.
    Furthermore, it is clear that each vertex of $U$ only has neighbors in $L' \cup \compl{L'}$.

    So, all that remains is showing that an appropriate set $S_{\vec s}$ exists for any ${\vec s} \in A$.
    Fix an arbitrary ${\vec s} \in A$.
    Then, we select vertex $v_i$ if and only if ${\vec s}[i]$ is a $\sigma$-state.
    Moreover, if ${\vec s}[i] = \rho_c$, we select exactly $c$ neighbors of $v_i$ in $L$, which are vertices from the set $w_i^1,\dots,w_i^{\allLargestNonPartial}$, and $\rhoLargestNonPartial - c$ vertices from the set $\compl{w}_i^1,\dots,\compl{w}_i^{\allLargestNonPartial}$.

    Recalling that $\sigma = \{0\}$ observe that under selection $S_{\vec s}$ all vertices of $U$ and all selected vertices of $L' \cup \compl{L'}$ are satisfied.

    To satisfy the remaining vertices, for any unselected $v$ which is not part of $U$ and each useful provider attached to $v$ we select $C = \min \rho + 1$ vertices of the useful provider, such that $v$ has $\min \rho$ selected neighbors in the set $P_v$.
    For each selected vertex $v$ that is not part of $U$, we instead select $C$ vertices of each useful provider that is attached to $v$, such that $v$ has no neighbors in $P_v$.

    Regarding the size of this solution, we clearly select exactly $\simpleManagerPortSolSize = \selStateVertexPerGadget$ of $U$.
    Furthermore, we select $\weightConstant$ vertices of $L'$.
    Then, in each of the $\min \rho$ useful provider attached to each vertex of $L'$ we select $C$ vertices.
    Hence, we select $\simpleManagerLeftSolSize = \weightConstant + |L'| \cdot \min \rho \cdot C$ vertices of $L$.
    Similarly, we select $M_{\compl{L'}} = \selStateVertexPerGadget \cdot \sigLargestNonPartial + (g - \selStateVertexPerGadget) \cdot \rhoLargestNonPartial - \weightConstant$ vertices of $\compl{L'}$.
    Thus, we select $\simpleManagerRightSolSize = M_{\compl{L'}} + |\compl{L'}| \cdot \min \rho \cdot C$ vertices of $\compl{L}$.
\end{proof}

As a last step before giving the reduction, we extend the notions of good instances and path decomposition to the nonpartial problem.
Recall the notions for the partial problem defined in \cref{def:partial_good_instance,def:partial_intermediate_path_decompositions}.
An instance $(G,k,\mathcal{R})$ of \minGenDomSetRel{} is good if and only if the instance $(G,k,0,\mathcal{R})$ of \minParGenDomSetRel{} is good.
This is exactly the case if each subset of $V(G)$ that fulfills all relations of $\mathcal{R}$ has size at least $k$.
A path decomposition of instance $(G,k,\mathcal{R})$ of \minGenDomSetRel{} is the same as a path decomposition of the instance $(G,k,0,\mathcal{R})$ of \minParGenDomSetRel{}.
That is, it is a path decomposition of graph $G$ such that additionally for each $R \in \mathcal{R}$ there is a bag containing $\scp{R}$.
Now, we are finally ready to proof our intermediate lower bound.

\begin{lemma}
    \label{thm:minimization_non_partial_high_lvl}
    Let $\sigma$ be a nonempty finite or simple cofinite set and $\rho$ be a simple cofinite set.
    Then, for any $\varepsilon > 0$ there exists a constant $d$ such that no algorithm can solve the problem \minGenDomSetRel{} on good instances $I$ with arity at most $d$ in time $(|\allStatesNonPartial| - \varepsilon)^{\tw} \cdot |I|^{\Oh(1)}$, even when a path decomposition of width $\pw$ is provided with the input, unless the \pwseth{} is false.
\end{lemma}
\begin{proof}
    We follow the same general proof template as for the partial variant, in principle this construction is not majorly different from the one for the partial problem.
    The main difference is that we need to utilize the simple managers to ensure that all vertices can be satisfied.

    \subparagraph*{Choosing the alphabet size.}
    We first need to figure out which alphabet size we need to use for the 2-\problem{CSP} problem.
    Concretely, we need to ensure that the conditions of \cref{thm:simple_manager_sigma_not_zero} or \cref{thm:simple_manager_sigma_zero} are fulfilled, as otherwise we cannot use those simple managers.

    Fix an arbitrary $\varepsilon > 0$, we will have a reduction for this fixed $\varepsilon$ that will work against an improvement of $\varepsilon$ in the base of the running time.
    Let $P_n$ be the partition of $\allStatesNonPartial^n$ such that each string of each block has the same number of $\sigma$-states, the same $\sigma$-weight, and the same weight.
    Since each string of $\allStatesPartial^n$ has a number of $\sigma$-states that is between $0$ and $n$, a $\sigma$-weight between $0$ and $n \cdot \sigLargestNonPartial$, and a weight that is between $0$ and $n \cdot \allLargestNonPartial$, we find that $|P_n| \leq (n + 1) \cdot (n \cdot \sigLargestNonPartial + 1) \cdot (n \cdot \allLargestNonPartial + 1)$.
    Because $\sigLargestNonPartial$ and $\allLargestNonPartial$ are constants this value is polynomial in $n$.
    Furthermore, set $X = \mathbb{Z}_{\geq 3}$.
    We now use \cref{thm:choosing_alphabet_size} for the value $B = |\allStatesNonPartial|$, function $f(x) = |P_x|$, value $\varepsilon$ and infinite set $X$.
    The lemma yields that there are positive integers $q'$, and $g' \in X$, and a real $\varepsilon'' > 0$ such that $|\allStatesNonPartial|^{q'} \leq \frac{|\allStatesNonPartial|^{g'}}{|P_{g'}|}$ and $(|\allStatesNonPartial| - \varepsilon)^{g'} \leq (|\allStatesNonPartial|^{q'} - \varepsilon'')$.
    We have $g' \geq 3$ because $g' \in X$.
    Note that the statement of \cref{thm:choosing_alphabet_size} is purely existential, but this is not a problem since we must only prove the existence of a reduction algorithm for our fixed $\varepsilon$, and these values can be hard-coded into the reduction.
    Now, let $A'$ be the largest block of $P_{g'}$, then $|\allStatesNonPartial|^{q'} \leq |A'|$.
    Let the $\sigma$-weight of the strings of $A'$ be $\selWeight'$, the weight of strings of $A'$ be $\weightConstant'$, and the number of $\sigma$-states of strings of $A'$ be $\selStateVertexPerGadget'$.
    Note that we have $|A'| \geq 3$ as for example the set containing state vectors that have the state $\sigma_0$ a total of $g' - 1$ times and the state $\rho_0$ one time is a subset of a block of $P_{g'}$ that already has size $g' \geq 3$.

    If $\sigma = \{0\}$, we can use $A'$ as our set for \cref{thm:simple_manager_sigma_zero}.
    So, we set $q = q', \varepsilon' = \varepsilon'', g = g'$ and $A = A'$, moreover we set $\selStateVertexPerGadget = \selStateVertexPerGadget'$, $\selWeight = \selWeight'$, and $\weightConstant = \weightConstant'$.

    Otherwise, if $\sigma \not = \{0\}$, then $A'$ is not yet sufficient for our purposes.
    We need a set of state-vectors, such that the number of $\sigma$-states and the $\sigma$-weight are both divisible by $\parityConstant$ for some $\parityConstant \in \sigma \setminus \{0\}$.
    For this purpose fix an arbitrary $\parityConstant \in \sigma \setminus \{0\}$.

    To achieve our goal we set $A = (A')^{\parityConstant}$.
    Although technically each element of $A$ is a vector of dimension $\parityConstant$ of vectors from $\allStatesNonPartial^{g'}$, we treat elements of $A$ as vectors of dimension $\parityConstant \cdot g'$ with elements from $\allStatesNonPartial$.
    Observe that since states of $A'$ have the same number of $\sigma$-states and the same $\sigma$-weight, we get that the number of $\sigma$-states and the $\sigma$-weight of each vector of $A$ are divisible by $\parityConstant$.
    The number of $\sigma$-states of each vector of $A$ is $\selStateVertexPerGadget = \selStateVertexPerGadget' \cdot \parityConstant$, the $\sigma$-weight is $\selWeight = \selWeight' \cdot \parityConstant$, and the total weight is $\weightConstant = \weightConstant' \cdot \parityConstant$.
    Also set $q = q' \cdot \parityConstant$ and $\blocksPerVertexGadget = \blocksPerVertexGadget' \cdot \parityConstant$.

    Then, from $|\allStatesNonPartial|^{q'} \leq |A'|$ we obtain $|\allStatesNonPartial|^q = |\allStatesNonPartial|^{q' \cdot \parityConstant} \leq |A'|^{\parityConstant} = |A|$.
    Furthermore, from $(|\allStatesNonPartial| - \varepsilon)^{g'} \leq (|\allStatesNonPartial|^{q'} - \varepsilon'')$, we obtain that $(|\allStatesNonPartial| - \varepsilon)^{\blocksPerVertexGadget} = (|\allStatesNonPartial| - \varepsilon)^{g' \cdot \parityConstant} \leq (|\allStatesNonPartial|^{q'} - \varepsilon'')^\parityConstant$.
    We have  $(|\allStatesNonPartial|^{q'} - \varepsilon'')^\parityConstant < |\allStatesNonPartial|^{q' \cdot \parityConstant} = |\allStatesNonPartial|^q$, so we also have $(|\allStatesNonPartial| - \varepsilon)^{g} \leq |\allStatesNonPartial|^{q}- \varepsilon' \leq |A| - \varepsilon'$ for some $\varepsilon' > 0$.
    Finally, from $|A'| \geq 3$ and $\parityConstant \geq 1$ we also obtain $|A| \geq 3$.
    We need this condition to apply \cref{thm:lampis}.

    \subparagraph*{The construction.}

    We reduce from $2$-\problem{CSP} with an alphabet of size $|A|$.
    We provide the proof as if the alphabet was exactly the set $A$ for convenience.
    Let the input instance $I$ be on the variables $\variable{1},\dots,\variable{\variableCount}$ and the input constraints be $\constraint{1},\dots,\constraint{\constraintCount}$.
    Moreover, let the path decomposition of the primal-graph of the instance that is provided with the instance be $\bag{1},\dots,\bag{\bagCount}$.
    We also assume we have some injective function $\constraintBagMapping{}$, which maps each constraint to the index of the bag containing all variables of the constraint.
    Such a function can be assumed to exist, as we could otherwise copy bags of the path decomposition until it is easy to compute.

    We build the graph of the output instance as follows.
    \begin{itemize}
        \item For each $i \in \range{\bagCount}$ and each $\variable{j} \in \bag{i}$ and each $p \in \range{\blocksPerVertexGadget}$ we create \emph{state} vertex $\statevertex{i}{j}{p}$.
        \item For each $i \in \range{\bagCount}$ and each $\variable{j} \in \bag{i}$ we add the simple $A$-manager from \cref{thm:simple_manager_sigma_not_zero} or \cref{thm:simple_manager_sigma_zero} (depending on whether $\sigma = \{0\}$) to the graph.
              We identify the $g$ vertices $U$ of the manager with the vertices $\statevertex{i}{j}{1},\dots,\statevertex{i}{j}{\blocksPerVertexGadget}$.
              We denote the left side of the manager as $L^i_j$, and the right side as $\compl{L}^i_j$.
    \end{itemize}

    This concludes the description of the output graph $G$.
    For convenience, we define some additional sets.
    For each bag $\bag{i}$ and $\variable{j} \in \bag{i}$, we define $U^i_j = \bigcup_{p \in \range{g}} \{\statevertex{i}{j}{p}\}$.
    We also set $\bag{0} = \bag{\bagCount + 1} = \emptyset$.

    Let $S$ be a subset of the vertices of the graph.
    We define the state of arbitrary state vertex $\statevertex{i}{j}{p}$ (relative to this selection $S$).
    Let $\tau$ be the symbol $\sigma$ if $\statevertex{i}{j}{p}$ is selected by $S$, and the symbol $\rho$ otherwise.
    Define $h = \min(|S \cap N(\statevertex{i}{j}{p}) \cap L^i_j|, \tauLargestNonPartial)$.
    We define $\state{\statevertex{i}{j}{p}} = \tau_h$.
    Let $v_1,v_2,\dots,v_\ell$ be a list of $\ell$ state vertices.
    Then, we define $\state{v_1,v_2,\dots,v_\ell} = \state{v_1}\state{v_2}\dots\state{v_\ell}$.

    We similarly define the complementary states of state vertices, $\rightState{\statevertex{i}{j}{p}}$, the only difference in the definition is that we utilize $\compl{L}^i_j$ instead of $L^i_j$.
    We also extend the definition of complementary states of state vertices to lists of complementary states of state vertices.
    Next, we elaborate on the relations we use.

    \subparagraph*{The used relations.}

    Similarly to the proof for the partial version, we first describe a set of relations $\mathcal{R}_0$ which is not superset-closed.
    Later we define the relations  $\mathcal{R}$ of the output instance to be the superset-closure of $\mathcal{R}_0$.
    Recall the values $\simpleManagerLeftSolSize$ and $\simpleManagerRightSolSize$ from the definition of simple managers in \cref{def:simple_manager}.

    For each $i \in \range{\bagCount}$ and each $\variable{j} \in \bag{i}\setminus{\bag{i-1}}$, we create an \emph{initial} relation $\initialRelation{i}{j}$.
    The relation scope of this relation is $U^i_j \cup L^i_j$.
    The relation accepts $S \cap \scp{\initialRelation{i}{j}}$ if and only if
    \begin{enumerate}
        \item $\state{\statevertex{i}{j}{1},\statevertex{i}{j}{2},\dots,\statevertex{i}{j}{\blocksPerVertexGadget}} \in A$, and
        \item for all $p \in \range{\blocksPerVertexGadget}$ we have that exactly $\weight{\state{\statevertex{i}{j}{p}}}$ vertices of $N(\statevertex{i}{j}{p}) \cap L^i_j$ are selected, and
        \item
              at least $\simpleManagerLeftSolSize$ vertices of $L^i_j$ are selected.
    \end{enumerate}

    For each $i \in \range{\bagCount}$ and each $\variable{j} \in \bag{i} \cap \bag{i-1}$ we create a \emph{consistency} relation $\consistencyRelation{i}{j}$.
    The relation scope of this relation is $\scp{\consistencyRelation{i}{j}} = U^i_j \cup U^{i-1}_j \cup L^i_j \cup \compl{L}^{i-1}_j$.
    The relation accepts a selection $S \cap \scp{\consistencyRelation{i}{j}}$ if and only if the following properties all hold:
    \begin{enumerate}
        \item $\state{\statevertex{i}{j}{1},\statevertex{i}{j}{2},\dots,\statevertex{i}{j}{\blocksPerVertexGadget}} \in A$.
        \item For all $p \in \range{\blocksPerVertexGadget}$ we have that $\state{\statevertex{i}{j}{p}} = \complState{\rightState{\statevertex{i-1}{j}{p}}}$.
        \item For all $p \in \range{\blocksPerVertexGadget}$ we have that exactly $\weight{\state{\statevertex{i}{j}{p}}}$ vertices of $N(\statevertex{i}{j}{p}) \cap L^i_j$ are selected.
        \item For all $p \in \range{\blocksPerVertexGadget}$ we have that exactly $\weight{\rightState{\statevertex{i-1}{j}{p}}}$ vertices of $N(\statevertex{i-1}{j}{p}) \cap \compl{L}^{i-1}_j$ are selected.
        \item At least $\simpleManagerLeftSolSize$ vertices of $L^i_j$ are selected.
        \item At least $\simpleManagerRightSolSize$ vertices of $\compl{L}^{i-1}_j$ are selected.
    \end{enumerate}

    Next, for each bag $\bag{i}$ and variable $\variable{j} \in \bag{i}$ with $\variable{j} \notin \bag{i+1}$, we add a final relation $\finalRelation{i}{j}$.
    The scope of this relation is $U^i_j \cup \compl{L}^i_j$.
    A selection of $\scp{\finalRelation{i}{j}}$ is accepted if and only if
    \begin{enumerate}
        \item $\complState{\rightState{\statevertex{i}{j}{1}}}\dots\complState{\rightState{\statevertex{i}{j}{\blocksPerVertexGadget}}} \in \usedStates$, and
        \item for all $p \in \range{\blocksPerVertexGadget}$ exactly $\weight{\rightState{\statevertex{i}{j}{p}}}$ vertices of $N(\statevertex{i}{j}{p}) \cap \compl{L}^i_j$ are selected, and
        \item at least $\simpleManagerRightSolSize$ vertices of $\compl{L}^i_j$ are selected.
    \end{enumerate}

    For each $\ell \in \range{\constraintCount}$ we create a \emph{constraint} relation $\constraintRelation{\ell}$ (for constraint $\constraint{\ell}$).
    Let $(x_j,x_{j'})$ be the variables appearing in constraint $\constraint{\ell}$.
    Recall that $\constraintBagMapping{\constraint{\ell}}$ is the index of a bag that contains $x_j$ and $x_{j'}$.
    Set $i = \constraintBagMapping{\constraint{\ell}}$.
    The scope of the relation is $\scp{\constraintRelation{\ell}} = L^i_{j} \cup U^i_j \cup L^i_{j'} \cup U^i_{j'}$.
    The relation accepts a selection of $\scp{\constraintRelation{\ell}}$ if and only it is the case that
    \begin{enumerate}
        \item $(\state{\statevertex{i}{j}{1},\dots,\statevertex{i}{j}{g}},\state{\statevertex{i}{j'}{1},\dots,\statevertex{i}{j'}{g}}) \in \acc{\constraint{\ell}}$, and
        \item at least $\simpleManagerLeftSolSize$ vertices of $L^i_j$ are selected, and
        \item at least $\simpleManagerLeftSolSize$ vertices of $L^i_{j'}$ are selected.
    \end{enumerate}

    Let $\mathcal{R}_0$ be the set of all relations described so far.
    Compute the superset-closure $\mathcal{R}$ of $\mathcal{R}_0$.
    That is, $\mathcal{R}$ is created by iterating over all $R_0 \in \mathcal{R}$ and adding a relation $R$ with $\scp{R} = \scp{R_0}$ and $\acc{R} = \{X' \mid X \in \acc{R_0}, X \subseteq X' \subseteq \scp{R_0}\}$.

    \subparagraph*{The solution size.}

    Set $\simpleManagerPortSolSize = \selStateVertexPerGadget$.
    We set the solution size to
    \[k = \sum_{i \in \range{\bagCount}} \sum_{\variable{j} \in \bag{i}} (\simpleManagerLeftSolSize + \simpleManagerRightSolSize + \simpleManagerPortSolSize).\]
    This value corresponds to the minimum solution size in each simple manager multiplied by the number of simple managers in the graph.
    The output instance is $I' = (G,k,\mathcal{R})$.
    Next, we show that our construction has the desired properties.

    \subparagraph*{Properties of the construction.}

    We begin by proving the forward direction of the correctness of the reduction.
    \begin{claim}
        \label{claim:non_partial_high_lvl_correctness_forward}
        If the input instance $I$ is a yes-instance, then the output instance $I'$ is a yes-instance.
    \end{claim}
    \begin{claimproof}
        Before we start, recall that for any state $\tau_c \in \allStatesNonPartial$ we have $\complState{\complState{\tau_c}} = \tau_c$.
        Now, assume that $I$ is a yes-instance.
        Then, there exists a variable assignment $\delta$ which fulfills all constraints of $I$.
        We build a $(\sigma,\rho)$-set of $I'$, initially start with an empty set $S$.

        For each variable $x_j$, we have that $\vec s = \delta(x_j)$ is a vector of $A$.
        Now, consider a bag $X_i$ such that $x_j \in X_i$.
        In our construction, we introduced a simple $A$-manager for this bag and variable.
        The set $U$ of this simple manager is the set of state vertices $U^i_j = \{\statevertex{i}{j}{1},\dots,\statevertex{i}{j}{g}\}$.
        By the properties of the simple $A$-managers, there is a $(\sigma,\rho)$-set $S_{\vec s}$ of the manager graph.
        Fix an arbitrary $p \in \range{g}$ and let $\vec s[p] = \tau_c$.
        The set $S_{\vec s}$ has the following properties:
        \begin{itemize}
            \item Set $S_{\vec s}$ selects vertex $\statevertex{i}{j}{p}$ if and only if $\tau$ is the symbol $\sigma$.
            \item Vertex $\statevertex{i}{j}{p}$ has exactly $c$ neighbors in $L^i_j$.
            \item If $\statevertex{i}{j}{p}$ is selected, then it has exactly $\sigLargestNonPartial - c$ selected neighbors in $\compl{L}^i_j$.
            \item If $\statevertex{i}{j}{p}$ is not selected, then it has exactly $\rhoLargestNonPartial - c$ selected neighbors in $\compl{L}^i_j$.
        \end{itemize}
        Moreover, $S_{\vec s}$ selects exactly $\simpleManagerLeftSolSize$ vertices of $L^i_j$, $\simpleManagerRightSolSize$ of $\compl{L}^i_j$, and $\simpleManagerPortSolSize = \selStateVertexPerGadget$ vertices of $U^i_j$.
        Observe that, using this selection the state $\state{\statevertex{i}{j}{p}}$ is exactly $\tau_c$.
        This implies that $\state{\statevertex{i}{j}{1},\dots,{\statevertex{i}{j}{p}}} = \delta(x_j) \in A$.
        Similarly, we have that $\rightState{\statevertex{i}{j}{p}} = \complState{\tau_c}$.
        This yields $\rightState{\statevertex{i}{j}{p}} = \complState{\state{\statevertex{i}{j}{p}}}$ and $\state{\statevertex{i}{j}{p}} = \complState{{\rightState{\statevertex{i}{j}{p}}}}$.
        Furthermore, because $\weight{\tau_c} = c$, $S_{\vec s}$ selects exactly $\weight{\tau_c}$ neighbors of $\statevertex{i}{j}{p}$ in $L^i_j$.
        Also, note that $\weight{\rightState{\statevertex{i}{j}{p}}} = c - \sigLargestNonPartial$ when $\statevertex{i}{j}{p}$ is selected, and $c - \rhoLargestNonPartial$ when $\statevertex{i}{j}{p}$ is not selected.
        Hence, we have that $S_{\vec s}$ selects exactly $\weight{\rightState{\statevertex{i}{j}{p}}}$ neighbors of $\statevertex{i}{j}{p}$ in $\compl{L}^i_j$.
        We add this solution $S_{\vec s}$ to the set $S$.

        Since distinct simple managers of the output instance are not connected by edges, the resulting set $S$ is a $(\sigma,\rho)$-set of $G$.
        Because we select exactly $\simpleManagerLeftSolSize + \simpleManagerRightSolSize + \simpleManagerPortSolSize$ vertices in each simple manager, we additionally have $|S| = k$.

        It remains to verify that $S$ fulfills the relations of $\mathcal{R}$.
        We will prove that $S$ even fulfills the relations of $\mathcal{R}_0$.
        By the elaborations above about the solution $S_{\vec s}$, it is clear that the conditions of the initial relations are fulfilled.

        For the consistency relations, observe that for any $x_j \in X_i \cap X_{i-1}$ we select the same solution in the simple manager for $x_j$ in bag $X_i$, and the simple manager for $x_j$ in bag $X_{i-1}$.
        Fix an arbitrary $p \in \range{t}$.
        We then have $\state{\statevertex{i}{j}{p}} = \state{\statevertex{i-1}{j}{p}}$ for all $p \in \range{g}$.
        Similarly, we ensure $\rightState{\statevertex{i}{j}{p}} = \rightState{\statevertex{i-1}{j}{p}}$.
        In particular, we have $\state{\statevertex{i}{j}{p}} = \complState{\rightState{\statevertex{i}{j}{p}}} = \complState{\rightState{\statevertex{i-1}{j}{p}}}$, as required by the relation.
        The other properties of the consistency relations follow straightforward from our selection within each simple manager.

        For a final relation $\finalRelation{i}{j}$, the second and third property again follow immediately from the properties of the selection within the simple managers.
        The first property is implied by the fact that $\complState{\rightState{\statevertex{i}{j}{p}}} = \state{\statevertex{i}{j}{p}}$ for all $p \in \range{g}$ and our selection within the simple manager.

        Finally, let us consider constraint relation $\constraintRelation{\ell}$.
        Set $i = \constraintBagMapping{\constraint{\ell}}$, and let $x_j$ and $x_{j'}$ be the variables appearing in the constraint.
        The second and third property of the constraint relations are fulfilled since the selection in each simple manager fulfills these properties.
        The first property follows from $\state{\statevertex{i}{j}{1},\dots,\statevertex{i}{j}{p}} = \delta(x_j)$, $\state{\statevertex{i}{j'}{1},\dots,\statevertex{i}{j'}{p}} = \delta(x_{j'})$, and $(\delta(x_j),\delta(x_{j'})) \in \acc{\constraint{\ell}}$.
        So $(G,k,\mathcal{R}_0)$ is a yes-instance, which implies that also $I' = (G,k,\mathcal{R})$ is a yes-instance.
    \end{claimproof}

    Now, we show that any set that fulfills all relations of the output instance $I'$ has size at least $k$, which means that the output instance is good.

    \begin{claim}
        Any set $S$ that fulfills all relations of $\mathcal{R}$ has size at least $k$.
        That is, $I'$ is a good-instance.
    \end{claim}
    \begin{claimproof}
        By the relations acting on each simple manager, $S$ must select at least $\simpleManagerLeftSolSize$ vertices of the left side of each manager, at least $\simpleManagerRightSolSize$ vertices of the right side of each manager, and at least $\simpleManagerPortSolSize = \selStateVertexPerGadget$ state-vertices of each manager.
        Then, we immediately obtain that any set $S$ that fulfills all relations of $I'$ must select at least $\simpleManagerLeftSolSize + \simpleManagerRightSolSize + \simpleManagerPortSolSize$ vertices of each simple manager.
        But the solution size $k$ is exactly $\simpleManagerLeftSolSize + \simpleManagerRightSolSize + \simpleManagerPortSolSize$ times the number of simple managers of $I'$, so, any set $S$ that fulfills all relations of $I'$ must have size at least $k$.
    \end{claimproof}

    Next, we show that any sufficiently small set that respects all relations of $\mathcal{R}$ must also respect all relations of $\mathcal{R}_0$.
    This property is used to show the backwards direction of safety later.

    \begin{claim}
        \label{claim:non_partial_high_lvl_small_set_fulfills_relations}
        Any set $S \subseteq V(G)$ that fulfills all relations of $\mathcal{R}$ and has size at most $k$ fulfills all relations of $\mathcal{R}_0$.
    \end{claim}
    \begin{claimproof}
        Let $S$ be a set of size at most $k$ that fulfills all relations of $\mathcal{R}$.
        We now argue that then, $S$ must actually fulfill all relations of $\mathcal{R}_0$.
        Consider an arbitrary simple manager of $G$.
        In the set $\mathcal{R}_0$, we have an initial-relation or consistency-relation that ensures at least $\simpleManagerLeftSolSize$ vertices of the left side of the simple manager are selected.
        Similarly, the relation ensures that at least $\simpleManagerPortSolSize$ state-vertices of the simple manager are selected, because each vector of $A$ has exactly $\simpleManagerPortSolSize = \selStateVertexPerGadget$ $\sigma$-states.
        A consistency-relation or a final-relation similarly ensures that at least $\simpleManagerRightSolSize$ vertices of the right side of the simple manager are selected.
        Since $\mathcal{R}$ is just the superset-closure of $\mathcal{R}_0$, the relations of $\mathcal{R}$ still guarantee these properties.
        Recall that $k$ is just $\simpleManagerLeftSolSize + \simpleManagerRightSolSize + \simpleManagerPortSolSize$ times the number of simple managers of $G$.
        Hence, the set $S$ must select exactly $\simpleManagerLeftSolSize$ vertices of the left side of each simple manager.
        Similarly, $S$ must select exactly $\simpleManagerRightSolSize$ vertices of the right side of each simple manager.
        Finally, $S$ must select exactly $\simpleManagerPortSolSize$ state-vertices of each simple manager.

        Now, consider an arbitrary relation $R$ of $\mathcal{R}$, and let $R_0$ be the relation corresponding to $R$ in $\mathcal{R}_0$.
        Since $S$ fulfills all relations of $\mathcal{R}$, the set $S$ fulfills relation $R$.
        Thus, the selection of $S$ in $\scp{R} = \scp{R_0}$ is a (not necessarily proper) superset of some selection in $\acc{R_0}$.
        If $R_0$ is an initial relation, then $\scp{R} = L^i_j \cup U^i_j$ for some $i,j$.
        Towards a contradiction, assume that $S \cap \scp{R}$ is a proper superset of some selection in $\acc{R_0}$.
        Observe that each accepted selection of $R_0$ selects at least $\simpleManagerLeftSolSize$ vertices of $L^i_j$, and at least $\simpleManagerPortSolSize$ vertices of $U^i_j$.
        Hence, $S$ must select more than $\simpleManagerLeftSolSize$ vertices of $L^i_j$, or more than $\simpleManagerPortSolSize$ vertices of $U^i_j$.
        In the last paragraph, we argued that this cannot be the case.
        Hence, $S$ cannot select a proper superset of some selection in $\acc{R_0}$.
        So, the selection of $S$ is in $\acc{R_0}$.

        The arguments for the remaining types of relations are analogous.
        If the relation scope contains set $L^i_j$, then the relation ensures at least $\simpleManagerLeftSolSize$ selected vertices of that set.
        If the relation scope contains set $\compl{L}^i_j$, it ensures at least $\simpleManagerRightSolSize$ selected vertices of that set, and if it contains some set $U^i_j$, at least $\simpleManagerPortSolSize$ selected vertices of that set.
        There is no room to select a proper superset of an accepted selection.
        So, $S$ fulfills all relations of $\mathcal{R}_0$.
    \end{claimproof}

    Now, we show that a $(\sigma,\rho)$-set that fulfills all relations of $\mathcal{R}_0$ yields an accepting variable-assignment for the input instance.

    \begin{claim}
        \label{claim:non_partial_high_lvl_backwards_1}
        If there is a $(\sigma,\rho)$-set set $S \subseteq V(G)$ of $G$ that fulfills all relations of $\mathcal{R}_0$, then the input instance is a yes-instance.
    \end{claim}
    \begin{claimproof}
        Let $S$ be a $(\sigma,\rho)$-set of $G$ that fulfills all relations of $\mathcal{R}_0$.
        We can prove that the state propagation works using essentially the same arguments as for the partial version when disregarding the possibility of violating vertices.
        This then also roughly amounts to those arguments used by Focke et al. \cite{fockeTightComplexityBoundsLowerBound} to ensure that the states propagate correctly.

        Consider an arbitrary variable $\variable{j}$ such that $\variable{j} \in \bag{i-1} \cap \bag{i}$.
        We want to show that $\state{\statevertex{i-1}{j}{p}} = \state{\statevertex{i}{j}{p}}$ for all $p \in \range{g}$, that is, the states of the state-vertices for the same variable are consistent.
        This ensures that those state-vertices representing the same variable in different bags have the same state.

        Fix an arbitrary $p \in \range{\blocksPerVertexGadget}$.
        The relation $\consistencyRelation{i}{j}$ ensures that $\state{\statevertex{i}{j}{p}} = \complState{\rightState{\statevertex{i-1}{j}{p}}}$.
        This guarantees that $\statevertex{i-1}{j}{p}$ is selected if and only if $\statevertex{i}{j}{p}$ is selected.
        It remains to argue that $\weight{\state{\statevertex{i-1}{j}{p}}} = \weight{\state{\statevertex{i}{j}{p}}}$.
        The consistency relation $\consistencyRelation{i}{j}$ also enforces that \[|S \cap L^i_j \cap N(\statevertex{i}{j}{p})| = \weight{\state{\statevertex{i}{j}{p}}}.\]
        Furthermore, it guarantees that $|S \cap \compl{L}^{i-1}_j \cap N(\statevertex{i-1}{j}{p})| = \weight{\rightState{\statevertex{i-1}{j}{p}}}$.
        We have that $\weight{\complState{\rightState{\statevertex{i-1}{j}{p}}}} = \tauLargestNonPartial - \weight{\rightState{\statevertex{i-1}{j}{p}}}$ by definition.
        From $\state{\statevertex{i}{j}{p}} = \complState{\rightState{\statevertex{i-1}{j}{p}}}$ we obtain that  $\weight{\state{\statevertex{i}{j}{p}}} = \weight{\complState{\rightState{\statevertex{i-1}{j}{p}}}}$.
        Thus, we obtain \[\weight{\state{\statevertex{i}{j}{p}}} = \tauLargestNonPartial - \weight{\rightState{\statevertex{i-1}{j}{p}}},\] yielding $\weight{\rightState{\statevertex{i-1}{j}{p}}} + \weight{\state{\statevertex{i}{j}{p}}} = \tauLargestNonPartial$.
        Observe that $\statevertex{i-1}{j}{p}$ has exactly $\weight{\state{\statevertex{i-1}{j}{p}}} + \weight{\rightState{\statevertex{i-1}{j}{p}}}$ selected neighbors.
        We consider two different cases depending on $\tau$.
        \begin{description}
            \item[$\tau$ is a finite set:] In this case, we know that $\weight{\state{\statevertex{i-1}{j}{p}}} + \weight{\rightState{\statevertex{i-1}{j}{p}}} \leq \max \tau = \tauLargestNonPartial$
                  because $S$ is a $(\sigma,\rho)$-set of the graph.
                  Additionally, we have that $\weight{\rightState{\statevertex{i-1}{j}{p}}} + \weight{\state{\statevertex{i}{j}{p}}} = \tauLargestNonPartial$.
                  This means that
                  \begin{equation}
                      \label{eq:intermediate_nonpartial_finite}
                      \weight{\state{\statevertex{i-1}{j}{p}}} \leq \weight{\state{\statevertex{i}{j}{p}}}.
                  \end{equation}
            \item[$\tau$ is a simple cofinite set:] We know that $\weight{\state{\statevertex{i-1}{j}{p}}} + \weight{\rightState{\statevertex{i-1}{j}{p}}} \geq \min \tau = \tauLargestNonPartial$.
                  From $\weight{\rightState{\statevertex{i-1}{j}{p}}} + \weight{\state{\statevertex{i}{j}{p}}} = \tauLargestNonPartial$ we then obtain
                  \begin{equation}
                      \label{eq:intermediate_nonpartial_simple_cofinite}
                      \weight{\state{\statevertex{i-1}{j}{p}}} \geq \weight{\state{\statevertex{i}{j}{p}}}.
                  \end{equation}
        \end{description}

        Next, we use the fact that all vectors of $A$ have the same $\sigma$-weight $\selWeight$, and the same weight.
        Hence, each vector of $A$ also has the same $\rho$-weight $\unselWeight$.
        Let $U^i_{j,\sigma}$ be the selected vertices of $U^i_j$, and $U^i_{j,\rho}$ be the unselected vertices of $U^i_j$.
        Similarly define $U^{i-1}_{j,\sigma}$ and $U^{i-1}_{j,\rho}$.
        By the relations we utilize, for each state vertex $\statevertex{i'}{j'}{p'}$ it holds that the number of selected neighbors of the state vertex in $L^{i'}_j$ is exactly $\weight{\state{\statevertex{i'}{j'}{p'}}}$, and that the number of selected neighbors of the state vertex in $\compl{L}^{i'}_{j'}$ is exactly $\weight{\rightState{\statevertex{i'}{j'}{p'}}}$.
        Hence, we must have
        \begin{equation}
            \label{eq:intermediate_nonpartial_weight_sum}
            \sum_{v \in U^{i-1}_{j,\rho}} |S \cap L^i_j \cap N(v)| = \sum_{v \in U^{i-1}_{j,\rho}}\weight{\state{v}} = \unselWeight{} = \sum_{v \in U^{i}_{j,\rho}}\weight{\state{v}},
        \end{equation} and $\sum_{v \in U^{i-1}_{j,\sigma}} \weight{\state{v}} = \sum_{v \in U^{i}_{j,\sigma}} \weight{\state{v}}$.
        Recall that $\rho$ is simple cofinite.
        \cref{eq:intermediate_nonpartial_simple_cofinite} holds for any $p \in \range{\blocksPerVertexGadget}$.
        Then, \cref{eq:intermediate_nonpartial_weight_sum} implies that $\weight{\statevertex{i-1}{j}{p}}$ is exactly the same as $\weight{\statevertex{i}{j}{p}}$ for all $v = \statevertex{i}{j}{p} \in U^i_{j, \rho}$.
        The argument for vertices of $U^i_{j,\sigma}$ is analogous, using \cref{eq:intermediate_nonpartial_finite} if $\sigma$ is finite and \cref{eq:intermediate_nonpartial_simple_cofinite} if $\sigma$ is simple cofinite.

        Now, let us define a variable assignment $\delta$ for the input instance.
        For each variable $x_j$, let $X_i$ be an arbitrary bag such that $x_j \in X_i$.
        We then set $\delta(x_j) = \state{\statevertex{i}{j}{1},\dots,\statevertex{i}{j}{p}}$.
        Observe that the choice of the bag does not matter beyond the fact that it must contain $x_j$, because the states of the state-vertices are consistent across these bags.
        Then, for each constraint $\constraint{\ell}$ it holds that $\delta$ fulfills the constraint.
        Otherwise, the first condition of the relation $\constraintRelation{\ell}$ would be violated.
    \end{claimproof}

    Now, we have all the tools ready to prove the correctness of the backwards direction.
    \begin{claim}
        \label{claim:non_partial_high_lvl_correctness_backwards}
        If the output instance $I'$ is a yes-instance, then input instance $I$ is a yes-instance.
    \end{claim}
    \begin{claimproof}
        Let $S$ be a vertex subset that has size at most $k$ and that fulfills all relations of $\mathcal{R}$.
        Then, by \cref{claim:non_partial_high_lvl_small_set_fulfills_relations}, $S$ actually fulfills all relations of $\mathcal{R}_0$.
        The proof then follows from \cref{claim:non_partial_high_lvl_backwards_1}.
    \end{claimproof}

    \begin{claim}
        The pathwidth of $I'$ is $\pw \cdot g + \Oh(1)$, and a corresponding decomposition can be computed in polynomial-time.
    \end{claim}
    \begin{claimproof}
        The following proof is almost identical to the proof we give for the nonpartial problem since the constructed graphs largely have the same structure.
        Recall that a path decomposition of the output instance is a path decomposition of the graph $G$, such that additionally, for each relation there exists a bag containing all vertices of the relation scope.
        Throughout the proof, we treat any set whose indices are out of bounds as the empty set.
        Also, recall that we set $\bag{0} = \bag{t+1} = \emptyset$ for convenience.

        Begin by copying the path decomposition provided with the input, denote the copy as $\outputBag{1},\dots,\outputBag{\bagCount}$.
        We begin by modifying the path decomposition by introducing two empty bags $\outputBag{0}$ and $\outputBag{\bagCount+1}$.
        In a first step, for any bag $\outputBag{i}$ and $\variable{j} \in \outputBag{i}$, we replace $\variable{j}$ with the state vertices $U^i_j$.

        Proceed as follows for any $i \in \range[1]{\bagCount+1}$.
        Order the variables of $\bag{i-1} \cup \bag{i}$ as \[\variable{\lambda_1^i},\dots,\variable{\lambda_r^i},\variable{\lambda_{r+1}^i},\dots,\variable{\lambda_q^i},\variable{\lambda_{q+1}^i},\dots,\variable{\lambda_z^i},\] such that for any $j \in \range{r}$, $\variable{\lambda_j^i}$ is in $\bag{i-1}$ but not in $\bag{i}$, for any $j \in \range[r+1]{q}$, we have that $\variable{\lambda_j^i}$ is in $\bag{i-1} \cap \bag{i}$, and for any $j \in \range[q+1]{z}$, we have that $\variable{\lambda_j^i}$ is in $\bag{i} \setminus \bag{i-1}$.
        Then, for any pair of adjacent bags $\outputBag{i-1}$ and $\outputBag{i}$, and any $j \in \range{z}$, we introduce initially empty bag $\outputBag{i}^j$.
        In the path decomposition, we put these new bags between $\outputBag{i-1}$ and $\outputBag{i}$ while keeping them in ascending order.
        That is, the bags $\outputBag{i-1},\outputBag{i}^1,\dots,\outputBag{i}^z,\outputBag{i}$ will form a subsequence of the path decomposition.
        The idea of these additional bags is that they should allow us to slowly transition from $\outputBag{i-1}$ to $\outputBag{i}$.
        So, we need to explain which vertices we put into them next.
        \begin{itemize}
            \item We first handle the vertices of variables in $\bag{i-1}$ which are not in $\bag{i}$.
                  For all $j \in \range{r}$, we add the vertices $\outputBag{i-1} \cup \compl{L}^{i-1}_{\lambda_j^i}$ to $\outputBag{i}^{j}$.
            \item Now, we want to handle the vertices of the variables in $\bag{i-1} \cap \bag{i}$.
                  For all $j \in \range[r+1]{q}$, we add the vertices $\bigcup_{j' \in \range[r+1]{j-1}} U^{i}_{\lambda_{j'}^i} \cup \bigcup_{j' \in \range[j+1]{q}} U^{i-1}_{\lambda_{j'}^{i}} \cup
                      U^i_{\lambda_j^i} \cup U^{i-1}_{\lambda_j^i} \cup \compl{L}^{i-1}_{\lambda_j^i} \cup L^i_{\lambda_j^i}$ to $\outputBag{i}^{j}$.
            \item We handle those vertices of the variables which are part of $\bag{i}$ but not part of $\bag{i-1}$.
                  For all $j \in \range[q+1]{z}$, we add the vertices $\outputBag{i} \cup L^i_{\lambda_j^i}$ to $\outputBag{i}^j$.
            \item If there is a constraint $C_\ell$ such that $\constraintBagMapping{C_\ell} = i$, then, we add all vertices of $\scp{\constraintRelation{\ell}}$ to the bag $\outputBag{i}^j$, for any $j \in \range{z}$, and also to the bag $\outputBag{i}$.
        \end{itemize}
        This concludes the description of the path decomposition.

        Now, we formally show that the sketched sequence of bags is actually a valid path decomposition.
        First, we show that any vertex appears in some bag, and that for any initial-relation, final-relation and consistency-relation there exists some bag containing the relation scope.
        Each vertex is in the scope of an initial-relation, a consistency-relation, or a final-relation.
        Observe that for any $\variable{j}$, if $\variable{j}$ is in $\bag{i} \setminus \bag{i-1}$, then, $\outputBag{i}^{j'}$ contains $U^i_j \cup L^i_j = \scp{\initialRelation{i}{j}}$, where $j'$ is so that $j = \lambda_{j'}^i$.
        Similarly, if $\variable{j}$ is both in $\bag{i}$ and $\bag{i-1}$, there exists bag $\outputBag{i}^{j'}$ that contains $U^i_j \cup L^i_j \cup U^{i-1}_j \cup \compl{L}^{i-1}_j = \scp{\consistencyRelation{i}{j}}$,
        where $j'$ is so that $j = \lambda_{j'}^i$.
        Finally, if $\variable{j}$ is in $\bag{i-1}$ but not in $\bag{i}$, then let $j'$ such that $j = \lambda^{i}_{j'}$.
        Then bag $\outputBag{i}^{j'}$ contains $U^{i-1}_j \cup \compl{L}^{i-1}_j = \scp{\finalRelation{i-1}{j}}$.
        This means that we indeed cover all initial-, consistency-, and final-relations, and thus also all vertices.
        Regarding the constraint relation $\constraintRelation{\ell}$, for any constraint $C_\ell$, observe that $\scp{\constraintRelation{\ell}}$ is contained in all bags between and including bag $\outputBag{\constraintBagMapping{C_\ell}}^1$ and $\outputBag{\constraintBagMapping{C_\ell}}$.

        It is also not difficult to see that each edge of $G$ appears in some bag.
        An edge that is within a set $U^i_j$, $L^i_j$, or $\compl{L}^i_j$ is covered as these sets appear in some bag.
        The only other types of edges are between $U^i_j$ and $L^i_j$, or between $U^i_j$ and $\compl{L}^i_j$ (observe that there are no edges between $L^i_j$ and $\compl{L}^i_j$ by the properties of the simple managers).
        The sets $U^i_j \cup L^i_j$ are always subject to an initial-relation or consistency-relation, and the sets $U^i_j \cup \compl{L}^i_j$ are always subject to a consistency-relation or a final-relation.
        Hence, we have already argued that these are covered.

        Finally, it remains to argue that for any vertex $v$, all nodes of bags containing $v$ form a connected subpath of the path decomposition.
        For the vertices of the sets $L^i_j$ or $\compl{L}^i_j$, it is also the case that either they appear in exactly one bag, or they are subject to some constraint relation, in which case they appear in exactly the bags in which the scope of the constraint relation appears, which form a connected subpath.

        So, all that remains is arguing about vertices in a set $U^i_j$, which is the set of state-vertices of a variable $\variable{j} \in \bag{i}$.
        Clearly $U^i_j$ is part of $\outputBag{i}$.
        We now show that there is some integer $D_\ell$ such that $U^i_j$ is in all bags between and including $\outputBag{i}^{D_\ell}$ and $\outputBag{i}$, but in no bag before $\outputBag{i}^{D_\ell}$.

        Consider that $U^i_j$ is subject to a constraint relation, in that case $U^i_j$ is explicitly added to all bags between and including $\outputBag{i}^1$ and $\outputBag{i}$.
        Moreover, $U^i_j$ is not part of bag $\outputBag{i-1}$.
        Therefore, in this case $D_\ell = 1$.

        Otherwise, if we have that $\variable{j}$ is not in $\bag{i-1}$, then there is some $q$ (namely, the lowest integer $q$ such that $\variable{\lambda_{q+1}^i}$ is in $\bag{i} \setminus \bag{i-1}$) such that $U^i_j$ is part of the subsequence $\outputBag{i}^{q+1},\dots,\outputBag{i}$, but $U^i_j$ is not part of any bag before $\outputBag{i}^{q+1}$.
        In this case $D_\ell  = q + 1$

        Otherwise, if we have that $\variable{j}$ is in $\bag{i-1}$, then $U^i_j$ is in all bags of the subsequence $\outputBag{i}^{j'},\dots,\outputBag{i}$, but not part of any bag before $\outputBag{i}^{j'}$, where $j'$ is such that $j = \lambda_{j'}^i$.
        In this case $D_\ell = j'$.

        Now, we show that there is some integer $D_r$ such that $U^i_j$ is in all bags between and including $\outputBag{i}$ and bag $\outputBag{i+1}^{D_r}$.

        If we have that $\variable{j}$ is not in $\bag{i+1}$, then there is some integer $r$ (namely, the highest integer $r$ such that $\variable{\lambda_{r}^{i+1}}$ is not in $\bag{i+1}$), such that $U^i_j$ is contained in the subsequence of bags $\outputBag{i},\dots,\outputBag{i+1}^{r}$, but not in any bag after $\outputBag{i+1}^r$.
        In this case $D_r = r$.

        If we have that $\variable{j}$ is in $\bag{i}$ and $\bag{i+1}$, then let $j'$ be such that $j = \lambda^{i+1}_{j'}$, and observe that $U^i_j$ is in the subsequence $\outputBag{i},\dots,\outputBag{i+1}^{j'}$, but $U^i_j$ is not contained in any bag after $\outputBag{i+1}^{j'}$.
        In this case $D_r = j'$.

        Overall, we have that $U^i_j$ is exactly in the bags $\outputBag{i}^{D_\ell},\dots,\outputBag{i},\dots,\outputBag{i+1}^{D_r}$.
        It is now not difficult to see that our approach results in a valid path decomposition of width $\pw \cdot \blocksPerVertexGadget + \Oh(1)$, because our initial step of replacing the variables with the state vertices blows up the pathwidth by a factor of $\blocksPerVertexGadget$, and all the other steps ensure that each bag contains only $\pw \cdot \blocksPerVertexGadget + \Oh(1)$ state vertices, as well as a constant number of further vertices, where we utilize the fact that the sizes of $L^i_j$ and $\compl{L}^i_j$ for any $i,j$ are constant.
        It is also clear that we can compute the sketched decomposition in polynomial-time.
    \end{claimproof}

    \subparagraph*{Finishing the proof.}

    Assume that there exists an algorithm that can solve good instances $\tilde I$ of \minParGenDomSetRel{} with arity at most $d$ in time $(|\allStatesNonPartial| - \varepsilon)^\pw \cdot |\tilde I|^{\Oh(1)}$ for some $\varepsilon > 0$ when provided with a path decomposition of width $\pw$ of $\tilde I$.

    We take an instance $I$ of 2-\problem{CSP} over an alphabet of size $|A|$ together with a path decomposition of its primal graph of width $\pw$ as input, and utilize the reduction this proof provides for this fixed $\varepsilon$.
    The reduction produces an equivalent good instance $I'$ with pathwidth $\pw \cdot g + \Oh{(1)}$ in polynomial-time.
    We apply the hypothetical algorithm on this instance.
    This whole procedure decides instance $I$ in time
    \begin{align*}
        |I|^{\Oh(1)} + (|\allStatesNonPartial| - \varepsilon)^{\pw \cdot g} \cdot |I'|^{\Oh(1)} & \leq (|\allStatesNonPartial| - \varepsilon)^{\pw \cdot g} \cdot |I|^{\Oh(1)} \\
                                                                                                & \leq (|A| - \varepsilon')^\pw \cdot |I|^{\Oh(1)},
    \end{align*}
    for a constant $\varepsilon' > 0$.
    So, the \pwseth{} would be false.
\end{proof}

\subsection{Finishing the Lower Bound Proof}

Now, we need to provide reductions from the intermediate problem to \minGenDomSet{}.
Here, we utilize that the reductions we provided for the partial problem also work for the nonpartial problem.

\mtheoremLowerBoundNonPartial*{}
\begin{proof}
    If $\sigma$ and $\rho$ are both finite, the result is proven in the thesis of Schepper \cite[Main Theorem III.3]{schepperFasterHigherEasier}.

    So, for the rest of the proof assume that at least one of $\sigma$ or $\rho$ is simple cofinite.
    Note that then $(\sigma,\rho)$ is not $m$-structured for any $m \geq 2$, so we are looking for a lower bound matching the base $\sigLargestNonPartial + \rhoLargestNonPartial + 2 = |\allStatesNonPartial|$.
    We can use the fact that \emph{all} reductions provided in \cref{sec:realizing_relations} do not change the input value $\ell$ of allowed violations.
    The reductions can handle if the input problem allows some number of violations, but they do not exploit the fact that they are for the partial problem to work properly.
    Hence, when the input value $\ell = 0$ they again output an instance that does not allow any violations.
    So, they also work for the nonpartial problem variants.
    We again utilize two different reduction chains depending on whether $\rho$ is simple cofinite or not.

    \begin{description}
        \item[$\rho$ is simple cofinite:] We utilize the intermediate lower bound of \cref{thm:minimization_non_partial_high_lvl} which works for good instances of bounded arity.
              Then, \cref{thm:partial_replace_superset_by_hw_geq_1} is used to replace all relations with $\hwGeqOneRelation$-relations via a pathwidth-preserving reduction.
              Finally, \cref{thm:partial_rho_cofinite_replace_relations} provides a reduction from the resulting instance to \minGenDomSet{}.
        \item[$\rho$ is finite:] This implies that $\sigma$ is simple cofinite since we assume at least one of $\sigma, \rho$ to be simple cofinite. We utilize the intermediate lower bound of \cref{thm:intermediate_lower_bound_sigma_cofinite_rho_finite}.
              Then, \cref{thm:replacing_arbitrary_with_hw_1} (or \cite[Corollary 8.8]{fockeTightComplexityBoundsLowerBound}) provides a reduction to the problem in which all relations are $\hwRelation$-relations.
              Finally, \cref{thm:rho_finite_partial_realize_relations_reduction} provides a reduction to \minGenDomSet{}.
    \end{description}
    All used reductions are pathwidth-preserving, and also the arity of each used instance of an intermediate problem is bounded by a constant.
    Hence, the lower bound from the intermediate problems transfer to \minGenDomSet{}.
\end{proof} 
\newpage
\phantomsection
\addcontentsline{toc}{section}{References}
\bibliography{bib}

\end{document}